%%%%%%%%%%%%%%%%%%%%%%%%%%%%%%%%%%%%%%%%%%%%%
\documentclass[12pt]{article}
\usepackage{a4wide}
\usepackage{amsthm}

\usepackage[english]{babel}
\usepackage{latexsym}
\usepackage{amssymb}
\usepackage{amsmath}
\usepackage[noadjust]{cite}    % does not work with elsarticle
\usepackage{ifthen}
\usepackage{xspace}
\usepackage{paralist}
\usepackage{multirow}
\usepackage{shuffle}
\usepackage{environ}
\usepackage[colorlinks,citecolor=blue]{hyperref}

\usepackage[usenames,dvipsnames]{xcolor}
\usepackage{tikz}
% tikz libraries
\usetikzlibrary{%
  arrows,%
  positioning,%
  decorations.pathmorphing,%
  decorations.pathreplacing,%
  automata,%
}

%%%%%%%%%%%%%%%%%%%%%%%%%%%%%%%%%%%%%%%%%%%%%%%%%%%%%%%%%%%%%%%%%%%%%%%%%%%%%%%%
% Theorem environments (not for llncs style)
\theoremstyle{plain}
\newtheorem{theorem}{Theorem}[]
\newtheorem{proposition}[theorem]{Proposition}
\newtheorem{lemma}[theorem]{Lemma}
\newtheorem{corollary}[theorem]{Corollary}

\theoremstyle{definition}
\newtheorem{definition}[theorem]{Definition}
\newtheorem{example}[theorem]{Example}

\theoremstyle{remark}
\newtheorem{remark}[theorem]{Remark}

%%%%%%%%%%%%%%%%%%%%%%%%%%%%%%%%%%%%%%%%%%%%%%%%%%%%%%%%%%%%%%%%%%%%%%%
% Lazy

%%%%     quick vertical skip commands  %%%%
\newcommand\pssi{\par\smallskip\indent}
\newcommand\pmsi{\par\medskip\indent}

\newcommand\pssn{\par\smallskip\noindent}
\newcommand\pmsn{\par\medskip\noindent}

\newcommand\pnsi{\par\indent}
\newcommand\pnsn{\par\noindent}

% teletype

% caligraphical
\newcommand{\cA}{\ensuremath{\mathcal{A}}\xspace}
\newcommand{\cB}{\ensuremath{\mathcal{B}}\xspace}

\newcommand{\cH}{\ensuremath{\mathcal{H}}\xspace}

\newcommand{\cP}{\ensuremath{\mathcal{P}}\xspace}

\newcommand{\cS}{\ensuremath{\mathcal{S}}\xspace}
\newcommand{\cT}{\ensuremath{\mathcal{T}}\xspace}

\newcommand{\cW}{\ensuremath{\mathcal{W}}\xspace}

% DNA bases
\newcommand{\bA}{\ensuremath{\mathtt{A}}\xspace}
\newcommand{\bC}{\ensuremath{\mathtt{C}}\xspace}
\newcommand{\bG}{\ensuremath{\mathtt{G}}\xspace}
\newcommand{\bT}{\ensuremath{\mathtt{T}}\xspace}

% greek letters
\renewcommand{\phi}{\varphi}

\newcommand{\thetai}{\ensuremath{\theta^{-1}}\xspace}

\let\mathttheta\theta
\renewcommand{\theta}{\texorpdfstring{\ensuremath{\mathttheta}}{theta}\xspace}

% sets
\newcommand{\sse}{\subseteq}

\newcommand{\es}{\emptyset}
\newcommand{\sm}{\setminus}

% abbreviations
\newcommand{\ie}{i.\,e.,\xspace}
\newcommand{\eg}{e.\,g.,\xspace}

\newcommand{\resp}{resp.,\ }

% words with special wordbreaks
\newcommand{\am}{\mbox{(anti-)}\allowbreak{}morphic\xspace}

%%%%%%%%%%%%%%%%%%%%%%%%%%%%%%%%%%%%%%%%%%%%%%%%%%%%%%%%%%%%%%%%%%%%%%%%%%%%%%%%
% Math

% sets { ... | ... }
\newcommand{\sett}[2]
	{\left\{#1\mathrel{\left|\vphantom{#1}\vphantom{#2}\right.}#2\right\}}
\newcommand{\set}[1]{\left\{\mathinner{#1}\right\}}

% brackets etc
\newcommand{\abs}[1]{\left|\mathinner{#1}\right|}

\newcommand{\ceil}[1]{\left\lceil\mathinner{#1} \right\rceil}

% operator symbols
\newcommand{\sdel}{\rightsquigarrow}
\newcommand{\sins}{\shuffle}
\newcommand{\uar}{\uparrow}
\newcommand{\dar}{\downarrow}

% numbers
\newcommand{\N}{\mathbb{N}}
\newcommand{\Z}{\mathbb{Z}}

%\newcommand{\C}{\mathbb{C}}

% oh and Oh
\newcommand{\Oh}{\mathcal{O}}
 % {\mathcal{o}}

% automata / transducer notation
\newcommand{\xra}[1]{\mathbin{\raisebox{-1pt}{$\xrightarrow{#1}$}}}
\newcommand{\xras}[1]{\mathbin{\raisebox{-1pt}{$\xrightarrow{#1}\!\!{}^*\,$}}}

%%%%%%%%%%%%%%%%%%%%%%%%%%%%%%%%%%%%%%%%%%%%%%%%%%%%%%%%%%%%%%%%%%%%%%%%%%%%%%%%
% Languages

% empty word
\newcommand{\e}{\varepsilon}

% position in a word
\newcommand\pos[2][]{_{[\ifthenelse{ \equal{#1}{} }{}{#1;}#2]}}

% prefixes / suffixes and other language OPERATORS
\newcommand{\pref}{\mathrm{Pref}}

\newcommand{\bfop}{\Phi}   % bond free operator
\newcommand{\sbfop}{\Phi^{\mathbf{s}}}   % strict bond free operator
\newcommand{\anybfop}{\Phi^{?}}
\newcommand{\lop}{\mathrm{Op}}  % any language operator

% complexity / language classes
\newcommand{\PSPACE}{\ensuremath{\mathrm{PSPACE}}\xspace}

% transducers (Stavros' style)  and  Automata
\newcommand{\ree}{\bar{e}}  % regular expression e
\newcommand{\rea}{\bar{a}}  % regular expression a
  % regular expression c
\newcommand{\trt}{\ensuremath{\mathbf{t}}\xspace}   % transducer t
\newcommand{\trs}{\ensuremath{\mathbf{s}}\xspace}   % transducer s
\newcommand{\trta}{\ensuremath{\mathbf{t}^{\rea}}\xspace}   % transducer t^{\rea}
\newcommand{\trti}{\ensuremath{\mathbf{t}^{-1}}\xspace}   % transducer t inverse

\newcommand{\auta}{\ensuremath{\mathbf{a}}\xspace}   % automaton a
   % automaton b
\newcommand\ML{\ensuremath{\auta_L}\xspace}
\newcommand\MtL{\ensuremath{\auta_{\theta(L)}}\xspace}

   % code property defined by t

% permutations
\newcommand{\id}{\mathrm{id}}
\newcommand{\idk}{{\mathrm{id}_k}}

\newcommand{\dna}{\ensuremath{\delta}\xspace}

% alphabets
\newcommand{\A}{\ensuremath{A}\xspace}
\newcommand{\D}{\ensuremath{\Delta}\xspace}        % DNA alphabet
\newcommand{\As}{\ensuremath{A^*}\xspace}
\newcommand{\Ap}{\ensuremath{A^+}\xspace}
\newcommand{\Ds}{\ensuremath{\D^*}\xspace}
\newcommand{\Dp}{\ensuremath{\D^+}\xspace}
\newcommand{\Ak}{\ensuremath{A_k}\xspace}
\newcommand{\Aks}{\ensuremath{A_k^*}\xspace}
\newcommand{\Akp}{\ensuremath{A_k^+}\xspace}
\newcommand{\Al}{\ensuremath{A_\ell}\xspace}
\newcommand{\Als}{\ensuremath{A_\ell^*}\xspace}
\newcommand{\Alp}{\ensuremath{A_\ell^+}\xspace}
\newcommand{\Abin}{\ensuremath{A_2}\xspace}

\newcommand{\Abinp}{\ensuremath{A_2^+}\xspace}
\newcommand{\Abinm}{\ensuremath{A_2^m}\xspace}

% code properties
\newcommand{\Stt}[1][\trt]{\ensuremath{\cS_{\theta,#1}}\xspace}
\newcommand{\Wtt}[1][\trt]{\ensuremath{\cW_{\theta,#1}}\xspace}
\newcommand{\Wttt}[2]{\ensuremath{\cW_{#1,#2}}\xspace}
\newcommand{\SP}{\mbox{\texorpdfstring{\cS}{S}-property}\xspace}
\newcommand{\SPs}{\mbox{\texorpdfstring{\cS}{S}-properties}\xspace}
\newcommand{\WP}{\mbox{\texorpdfstring{\cW}{W}-property}\xspace}
\newcommand{\WPs}{\mbox{\texorpdfstring{\cW}{W}-properties}\xspace}
\newcommand\bfp{\cB}     % bond free property
\newcommand\sbfp{\cB^{\mathbf{s}}}   % string bond free property
\newcommand\anybfp{\cB^{?}}

\newcommand\strict{{\mathbf{s}}}
\newcommand\weak{{\mathbf{w}}}

\newcommand\wildx{{\mathbf{x}}}

%%%%%%%%%%%%%%%%%%%%%%%%%%%%%%%%%%%%%%%%%%%%%%%%%%%%%%%%%%%%%%%%%%%%%%%%%%%%%%%%
% Styles

% tikz transducer setup
\newenvironment{transducer}[1][]
	{\begin{center}
	\begin{tikzpicture}[font=\footnotesize,shorten >=1pt,node distance=3.5cm,
	on grid,>=stealth',initial text=,
	every node/.style={align=center},
	every state/.append style={inner sep=1pt},
	every path/.style={->,bend angle=20},
	#1]}
	{\end{tikzpicture}\end{center}}

% breaking row in a table
\newcommand{\breakingrow}[2][c]{%
  \begin{tabular}[#1]{@{}c@{}}#2\end{tabular}}

\newcounter{propcounter}
\setcounter{propcounter}{0}
\renewcommand{\thepropcounter}{(\Alph{propcounter})}
\newlength{\propwidth}
\newlength{\proplabelw}

\newenvironment{dnaprop}
	{\setlength\propwidth{\textwidth}
	\addtolength\propwidth{-3em}
	\par\addvspace{1ex}	
	\noindent\hfill
	\begin{minipage}{\propwidth}
		\refstepcounter{propcounter}
		\settowidth\proplabelw{\thepropcounter\ }
		\hspace{-\proplabelw}\thepropcounter\ \unskip
	}
	{\end{minipage}\par\addvspace{1ex}}

%%%%%%%%%%%%%%%%%%%%%%%%%%%%%%%%%%%%%%%%%%%%%%%%%%%%%%%%%%%%%%%%%%%%%%%%%%%%%%%%
%%% Journal version
\NewEnviron{journal}{\BODY}[]
\NewEnviron{conference}{}
\newcommand{\jcmath}[1]{\[#1\]}

%%%%%%%%%%%%%%%%%%%%%%%%%%%%%%%%%%%%%%%%%%%%%%%%%%%%%%%%%%%%%%%%%%%%%%%%%%%%%%%%
%%%%%%%%%%%%%%%%%%%%%%%%%%%%%%%%%%%%%%%%%%%%%%%%%%%%%%%%%%%%%%%%%%%%%%%%%%%%%%%%
%%%%%%%%%%%%%%%%%%%%%%%%%%%%%%%%%%%%%%%%%%%%%%%%%%%%%%%%%%%%%%%%%%%%%%%%%%%%%%%%
\begin{document}

\title{Transducer Descriptions of DNA Code Properties and Undecidability of Antimorphic Problems}

\author{Lila Kari$^{1}$ \and Stavros Konstantinidis$^{2}$ \and Steffen Kopecki$^{1,2}$}
\date{}

\maketitle

%%%%%%%%%%%%%%%%%%%%%%%%%%%%%%%%%%%%%%%%%%%%%%%%%%%%%%%%%%%%%%%%%%%%%%%%%%%%%%%%
\begin{journal}
\begin{center}
\small
$^{1}$ The University of Western Ontario, London, Ontario, Canada\\
\texttt{lila@csd.uwo.ca}, \texttt{steffen@csd.uwo.ca}
\\
$^{2}$ Saint Mary's University, Halifax, Nova Scotia, Canada\\
\texttt{s.konstantinidis@smu.ca}
\end{center}
\bigskip
\end{journal}

\begin{abstract}
This work concerns formal descriptions of \emph{DNA} code properties, and builds on previous
work on transducer descriptions of \emph{classic} code properties and
on trajectory descriptions of \emph{DNA} code properties.
This line of research %is important as it
allows us to give a property as input to an
algorithm, in addition to any regular language, which can then answer questions about the language
and the property. Here we define DNA code properties via transducers and show that this method is
strictly more expressive than that of trajectories, without sacrificing the efficiency of deciding
the satisfaction question. We also show that the maximality question can be undecidable.
Our undecidability results hold not only for the fixed DNA involution but also for
any fixed antimorphic permutation. Moreover, we also show the undecidability of the
antimorphic version of the Post Corresponding Problem, for any fixed antimorphic permutation.
\end{abstract}

%%%%%%%%%%%%%%%%%%%%%%%%%%%%%%%%%%%%%%%%%%%%%%%%%%%%%%%%%%%%%%%%%%%%%%%%%%%%%%%%
%%%%%%%%%%%%%%%%%%%%%%%%%%%%%%%%%%%%%%%%%%%%%%%%%%%%%%%%%%%%%%%%%%%%%%%%%%%%%%%%
\section{Introduction}\label{sec:intro}
The study of \textit{formal} methods for describing independent language properties (widely known as
code properties) provides tools that allow one to give a property as input to an algorithm and answer
questions about this property. Examples of such properties include
\textit{classic} ones \cite{Shyr:Thierrin:relations,Shyr:book,Jurg:Konst:handbook,BePeRe:2009}
like prefix codes, bifix codes, and
various error-detecting languages, as well as DNA code properties
\cite{Baum:1996,kari2002codes,HuKaKo:2003,KKLW:2003,MaFe:2004,JoMaCh:2005,KaKoSo:2005,KaKoSo:2005a,JoKaMa:2008,GeMa:2012,FaWaHu:2014}
like $\theta$-nonoverlapping and $\theta$-compliant languages.
A formal description method should be expressive enough to allow one to describe many desirable properties. Examples of
formal methods for describing \textit{classic} code properties are the implicational conditions method of \cite{Jurg:1999},
the trajectories method of \cite{Dom:2004}, and the transducer methods of \cite{DudKon:2012}.  The latter two
have been implemented to some extent in the Python package FAdo \cite{FAdo}.
A formal method for describing DNA code properties is
the method of trajectory DNA code properties \cite{KaKoSo:2005,Dom:2007}.

Typical questions about properties  are the following:
\begin{description}
\item{\em Satisfaction problem:}
given the description of a property and the description of a regular language, decide whether the language satisfies the property.
\item{\em Maximality problem:}
given the description of a property and the description of a regular language that satisfies the
property, decide whether the language is maximal with respect to the given property.
\item{\em Construction problem:}
given the description of a property and a positive integer $n$, find  a language of $n$ words (if possible)
satisfying the given property.
\end{description}

In the above problems  regular languages are described via (non-deterministic) finite automata (NFA).
Depending on the context, properties are described via trajectory regular expressions
or transducer expressions.
The satisfaction problem is the most basic one and can be answered usually efficiently in polynomial time.
The maximality problem as stated above can be decidable, in which case it is normally PSPACE-hard.
For existing transducer properties, both problems can be answered using the online (formal) language server LaSer \cite{Laser}, which relies on FAdo.
\begin{journal}
LaSer allows users to enter the desired property and language, and returns either the answer in real time
(online mode), or it returns a Python program that computes the desired answer if executed at the user's site
(program generation mode).
\end{journal}
For the construction problem a simple statistical algorithm is included
in FAdo, but we think that this problem is far from being well-understood.
\pnsi
The \emph{general objective} of this research is to develop methods for formally describing
DNA code properties that would allow
one to  express various combinations of such properties and be able to get answers to questions about
these properties.
\begin{journal}
While the satisfaction and construction questions are important from both the theoretical
and practical viewpoints, the maximality question is at least of theoretical interest and a classic problem
in the theory of codes.
\end{journal}
The contributions of this work are as follows:
\begin{enumerate}
\item
The definition of a  new simple formal method for describing many DNA code properties,
called $\theta$-transducer properties, some of which cannot be described by the existing
transducer and trajectory methods for classic code properties; see Sect.~\ref{sec:theta-transducers}.
\begin{journal}
These methods are closed under intersection of code properties.
This  means that if two properties
can be described within the method then also the combined property can be described within
the method. This outcome is important as in practice it is desirable that languages satisfy
more than one property.
\end{journal}
\item
The demonstration that the new method of transducer DNA code properties is properly more expressive than the method of
trajectories; see Sect.~\ref{sec:expressiveness}.
\item
The demonstration that the maximality problem can be decidable for some transducer DNA
code properties but undecidable for some others; see Sect.~\ref{sec:dec-problems}.
\item
The demonstration that some  classic undecidable problems (like PCP) remain undecidable when rephrased in terms of  {\em any fixed} \am permutation $\theta$
of the alphabet, with  the case $\theta = \id$ corresponding to these classic problems, where $\id$ is the (morphic) identity; see Sect.~\ref{sec:PCP}.
\begin{journal}
This contribution is mathematically relevant to the undecidability of the maximality
problem for DNA-related properties, so it is natural to include it with the above contributions  in one
publication.
\end{journal}
\begin{journal}
\item
The presentation of a natural hierarchy of DNA properties which are all \theta-transducer properties; see Section~\ref{sec:dna-properties}.
This hierarchy generalizes the concept of bond-free properties in \cite{kari2002codes,HuKaKo:2003,KKLW:2003}.
\end{journal}
\end{enumerate}

Even though, our main motivation is the description of DNA-related properties, we follow the more general approach which considers properties described by transducers involving a fixed \am permutation \theta; again, the classical transducer properties are obtained by letting $\theta = \id$.
\begin{journal}
In the setting of DNA properties, we consider the alphabet $\Delta = \set{\bA,\bC,\bG,\bT}$ and $\theta=\dna$ being the involution (\ie antimorphic permutation with $\dna^2=\id$) given by $\dna(\bA) = \bT$, $\dna(\bT) = \bA$, $\dna(\bC) = \bG$, and $\dna(\bG) = \bC$.
As it turns out, in the case when $\theta$ is morphic all questions that we consider in this paper can be answered analogous to the solutions for the classical case where $\theta=\id$.
Therefore, we focus on the transducer properties involving antimorphic permutations in this paper.
\end{journal}

%%%%%%%%%%%%%%%%%%%%%%%%%%%%%%%%%%%%%%%%%%%%%%%%%%%%%%%%%%%%%%%%%%%%%%%%%%%%%%%%
%%%%%%%%%%%%%%%%%%%%%%%%%%%%%%%%%%%%%%%%%%%%%%%%%%%%%%%%%%%%%%%%%%%%%%%%%%%%%%%%
\section{Basic Notions and Background Information}\label{sec:notion}

In this section we lay down our notation for formal languages, \am permutations, transducers, and language properties.
We assume the reader to be familiar with the fundamental
concepts of language theory; see \eg \cite{HopcroftUllman,FLhandbook}.
Then, in Sect.~\ref{secTP} we recall the method of  transducers for describing classic code properties, and in Sect.~\ref{sec:trajectories} we recall the method of trajectories for describing DNA-related properties.

%%%%%%%%%%%%%%%%%%%%%%%%%%%%%%%%%%%%%%%%%%%%%%%%%%%%%%%%%%%%%%%%%%%%%%%%%%%%%%%%
\subsection{Formal Languages and (Anti-)morphic Permutations}
\begin{journal}
An \emph{alphabet} \A is a finite set of \emph{letters};
\As is the set of all words or strings over \A;
$\e$~denotes the \emph{empty word};
and $\Ap = \As\sm \set\e$.
A \emph{language} $L$ over $A$ is a subset $L\sse \As$;
the \emph{complement $L^c$} of $L$ is the language $\As\sm L$.
For an integer $m\in\N$ we let $\A^{\leq m}$ denote the set of words whose length is at most $m$; \ie $\A^{\leq m} = \bigcup_{i\leq m}\A^i$.
The \emph{DNA alphabet} is  $\Delta=\{\bA, \bC, \bG, \bT\}$.
Often it is convenient to consider the \emph{generic alphabet $\Ak = \set{0,1,\ldots,k-1}$ of size $k$} rather than a general alphabet;
note that $\A_2\sse \A_3\sse \A_4\sse \cdots{}$.
Throughout this paper we only consider alphabets with at least two letters because our investigations would become trivial over unary alphabets.
\end{journal}

\begin{conference}
For an alphabet $\A$ and a language $L$ over $\A$ we have the notation:
$\Ap= \As\sm \set\e$, where $\e$ is the empty word; and $L^c = \As\sm L$.
For an integer $k\ge 2$ we define the \emph{generic alphabet $\Ak = \set{0,1,\ldots,k-1}$ of size $k$}.
Throughout this paper we only consider alphabets with at least two letters because our investigations would become trivial over unary alphabets.
\end{conference}

Let $w\in \As$ be a word.
Unless confusion arises, by $w$ we also denote the singleton language~$\set w$, \eg $L\cup w$
means $L\cup\{w\}$.
If $w = xyz$ for some $x,y,z\in \As$, then $x$, $y$, and $z$ are called \emph{prefix}, \emph{infix} (or \emph{factor}), and \emph{suffix} of $w$, respectively.
For a language $L\sse \As$, the set $\pref(L) = \sett{x\in \As}{\exists y\in \As\colon xy\in L}$ denotes the language containing all prefixes of words in $L$.
If $w = a_1a_2\cdots a_n$ for letters $a_1,a_2,\ldots, a_n\in \A$, then $\abs w = n$ is the \emph{length of $w$};
for $b\in \A$, $\abs{w}_b = \abs{\sett{i}{a_i = b,1\le i\le n }}$ is the tally of $b$ occurring in $w$;
the $i$-th letter of $w$ is $w\pos{i} = a_i$ for $1\le i\le n$;
the infix of $w$ from the $i$-th letter to the $j$-th letter is $w\pos[i]{j} = a_ia_{i+1}\cdots a_{j}$ for $1\le i \le j \le n$;
and the \emph{reverse} of $w$ is $w^R = a_na_{n-1}\cdots a_1$.

Consider a generic alphabet $\Ak$ with $k\ge 2$.
The \emph{identity} function on $\Ak$ is denoted by $\idk$; when the alphabet is clear from the context, the index $k$ is omitted.
For a \emph{permutation} (or bijection) $\theta\colon\Ak\to \Ak$,
\begin{journal}
the permutation $\theta^{-1}$ is the \emph{inverse} of $\theta$ as usual; \ie $\theta\circ\theta^{-1} = \idk$ (``$\circ$'' is the composition of two functions $(g\circ h)(x) = g(h(x))$ for all $x$).
For $i\in \Z$,
\end{journal}
\begin{conference}
 and for $i\in\Z$,
\end{conference}
the permutation $\theta^i$ is the \emph{$i$-fold composition} of $\theta$; \ie $\theta^0 = \idk$, $\theta^{i} = \theta\circ\theta^{i-1}$, and $\theta^{-i} = (\theta^i)^{-1} = (\theta^{-1})^i$ for $i > 0$.
\begin{conference}
An \emph{involution} $\theta$ is a permutation such that $\theta = \theta^{-1}$.
\end{conference}
\begin{journal}
There exists a number $n$, called the \emph{order} of $\theta$, such that $\theta^n = \id_k$.
An \emph{involution} $\theta$ is a permutation of order $2$; \ie $\theta = \theta^{-1}$.
\end{journal}

A permutation $\theta$ over $\Ak$ can naturally be extended to operate on words in $\Aks$ as%
\begin{inparaenum}[(a)]
\item
\emph{morphic permutation} $\theta(uv) = \theta(u)\theta(v)$, or
\item
\emph{antimorphic permutation} $\theta(uv) = \theta(v)\theta(u)$, for $u,v\in\Aks$.
\end{inparaenum}
As before, the inverse $\theta^{-1}$ of the \am permutation $\theta$ over $\Aks$ is the \am extension of the permutation $\theta^{-1}$ over $\Aks$.
\begin{journal}
Note that the composition of two antimorphic or two morphic permutations yields a morphic permutation, whereas the composition of a morphic and an antimorphic permutation yields an antimorphic permutation.
Therefore, if $\theta$ is an antimorphic permutation, then $\theta^i$ is morphic if and only if $i$ is even.
\end{journal}
The identity $\idk$ always denotes the morphic extension of $\idk$ while the antimorphic extension of $\idk$, called the \emph{mirror image} or reverse, is usually denoted by the exponent $^R$.

\begin{example}
The \emph{DNA involution}, denoted as $\dna$, is an antimorphic involution on
$\D=\{\bA,\bC,\bG,\bT\}$ such that
$\dna(\bA)=\bT$ and $\dna(\bC)=\bG$, which implies $\dna(\bT)=\bA$ and $\dna(\bG)=\bC$.
\end{example}

\begin{journal}
A \emph{language operator} is any mapping  $\lop\colon2^{\As}\to 2^{\As}$.
The prefix function $\pref$ defined earlier is an example
of a language operator.  A transducer (see Sect.~\ref{secTP}) can be viewed
as a language operator.
Any \am permutation, as well as any other function, $h\colon \As \to \As$ over words is extended
to a language operator such that
%\begin{conference}
%$h(L) = \cup_{x\in L}\{h(x)\}$, for all $L\sse \As$.
%\end{conference}
%\begin{journal}
for $L\sse \As$
\[
	h(L) = \cup_{x\in L}\{h(x)\}.
\]

If $\lop_1$ and $\lop_2$ are language operators, then $(\lop_1\lor\lop_2)$
is the language operator such that $(\lop_1\lor\lop_2)(X)=\lop_1(X)\cup\lop_2(X)$, for all languages $X$.
\end{journal}

%%%%%%%%%%%%%%%%%%%%%%%%%%%%%%%%%%%%%%%%%%%%
\subsection{Describing Classic Code Properties by Transducers}\label{secTP}

A \emph{(language) property} \cP is any set of languages.
A language $L$ \emph{satisfies} \cP, or has \cP, if $L\in\cP$.
Here by a property \cP we mean an \emph{($n$-)independence} in the sense of \cite{Jurg:Konst:handbook}:
there exists $n\in\N\cup\set{\aleph_0}$
such that a language $L$ satisfies \cP if and only if all nonempty subsets $L'\sse L$ of cardinality less than $n$ satisfy \cP.
A language $L$ satisfying \cP is \emph{maximal} (with respect to \cP) if for every word $w\in L^c$ we have $L\cup w$ does not satisfy \cP---note that, for any  independence \cP, every language in \cP is a subset of a maximal language in \cP \cite{Jurg:Konst:handbook}.
\begin{journal}
To our knowledge all code related properties in the literature, including DNA code properties, are independence properties. 
\end{journal}
As we shall see further below the focus of this work is on 3-independence
properties that can also be viewed as independent with respect to a binary relation in the
sense of \cite{Shyr:Thierrin:relations}.

A \emph{transducer} \trt is a non-deterministic finite state automaton with output; see \eg \cite{Be:1979,Yu:handbook}.
\begin{journal}
In general, a transducer can have an output alphabet $B$ which is different from its input alphabet $\A$; thus, defining a relation over $\A^*\times B^*$.
In this paper however, we only consider transducers where the input alphabet coincides with the output alphabet, $\A = B$, which leads to the following simplified definition:
\end{journal}
\begin{conference}
Here we only consider transducers whose input and output alphabets are equal:
\end{conference}
a transducer is a quintuple $\trt = (Q,\A,E,I,F)$, where $\A$ is the input and output alphabet,
$Q$ is a finite set of states, $E$ is a set of directed edges between states from $Q$ which are labeled by word pairs $(u,v) \in \As\times \As$,
$I$ is a set of initial states, and $F$ a set of final states.
\begin{journal}
For an edge label $(u,v)$ the word $u$ is called {\em input}, while the word $v$ is called {\em output}.
The transducer $\trt$ \emph{realizes} the set of all pairs $(x,y)\in \As\times \As$ such that $x$ is formed by concatenating the inputs, and $y$ is formed by concatenating the outputs of the labels in
a path of $\trt$ from the initial to the final states.
\end{journal}
If \trt realizes $(x,y)$ then we write
$y\in\trt(x)$. We say that the set $\trt(x)$ contains all possible outputs of $\trt$ on input $x$.
\begin{journal}
It is well known that for two regular languages $R_1,R_2$ there exists a transducer $\trt$ that realizes the relation $R_1 \times R_2$; \ie $\trt$ realizes $(x,y)$ if and only if $x\in R_1$ and $y\in R_2$.
\end{journal}
The transducer $\trti$ is the inverse of $\trt$; that is, $x\in\trti(y)$ if and only if $y\in\trt(x)$ for all words $x,y$.
\begin{journal}
Note that $\trti$ is obtained from $\trt$ by simply swapping the input with the output word on each edge in \trt.
For a language $L$ we naturally extend our notation such that
\begin{align*}
	\trt(L) &= \cup_{x\in L}\trt(x). %\\
	%\trti(L) &= \sett{x\in\Sigma^*}{\exists y\in L\colon x\in\trti(y)}.
\end{align*}
Thus, a transducer can be viewed as a language operator.
\pnsi
\end{journal}
Let \theta be an \am permutation and \trt be a transducer which are both defined over the same alphabet \A.
The transducer \trt is called \emph{\theta-input-preserving} if for all $w\in\Ap$ we have $\theta(w) \in \trt(w)$; \trt is called  \emph{\theta-input-altering} if for all $w\in \Ap$ we have $\theta(w)\notin\trt(w)$.
We use the simpler terms \emph{input-altering} and
\emph{input-preserving} \trt, respectively, when
$\theta=\id$.
Note that $\theta(w) \in \trt(w)$ is equivalent to $w\in \thetai(\trt(w))$ as well as $\trti(\theta(w)) \ni w$.

%Two types of transducer properties are defined in~\cite{DudKon:2012}. We repeat these next.
%
%
\begin{definition}[\cite{DudKon:2012}]\label{def:existing}
 An input-altering transducer $\trt$ \emph{describes the property} that consists of all languages $L$
such that
\begin{equation}\label{eqIATP}
\trt(L)\cap L=\es.
\end{equation}
An input-preserving transducer $\trt$ \emph{describes the property} that consists of all languages $L$
such that
\begin{equation}\label{eqIPTP}
w\notin \trt(L\sm w),\>\>\hbox{ for all $w\in L$}.
\end{equation}
A property is called an input-altering (resp. input-preserving) transducer property, if
it is described by an input-altering (resp. input-preserving) transducer.
\end{definition}

Note that every input-altering transducer property is also an input-preserving transducer property.
Input-altering transducers can be used to describe properties like prefix codes, bifix codes,
and hypercodes. Input-preserving transducers are intended for error-detecting properties,
where in fact the transducer plays the role of the communication channel.
\begin{journal}
Figure~\ref{trans:classic} shows a couple of examples.

\begin{figure}[ht!]
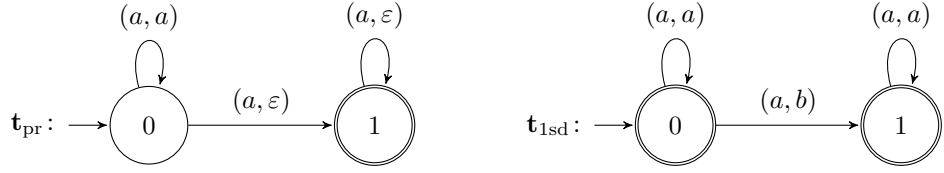

\begin{transducer}[node distance=3cm]
	\node [state,initial] (q0) {$0$};
	\node [node distance=1cm,left=of q0,anchor=east] {$\trt_{\rm pr}\colon$};
	\node [state,accepting,right of=q0] (q1) {$1$};
	\node [state,initial,accepting,right of=q1,node distance=4cm] (q2) {$0$};
	\node [node distance=1cm,left=of q2,anchor=east] {$\trt_{\rm 1sd}\colon$};
	\node [state,accepting,right of=q2] (q3) {$1$};
	%\node [node distance=1.5cm,left of=q0] {$\trt_m\colon{}$};
	\path (q0) edge [loop above] node [above] {$(a,a)$} ()
		(q0) edge node [above] {$(a,\e)$} (q1)
		(q1) edge [loop above] node [above] {$(a,\e)$} ()	
		(q2) edge node [above] {$(a,b)$} (q3)
		(q2) edge [loop above] node [above] {$(a,a)$} ()
		(q3) edge [loop above] node [above] {$(a,a)$} ();
\end{transducer}
\caption{The left transducer is input-altering and describes the prefix codes: on input $x$ it
outputs any proper prefix of $x$. The right transducer is input-preserving and describes the
1-substitution error-detecting languages: on input $x$ it outputs either $x$ or any word differing from
$x$ in exactly one position.
\underline{Note:} in this and the following transducer figures, an arrow
with label $(a,a)$ represents a set of edges with labels $(a,a)$ for all $a\in\A$; and similarly for an arrow
with label  $(a,\e)$. An arrow with label $(a,b)$ represents
a set of edges with labels $(a,b)$ for all $a,b\in\A$ with $a\not=b$.}
\label{trans:classic}
\end{figure}
\end{journal}
Many  input-altering transducer properties can be described in a simpler manner
by \emph{trajectory regular expressions} \cite{Dom:2004,DudKon:2012}, that is, regular expressions over \{0, 1\}. For example, the
expression $0^*1^*$ describes prefix codes and the expression $1^*0^*1^*$
describes infix codes. On the other hand, there
are  natural transducer properties that cannot be described by
trajectory expressions \cite{DudKon:2012}.

%%%%%%%%%%%%%%%%%%%%%%%%%%%%%%%%%%%%%%%%%%%%
\subsection{Describing DNA-related Properties by Trajectories}
\label{sec:trajectories}

In \cite{Baum:1996,kari2002codes,HuKaKo:2003,KKLW:2003,MaFe:2004,JoMaCh:2005,KaKoSo:2005,KaKoSo:2005a,JoKaMa:2008,GeMa:2012,FaWaHu:2014} the authors consider numerous properties of languages inspired by reliability issues in DNA computing.
We state three of these properties below.
\begin{journal}
In Sect.~\ref{sec:dna-properties} we present a hierarchy
of DNA properties which generalizes
some of the DNA properties presented in~\cite{kari2002codes,HuKaKo:2003,KKLW:2003}.
\end{journal}
Let $\theta$ be an antimorphic permutation over $\Aks$.
Recall that in the DNA setting $\theta=\dna $ is an involution, and therefore, we have $\theta^2 = \id$.

\begin{dnaprop}
\label{dna:nol}
A language $L$ is \emph{$\theta$-nonoverlapping} if $L\cap\theta(L) = \es$.
\end{dnaprop}

\begin{dnaprop}
$L$ is \emph{$\theta$-compliant} if
$\forall w\in \theta(L), x,y\in\Aks\colon xwy\in L \implies xy=\e$.
\label{dna:compliant}
\end{dnaprop}

\begin{dnaprop}
$L$ is \emph{strictly $\theta$-compliant} if it is
$\theta$-nonoverlapping and $\theta$-compliant.
\label{dna:Scompliant}
\end{dnaprop}

Many of the existing DNA-related properties  can be modelled using the concept
of a bond-free property, first defined in \cite{KaKoSo:2005} and later rephrased in
\cite{Dom:2007} in terms of trajectories. We follow the fomulation in \cite{Dom:2007}.
Let $\ree=(\ree_1,\ree_2)$, where $\ree_1$ and $\ree_2$ are two regular trajectory expressions.
First, we define the following language operators.
\begin{eqnarray}
\bfop_{\ree}(L) &=& (((L\sdel_{\ree_1}\Ap)\cap\Ap)\sins_{\ree_2}\As)
\cup(((L\sdel_{\ree_1}\As)\cap\Ap)\sins_{\ree_2}\Ap).\label{eqBFOP}\\
\sbfop_{\ree}(L) &=& ((L\sdel_{\ree_1}\As)\cap\Ap)\sins_{\ree_2}\As.\label{eqSBFOP}
\end{eqnarray}

\begin{conference}
The language operations $\sins_t$ and $\sdel_t$ are \emph{shuffle} (or scattered insertion) and \emph{scattered deletion},
respectively, over the trajectory $t$; see \cite{KaSo:2005,Dom:2007} for details.
\end{conference}

\begin{journal}
The word operations $\sins_t$ and $\sdel_t$ are called \emph{shuffle} (or scattered insertion) and
\emph{scattered deletion},
respectively, over the trajectory $t$.
A \emph{trajectory} is any word over $\{0,1\}$.
For any words $x,w$ and trajectory $t$ with $|t|_0=|x|$ and $|t|_1=|w|$,
$x\sins_t w$ is the set $\{y\}$ such that the word $y$ is of length $|t|$
and results by the following process which
scans the symbols of $x$ left to right and also of $w$ left to right.
For each index $i=0,\ldots,|t|-1$, $y[i]$ is the next symbol of $x$ if $t[i]=0$,
or the next symbol of $w$ if $t[i]=1$. If `$|t|_0=|x|$ and $|t|_1=|w|$' is
not satisfied then $x\sins_t w=\es$.
For example, $1122\sins_{001010}34=113242$.
The reader is referred to \cite{KaSo:2005,Dom:2007} for more details.
For any languages $X,W$ and trajectory expression $\rea$, we  have that
\[
X\sins_{\rea} W = \bigcup_{x\in X,w\in W,t\in L(\rea)} x\sins_t w.
\]
\pnsi
For any words $x,w$ and trajectory $t$ with $|t|=|x|$ and $|t|_1=|w|$,
$x\sdel_t w$ is either the set $\{y\}$ such that the word $y$ is of length
$|t|_0=|x|-|w|$ and satisfies $\{x\}=y\sins_t w$, or the empty set otherwise.
For example, $113242\sdel_{001010}34=1122$.
The reader is referred to \cite{KaSo:2005,Dom:2007} again for more details.
For any languages $X,W$ and trajectory expression $\rea$, we  have that
\[
X\sdel_{\rea} W = \bigcup_{x\in X,w\in W,t\in L(\rea)} x\sdel_t w.
\]
\end{journal}

\begin{definition}\label{defTraj} [\cite{Dom:2007}]
Let \theta be an involution and $\ree_1,\ree_2$ be two regular trajectory expressions.
The  \emph{bond-free property described by} $(\ree_1,\ree_2)$ is
\begin{equation}\label{eqBFP}
\bfp{(\ree_1,\ree_2)}=\{L\sse\As\mid \theta(L)\cap \bfop_{\ree_1,\ree_2}(L)=\es\}.
\end{equation}
The   \emph{strictly bond-free property described by} $(\ree_1,\ree_2)$ is
\begin{equation}\label{eqSBFP}
\sbfp{(\ree_1,\ree_2)}=\{L\sse\As\mid \theta(L)\cap \sbfop_{\ree_1,\ree_2}(L)=\es\}.
\end{equation}
A \emph{regular $\theta$-trajectory  property} is
a  bond-free property described by $(\ree_1,\ree_2)$,
or a  strictly bond-free property described by $(\ree_1,\ree_2)$,
for some pair $(\ree_1,\ree_2)$.
\end{definition}

\begin{journal}
\begin{example}
The \theta-compliant property is a regular $\theta$-trajectory property in
$\bfp{(1^*0^+1^*,0^+)}$: deleting $x$ and $y$ in any $xwy$ (according to $1^*0^+1^*$),
where at least one symbol gets deleted,
and then inserting nothing (according to $0^+$) cannot result into a word in $\theta(L)$.
The \theta-nonoverlapping property is a regular $\theta$-trajectory property in
$\sbfp{(0^+,0^+)}$: deleting nothing and then inserting nothing in any word $w$  cannot
result into a word in $\theta(L)$.
The strictly \theta-compliant property is a regular $\theta$-trajectory property in
$\sbfp{(1^*0^+1^*,0^+)}$: deleting $x$ and $y$ in any $xwy$ (according to $1^*0^+1^*$)
and inserting nothing (according to $0^+$) cannot result into a word in $\theta(L)$.
\end{example}
\end{journal}
\begin{journal}
We note that the actual definitions of bond-free properties in \cite{Dom:2007}
are given in terms of a pair $(T_1,T_2)$ of arbitrary sets of trajectories. However, here we only
consider  sets of trajectories that can be represented by regular expressions.
Moreover,  the second statement of Theorem~\ref{thExpress}, in Sect.~\ref{sec:expressiveness},
remains true if one uses $(T_1,T_2)$ instead of $(\ree_1,\ree_2)$, as the proof makes
no use of the fact that the trajectory sets involved are regular.
\end{journal}

%%%%%%%%%%%%%%%%%%%%%%%%%%%%%%%%%%%%%%%%%%%%%%%%%%%%%%%%%%%%%%%%%%%%%%%%%%%%%%%%
\section{New Transducer-based DNA-related Properties}
\label{sec:theta-transducers}

A question that arises from the discussion in sections~\ref{secTP} and~\ref{sec:trajectories} is whether existing transducer-based
properties include DNA-related properties.
It turns out that this is not the case%
\begin{journal}%
: for instance the
$\dna$-nonoverlapping property, which seems to be the simplest DNA-related property,
cannot be described by any input-preserving transducer
\end{journal}%
; see Proposition~\ref{propNO}.
In this section, we define new transducer-based properties that are
appropriate for DNA-related applications, we demonstrate Proposition~\ref{propNO}, and
discuss how existing DNA-related properties can be described with transducers.
\begin{journal}
Then, in Sect.~\ref{sec:expressiveness} we examine the relationship between
the new transducer properties and the regular $\theta$-trajectory properties which were proposed in \cite{Dom:2007}.
\end{journal}

\begin{definition}\label{def:properties}
A transducer \trt and an \am permutation \theta, defined over the same alphabet, describe $3$-independent properties in two ways:
\begin{enumerate}[1.)]
\item
\emph{strict \theta-transducer property (\SP)}:
$L$ satisfies the property \Stt if
\begin{equation}\label{eqPP}\theta(L)\cap\trt(L) = \es\end{equation}
\item
\emph{weak \theta-transducer property (\WP)}:
$L$ satisfies the  property \Wtt if
\begin{equation}\label{eqIP}\forall w\in L\colon \theta(w) \notin\trt(L\sm w)\end{equation}
\end{enumerate}
Any of the properties \Stt or \Wtt is called a \emph{\theta-transducer property}.
\end{definition}

The difference between \SPs and \WPs is that \Stt forbids that $L\in\Stt$ contains a word $w$ such that any $\theta(w)\in \trt(w)$, while this case is allowed for  $L\in\Wtt$.
For fixed \trt, \theta, and $L$, Condition~\eqref{eqPP} implies that 
for all $w\in L$ we have $\theta(w)\cap \trt(L\sm w) = \es$ which is equivalent to Condition~\eqref{eqIP}.
In other words, if $L$ satisfies $\Stt$, then $L$ satisfies $\Wtt$ as well. 
If $\theta=\id$ and $\trt$ is input-altering, or input-preserving, then the above defined properties specialize to the existing ones stated in Definition~\ref{def:existing}.

\begin{figure}[ht!]
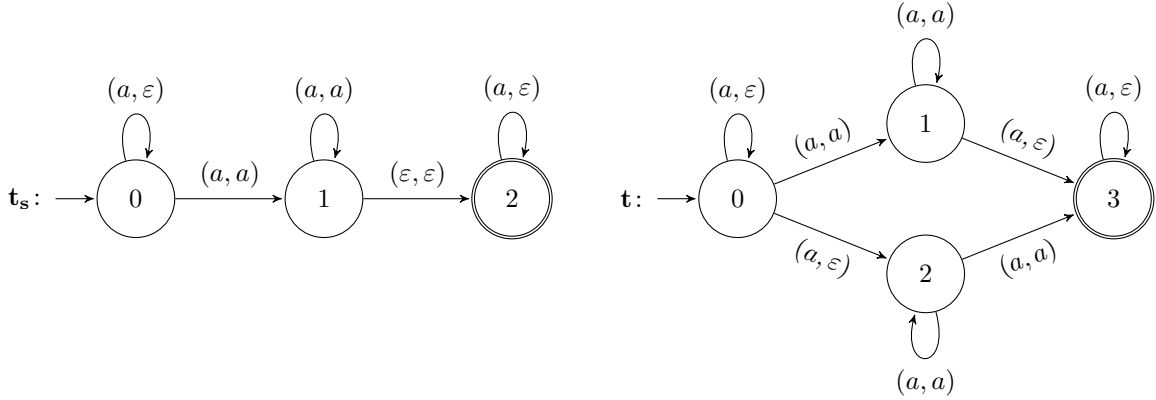

\begin{conference}
\begin{transducer}[node distance=.35cm and 2cm]
	\node [state,initial] (p0) {$0$};
	\node [node distance=.75cm,left=of p0,anchor=east] {$\trt\colon$};
	\node [state,above right=of p0] (p1) {$1$};
	\node [state,below right=of p0] (p2) {$2$};
	\node [state,accepting,below right=of p1] (p3) {$3$};	

	\path [sloped] (p0) edge [loop above] node [above] {$(a,\e)$} ()
		(p0) edge node [above] {$(a,a)$} (p1)
		(p1) edge [loop above] node [above] {$(a,a)$} ()	
		(p1) edge node [above] {$(a,\e)$} (p3)
		(p0) edge node [below] {$(a,\e)$} (p2)
		(p2) edge [loop below] node [below] {$(a,a)$} ()
		(p2) edge node [below] {$(a,a)$} (p3)
		(p3) edge [loop above] node [above] {$(a,\e)$} ();

	\begin{scope}[xshift=-8cm]
	\node [state,initial] (q0) {$0$};
	\node [node distance=.75cm,left=of q0,anchor=east] {$\trt_{\strict}\colon$};
	\node [state,right=of q0] (q1) {$1$};
	\node [state,accepting,right=of q1] (q2) {$2$};

	\path (q0) edge [loop above] node [above] {$(a,\e)$} ()	
		(q0) edge node [above] {$(a,a)$} (q1)
		(q1) edge [loop above] node [above] {$(a,a)$} ()	
		(q1) edge node [above] {$(\e,\e)$} (q2)
		(q2) edge [loop above] node [above] {$(a,\e)$} ();
	\end{scope}
\end{transducer}
\end{conference}
\begin{journal}
\begin{transducer}[node distance=1cm and 2.5cm]
	\node [state,initial] (p0) {$0$};
	\node [node distance=1cm,left=of p0,anchor=east] {$\trt\colon$};
	\node [state,above right=of p0] (p1) {$1$};
	\node [state,below right=of p0] (p2) {$2$};
	\node [state,accepting,below right=of p1] (p3) {$3$};	

	\path [sloped] (p0) edge [loop above] node [above] {$(a,\e)$} ()
		(p0) edge node [above] {$(a,a)$} (p1)
		(p1) edge [loop above] node [above] {$(a,a)$} ()	
		(p1) edge node [above] {$(a,\e)$} (p3)
		(p0) edge node [below] {$(a,\e)$} (p2)
		(p2) edge [loop below] node [below] {$(a,a)$} ()
		(p2) edge node [below] {$(a,a)$} (p3)
		(p3) edge [loop above] node [above] {$(a,\e)$} ();

	\begin{scope}[xshift=-8cm]
	\node [state,initial] (q0) {$0$};
	\node [node distance=1cm,left=of q0,anchor=east] {$\trt_{\strict}\colon$};
	\node [state,right=of q0] (q1) {$1$};
	\node [state,accepting,right=of q1] (q2) {$2$};

	\path (q0) edge [loop above] node [above] {$(a,\e)$} ()	
		(q0) edge node [above] {$(a,a)$} (q1)
		(q1) edge [loop above] node [above] {$(a,a)$} ()	
		(q1) edge node [above] {$(\e,\e)$} (q2)
		(q2) edge [loop above] node [above] {$(a,\e)$} ();
	\end{scope}
\end{transducer}
\end{journal}
\caption{Together with \theta, the left transducer describes the strictly \theta-compliant property
and the right one describes the \theta-compliant property. See Example~\ref{exTransd}
for explanations.}
\label{trans:dna-example}
\end{figure}
\begin{conference}
\vspace{-0,5cm}
\end{conference}

\begin{example}\label{exTransd}
\begin{conference}
In Fig.~\ref{trans:dna-example}, an arrow
with label $(a,a)$ represents a set of edges with labels $(a,a)$ for all $a\in\A$; and similarly for an arrow
with label  $(a,\e)$.
\end{conference}
\begin{journal}
Consider the transducers in Fig.~\ref{trans:dna-example}.
\end{journal}
For any word $xwy$, the left transducer $\trt_\strict$, say, can delete $x$, then keep $w$ (which has to be non-empty), and then delete $y$.
Thus, $\trt_{\strict}(L)\cap\theta(L)=\es$ if and only if $L$ is strictly \theta-compliant.
Now let $xwy$ with $xy\neq \e$ and $w\neq \e$.
If $y$ is nonempty, the right transducer $\trt$ can delete $x$, then keep $w$, and then delete $y$ using the upper path (containing state~$1$);
and if $x$ is nonempty, $\trt$ can delete $x$, then keep $w$, and then delete $y$ using the lower path (containing state~$2$).
Thus, $\trt(L)\cap\theta(L)=\es$ if and only if $L$ is \theta-compliant.
Using FAdo \cite{FAdo} format the left transducer can be specified by
the following string, assuming alphabet \{\texttt{a}, \texttt{b}\}
\pssn
\verb#@Transducer 2 * 0\n0 a @epsilon 0\n0 b @epsilon 0\n0 a a 1\n#
\pnsn
\verb#0 b b 1\n1 a a 1\n1 b b 1\n1 @epsilon @epsilon 2\n2 a @epsilon 2\n#
\pnsn
\verb#2 b @epsilon 2\n#
\end{example}

\begin{journal}
As in the classic case where $\theta=\id$, also in the general case we have that \theta-input-altering transducers play an important role for \SPs because only then the maximality question is decidable.
We did not fully explore the usefulness of \theta-input-preserving for antimorphic permutations yet.
For morphic \theta, however, every transducer \trt can be modified to obtain a \theta-input-preserving transducer $\trt'$ such that $\Wtt = \Wtt[\trt']$;
this concept can be utilized in order to efficiently decide the satisfaction problem; see Sect.~\ref{sec:dec-problems}.

\begin{remark}\label{rem:one-word}
Note that only \Stt for a transducer \trt which is not $\theta$-input-altering can exclude specific words from all languages which satisfy the property \Stt.
Otherwise, when $\trt$ is \theta-input-altering, it must not realize $(w,\theta(w))$;
and when we consider an \WP, then $\theta(w)\in \trt(w)$ is allowed for $w\in L$.
In particular, every singleton language $L = \set{w}$ satisfies all properties $\Wtt$, as well as, $\Stt$ if \trt is \theta-input-altering.
\end{remark}
\end{journal}

\begin{journal}
As input-altering transducer properties are a subset of input-preserving transducer properties, we only consider the case of input-preserving transducer properties in the next two results.
\end{journal}

The next result  demonstrates that existing transducer properties are not suitable for describing even simple
DNA-related properties.

\begin{proposition}\label{propNO}
The $\dna$-nonoverlapping property is not describable by any input-preserving transducer.
\end{proposition}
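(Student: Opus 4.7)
The plan is to separate the two properties using a one--word witness, exploiting a structural asymmetry: every input-preserving transducer property is trivially satisfied by singleton languages, whereas $\dna$-nonoverlap genuinely constrains individual words.

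First, I would observe the following easy fact. If $\trt$ is an input-preserving transducer and $L = \{w\}$ is a singleton (with $w \in \Dp$), then $L \sm w = \es$, so $\trt(L\sm w) = \es$ and condition \eqref{eqIPTP} is vacuously true. Hence every nonempty singleton language satisfies the property described by any input-preserving transducer. (This is exactly the observation recorded for the $\theta$-setting in the remark about singletons.)

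Second, I would produce a singleton that fails $\dna$-nonoverlap. Take $w = \bA\bT \in \Dp$. Since $\dna$ is antimorphic with $\dna(\bA)=\bT$ and $\dna(\bT)=\bA$, we have
\[
\dna(\bA\bT) \;=\; \dna(\bT)\dna(\bA) \;=\; \bA\bT,
\]
so $w$ is a $\dna$-palindrome. Consequently $\{w\} \cap \dna(\{w\}) = \{w\} \neq \es$, and $\{w\}$ does not satisfy the $\dna$-nonoverlapping property (cf.\ Property~\ref{dna:nol}).

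Now suppose, for contradiction, that some input-preserving transducer $\trt$ describes the $\dna$-nonoverlapping property. By the first observation, $\{w\}$ would satisfy this property, contradicting the second observation. Therefore no such $\trt$ exists.

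There is no real obstacle here; the only subtlety is noticing that input-preserving transducer properties are inherently \emph{binary} constraints on pairs of distinct words of $L$, whereas $\dna$-nonoverlap imposes a \emph{unary} constraint (no $\dna$-palindrome may belong to $L$). Any property of the latter kind is automatically out of reach of Definition~\ref{def:existing}, and $\bA\bT$ serves as the cleanest witness.
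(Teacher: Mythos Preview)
Your proof is correct and follows essentially the same approach as the paper's: both use the $\dna$-palindrome $\bA\bT$ to exhibit a singleton language that fails $\dna$-nonoverlap, and both invoke the observation that every singleton trivially satisfies any input-preserving transducer property (your explicit computation $L\sm w=\es$ is exactly what underlies the paper's appeal to Remark~\ref{rem:one-word}).
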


\begin{journal}
\begin{proof}
The singleton language $L = \set{\bA\bT}\sse\Ds$ is not $\dna$-nonoverlapping, because
the word $\bA\bT = \dna(\bA\bT)$ is a \dna-palindrome.
Analogously to Remark~\ref{rem:one-word}, a  transducer property $P_\trt = \Wttt{\id}{\trt}$, which is described
by some input-preserving transducer \trt, cannot exclude any singleton language.
Therefore, we must have $L\in P_\trt$.
\end{proof}

The counter example language $\set{\bA\bT}$ used to prove the previous result is rather artificial,
as in practice code-related languages should have more than two elements.
However, the statement remains true even if we focus on languages containing more than one word.
This case is handled in the next proposition.

\begin{proposition}\label{propNO2}
There is no input-preserving transducer \trt that
satisfies Equation~\eqref{eqIPTP} for all $\dna$-non\-over\-lapping languages $L$ having at least two elements.
\end{proposition}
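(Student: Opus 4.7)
The plan is to proceed by contradiction, exploiting the rationality of the relation realized by $\trt$. Suppose for contradiction that $\trt$ is an input-preserving transducer for which the property described by $\trt$, when restricted to languages with at least two elements, coincides exactly with the $\dna$-nonoverlapping property on such languages. Let $R\sse\As\times\As$ denote the rational relation realized by $\trt$, and set $R^s := R\cup R^{-1}$; the latter is rational because rational relations are closed under union and under inversion (the inverse of $\trt$ is the transducer $\trti$).

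My first step would be to translate the $2$-element case of the assumed equivalence into a pointwise characterization of $R^s$. For any distinct words $x,y$, the language $\{x,y\}$ satisfies \eqref{eqIPTP} iff $y\notin\trt(x)$ and $x\notin\trt(y)$, which is iff $(x,y)\notin R^s$. On the other hand, $\{x,y\}$ is $\dna$-nonoverlapping iff $\{x,y\}\cap\dna(\{x,y\})=\es$, iff neither $x$ nor $y$ is a $\dna$-palindrome and $x\neq\dna(y)$. Combining these two observations, for distinct $x,y$ the relation $(x,y)\in R^s$ holds iff $x$ is a $\dna$-palindrome, $y$ is a $\dna$-palindrome, or $x=\dna(y)$.

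Next, I would restrict $R^s$ to the regular rectangle $L_1\times L_2$ with $L_1:=\bA^+\bC^+$ and $L_2:=\bG^+\bT^+$. All words in $L_1\cup L_2$ are non-palindromes (indeed $\dna(L_1)=L_2$ and the two sets are disjoint), and for $x=\bA^n\bC^m\in L_1$ and $y=\bG^{m'}\bT^{n'}\in L_2$ the condition $x=\dna(y)$ simplifies to $n=n'$ and $m=m'$. Using the pointwise characterization from the previous step, I would conclude
\begin{equation*}
R^s\cap(L_1\times L_2) \;=\; \{(\bA^n\bC^m,\bG^m\bT^n) : n,m\geq 1\}.
\end{equation*}

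Finally, because rational relations are closed under intersection with recognizable relations (in particular with products of two regular languages), the relation on the left must be rational. But the ``swap'' relation on the right is a classical example of a non-rational function: any transducer computing $\bA^n\bC^m\mapsto\bG^m\bT^n$ would have to output $\bG^m$ before ever seeing a $\bC$ and to remember $n$ across all of $\bC^m$ in order to emit $\bT^n$ at the end, so it cannot be realized with finitely many states (equivalently, by Elgot--Mezei it is not a finite union of subsequential functions). This contradiction completes the proof. The main obstacle I anticipate is precisely this non-rationality claim, which is folklore in transducer theory but may warrant either a citation or a short pumping argument applied to a hypothetical transducer realizing the swap.
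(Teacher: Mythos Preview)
Your proof is correct. Both arguments hinge on the same witness family: words in $\bA^+\bC^+$ versus their $\dna$-images in $\bG^+\bT^+$, together with the observation that $\dna(\bA^n\bC^m)=\bG^m\bT^n$. The paper proceeds concretely: it picks the single pair $\{\bA^m\bC^m,\bG^m\bT^m\}$ with $m$ exceeding the number of states, notes this language is not $\dna$-nonoverlapping so (w.l.o.g.) $\bG^m\bT^m\in\trt(\bA^m\bC^m)$, and then pumps a short cycle in the first few edges of the accepting path to produce a pair $(\bA^{m-|u|}\bC^m,\bG^{m-|v|}\bT^m)$ still realized by $\trt$. This pumped pair must then fail to be $\dna$-nonoverlapping, forcing $|u|=|v|=0$, a contradiction.

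Your route is more structural: you first characterize $R^s=R\cup R^{-1}$ on all pairs of distinct words, then intersect with the recognizable rectangle $\bA^+\bC^+\times\bG^+\bT^+$ to isolate exactly the ``swap'' relation $\{(\bA^n\bC^m,\bG^m\bT^n):n,m\ge1\}$, and conclude by the non-rationality of that relation. The content of the final step is precisely the pumping argument the paper carries out in-line; your packaging separates the combinatorics of $\dna$ from the automata-theoretic limitation, and it shows in passing that the input-preserving hypothesis is not actually needed. The trade-off is that your last step appeals to a folklore fact whose justification (a short pumping argument of the same shape as the paper's) should be spelled out or cited, as you yourself anticipate; the Elgot--Mezei remark is a bit loose and would be better replaced by the direct pumping.
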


\begin{proof}
Assume the contrary, that is, there is an input-preserving transducer $\trt$ such that for any
DNA language $L\sse\Ds$ with at least two element we have
\[
\dna(L)\cap L=\es\quad\hbox{iff}\quad \forall \,u\in L:\>\trt(u)\cap (L\sm u)=\es.
\]
We can assume that \trt is in normal form, that is, the label of every edge is of the form $(a,\e)$
or $(\e,a)$, for some $a\in\D$.
Assume that \trt has $n$ states, for some positive integer $n$, and let $m>n$. We have that
$\{\bA^m\bC^m,\bG^m\bT^m\}$ is not $\dna$-nonoverlapping, so  without loss
of generality we have that $\bG^m\bT^m\in\trt(\bA^m\bC^m)$. Consider an
accepting path $\pi$ of \trt
whose label is $(\bA^m\bC^m,\bG^m\bT^m)$ and say $\pi$ consists of $N$ consecutive edges, for some positive integer $N$. Then, these edges are
$s_{i-1}\xras{(x_i,y_i)}s_i$, for $i=1,\ldots,N$, so that the concatenation of the $x_i$'s is equal to
$\bA^m\bC^m$ and the concatenation of the $y_i$'s is equal to $\bG^m\bT^m$.
As \trt is in normal form, we have $N=4m$, and as $m>n$, there is a smallest integer
$k\ge1$ such that state $s_k$ is equal to a previous one, that is $s_k=s_{j}$  such that
$j<k.$
By the choice of $k$, we have $k\le n<m$.
Let $x=x_1\cdots x_j$, $u=x_{j+1}\cdots x_k$, $x'=x_{k+1}\cdots x_N$, and
$y=y_1\cdots y_j$, $v=y_{j+1}\cdots y_k$, $y'=y_{k+1}\cdots y_N$.
As $j-k>0$ and \trt is in normal form we have that
\begin{equation}\label{eqLemmaNO}
|u|>0\quad \hbox{or}\quad |v|>0.
\end{equation}
Using a standard pumping
argument for finite state machines, the path that results if we delete from $\pi$
the $k-j$ edges between $s_j$ and $s_k$ is also an accepting path
whose label is $(xx',yy')$. As each $x_i$ and $y_i$ is of
length 0 or 1, we have $|xu|\le k<m$ and  $|yv|<m$, and also $|u|\le k-j$ and $|v|\le k-j$.
This implies $xx'=\bA^{m-|u|}\bC^m$ and $yy'=\bG^{m-|v|}\bT^m$. As $xx'\not=yy'$
and $yy'\in\trt(xx')$ we have that $\{xx',yy'\}$ is not  $\dna$-nonoverlapping, which
implies $xx'=\dna(yy')$, that is, $\bA^{m-|u|}\bC^m=\bA^{m}\bC^{m-|v|}$ and,
therefore, $|u|=|v|=0$ which contradicts~(\ref{eqLemmaNO}).
\end{proof}
\end{journal}

%%%%%%%%%%%%%%%%%%%%%%%%%%%%%%%%%%%%%%%%%%%%%%%%%%%%%%%%%%%%%%%%%%%%%%%%%%%%%%%%
%%%%%%%%%%%%%%%%%%%%%%%%%%%%%%%%%%%%%%%%%%%%%%%%%%%%%%%%%%%%%%%%%%%%%%%%%%%%%%%%
\section{Expressiveness of Transducer-based Properties}
\label{sec:expressiveness}
In this section we examine the descriptive power of the newly defined transducer
DNA-related properties, that is, the $\theta$-transducer properties.
In Theorem~\ref{thExpress} we show that these properties properly
include the regular \theta-trajectory properties.
On the other hand, in Proposition~\ref{propNotTransd}
we show that there is an  independent DNA-related property that is not a
\theta-transducer property.

\begin{proposition}\label{propNotTransd}
The \theta-free property (defined below) \cite{HuKaKo:2003} is not a \theta-transducer property.
\begin{dnaprop}
A language $L\sse \As$ is \emph{\theta-free} if and only if
$L^2 \cap \Ap \theta(L) \Ap = \es$.
\end{dnaprop}
\end{proposition}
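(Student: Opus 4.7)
The plan is to show that every $\theta$-transducer property is $3$-independent whereas $\theta$-freeness is not, thereby precluding the latter from being a $\theta$-transducer property. First I would rewrite the conditions of Definition~\ref{def:properties}: $\theta(L)\cap\trt(L)=\es$ is equivalent to ``$\theta(u)\notin\trt(v)$ for all $u,v\in L$'', and the weak version is equivalent to the same statement restricted to distinct pairs. In either case the offending configuration involves at most two elements of $L$, so satisfaction by all $1$- and $2$-element subsets already forces satisfaction by $L$; this yields the $3$-independence of any \SP or \WP.

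It therefore suffices to exhibit a three-element language $L$ that fails $\theta$-freeness but for which every singleton and every $2$-subset is $\theta$-free. For $\theta=\dna$ my proposed witness is $L=\set{\bA\bT,\bC\bG,\bG\bA}$: using $\theta(\bA\bT)=\bA\bT$, $\theta(\bC\bG)=\bC\bG$, $\theta(\bG\bA)=\bT\bC$ and the factorisation $\bA\bT\cdot\bC\bG=\bA(\bT\bC)\bG$, we see at once that $L$ is not $\theta$-free. For each of the three singletons and three $2$-subsets I would enumerate the at most four length-$4$ concatenations and their unique proper length-$2$ infix (at positions $2$--$3$), and check that this infix does not lie in the image of the corresponding subset under $\theta$; this is a short explicit table. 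The same template---$u,v$ over disjoint letter pools together with $w=\theta^{-1}(s)$ for an infix $s$ of $uv$ straddling the boundary---yields analogous witnesses for a general antimorphic $\theta$ over a sufficiently large alphabet, with longer $u,v,w$ compensating for small alphabets or highly symmetric underlying letter permutations.

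The main obstacle is the bookkeeping on cross-products inside the $2$-subsets: one must simultaneously preclude $\theta(w)$ from appearing as a proper infix of any of $uw,wu,vw,wv,ww$ and preclude $\theta(u),\theta(v)$ from appearing as a proper infix of $uu,uv,vu,vv$. For the DNA witness the checks work out almost for free, because $\bA\bT$ and $\bC\bG$ are $\dna$-palindromes and the remaining two-word products never reconstruct the forbidden strings $\bT\bC$, $\bA\bT$, or $\bC\bG$ at an interior position; in the general antimorphic case this absence of collisions can always be engineered by a suitable choice of letters or word length, and verifying it is the only nontrivial step of the argument.
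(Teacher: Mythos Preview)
Your proposal is correct and follows the same route as the paper: observe that every $\theta$-transducer property is $3$-independent (since a violation of either \eqref{eqPP} or \eqref{eqIP} is witnessed by at most two words), then exhibit a three-element language that is not $\theta$-free while all its proper nonempty subsets are. The paper uses $K=\{\texttt{ACGT},\texttt{CCAC},\texttt{GTAA}\}$ with the observation $\texttt{CCACGTAA}\in\Dp\,\dna(\texttt{ACGT})\,\Dp$; your witness $\{\bA\bT,\bC\bG,\bG\bA\}$ with $\bA\bT\cdot\bC\bG=\bA\,\dna(\bG\bA)\,\bG$ is shorter and the subset checks are correspondingly lighter, but the argument is otherwise identical.
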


\begin{journal}
\begin{proof}
First note that every \theta-transducer property is 3-independent, so it is sufficient to show that, for $\theta=\dna$ and $\A=\D$, the \theta-free
property is not 3-independent. Assume the contrary and consider the language
\[
K = \{\texttt{ACGT, CCAC, GTAA}\}.
\]
This is not $\dna$-free, as  $\texttt{ACGT}=\dna(\texttt{ACGT})$
and $\texttt{CCACGTAA} \in \Dp\texttt{ACGT}\Dp$.
On the other hand, one verifies that every nonempty subset of $K$ of cardinality less than 3
is $\dna$-free, so by our assumption also $K$ must be $\dna$-free, which
is a contradiction.
\end{proof}
\end{journal}

%%%%%%%%%%%%%%%%%%%%%%%%%%%%%%%%%%%
%\subsection{Transducer Properties Properly Include Trajectory Ones}\label{secTrajTransd}

\begin{journal}
The remainder of this section is devoted to Theorem~\ref{thExpress}.
Recall the DNA alphabet is $\D=\{\bA,\bC,\bG,\bT\}$.
\end{journal}
The following DNA language property is considered in Theorem~\ref{thExpress}
\[
\cH =\{L\sse\Ds\mid H(u,\theta(v))\ge2,\>\hbox{ for all $u,v\in L$}\},
\]
where $H(\cdot,\cdot)$ is the Hamming distance function with the assumption
that its value is $\infty$ when applied on different length words.
Note that \cH is described by $\dna$ and the transducer
shown in Fig.~\ref{trans:pH}.

\begin{figure}[ht!]
\begin{conference}
\begin{transducer}[node distance=2.5cm]
	\node [state,initial,accepting] (q2) {$0$};
	\node [state,accepting,right of=q2] (q3) {$1$};
	%\node [node distance=1cm,left of=q0] {$\trt_m\colon{}$};
	\path 		(q2) edge node [above] {$(a,b)$} (q3)
		(q2) edge [loop above] node [above] {$(a,a)$} ()
		(q3) edge [loop above] node [above] {$(a,a)$} ();
\end{transducer}
\end{conference}
\begin{journal}
\begin{transducer}[node distance=3cm]
	\node [state,initial,accepting] (q2) {$0$};
	\node [state,accepting,right of=q2] (q3) {$1$};
	%\node [node distance=1cm,left of=q0] {$\trt_m\colon{}$};
	\path 		(q2) edge node [above] {$(a,b)$} (q3)
		(q2) edge [loop above] node [above] {$(a,a)$} ()
		(q3) edge [loop above] node [above] {$(a,a)$} ();
\end{transducer}
\end{journal}
\caption{The  transducer describing, together with $\dna$, the \SP $\cH$.}
\label{trans:pH}
\end{figure}

\begin{example}\label{exLs}
\begin{conference}
The DNA language $L_1=\{\bA\bG\bG,\> \bC\bC\bA\}$ does not satisfy $\cH$ because $H(\bC\bC\bA,\dna(\bA\bG\bG))=1$.
The DNA language $L_2=\{\bA\bA\bA,\> \bC\bC\bT\}$ satisfies \cH because $\dna(\bA\bA\bA)=\bT\bT\bT$ and all words $u\in L_2$ contain at most
one $\bT$. %, it follows that $H(u,\dna(\bA\bA\bA))\ge2$.
\end{conference}

\begin{journal}
The following DNA languages do not satisfy $\cH$:
\pmsi
$L_0=\{\bA\bG\bG,\> \bC\bC\bA\},\quad L_0'=\{\bG\bA\bG,\>\bC\bC\bC\}.$
\pmsn
For instance, $H(\bC\bC\bA,\dna(\bA\bG\bG))=1$.
The following languages satisfy $\cH$:
\pmsi
$L_1=\{\bA\bC\bG,\> \bG\bA\bT\},\quad L_2=\{\bC\bA\bC,\>\bG\bC\bT\},$
\pssi
$L_3=\{\bA\bA\bA,\> \bC\bC\bT\},\quad L_4=\{\bA\bA\bA,\> \bC\bT\bC\}
,\quad L_5=\{\bA\bA\bA,\> \bT\bC\bC\}.$
\pmsn
For instance, as $\dna(\bA\bA\bA)=\bT\bT\bT$ and all words $u\in L_3$ contain at most
one $\bT$, it follows that $H(u,\dna(\bA\bA\bA))\ge2$. Now using $\dna(\bC\bC\bT)=\bA\bG\bG$,
one verifies that  $H(u,\dna(\bC\bC\bT))\ge2$ for any $u\in L_3$. Thus, indeed $L_3$
satisfies $\cH$.
\end{journal}
\end{example}

\begin{theorem}\label{thExpress}\
\begin{enumerate}
\item
Let \theta be an antimorphic involution. Every regular \theta-trajectory  property
is a \theta-transducer property.
\item
Property $\cH$ is a $\dna$-transducer property, but not a
(regular) $\dna$-trajectory one.
\end{enumerate}
\end{theorem}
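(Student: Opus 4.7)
My plan proceeds in three steps. For statement~(1), I would show that every regular $\theta$-trajectory property is a strict $\theta$-transducer property by realizing each of the operators $\sbfop_{\ree_1,\ree_2}$ and $\bfop_{\ree_1,\ree_2}$ as a finite transducer. Both scattered deletion $\sdel_{\ree_1}$ and scattered insertion $\sins_{\ree_2}$ (along a regular trajectory expression) are standard transducer operations: one runs a DFA for $L(\ree_i)$ alongside the input/output tape, consuming or emitting a letter depending on whether the current trajectory symbol is $0$ or $1$. Composing these transducers, together with the regular filter $\cap\Ap$ on the intermediate skeleton, yields a transducer $\trt$ with $\trt(L)=\sbfop_{\ree_1,\ree_2}(L)$ for every $L$, and similarly for $\bfop_{\ree_1,\ree_2}$. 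Equation~\eqref{eqSBFP} (respectively~\eqref{eqBFP}) then reads $\theta(L)\cap\trt(L)=\es$, which is exactly the strict $\theta$-transducer property $\Stt$.

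For the first half of statement~(2), the transducer in Fig.~\ref{trans:pH} realizes, on input $u$, the Hamming-$1$ ball around $u$: it copies letters while in state~$0$, may take a single substituting $(a,b)$-edge into state~$1$, and continues copying afterwards. Hence $\dna(L)\cap\trt(L)=\es$ unfolds to $\dna(v)\notin\trt(u)$ for all $u,v\in L$, i.e.\ $H(u,\dna(v))\ge 2$, which is precisely the defining condition of~$\cH$.

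For the second half of statement~(2), I would argue by contradiction using a singleton single-letter language. For any $c\in\D$ we have $c\neq\dna(c)$ since the DNA involution has no fixed letters, so $H(c,\dna(c))=1<2$ and $\{c\}\notin\cH$. On the other hand, for any pair of (not necessarily regular) trajectory expressions $\ree_1,\ree_2$, unpacking $\sbfop_{\ree_1,\ree_2}(\{c\})$ shows that the only length-$1$ deletion trajectory leaving a non-empty skeleton is $0$, giving skeleton $c$. Any insertion trajectory $t_2\in L(\ree_2)$ with $|t_2|_0=1$ then produces an output of length $|t_2|$ with $c$ at the unique $0$-position of $t_2$ and arbitrary letters at the $1$-positions; for $|t_2|=1$ the output equals $c\neq\dna(c)$, and for $|t_2|\ge 2$ the output cannot equal $\dna(c)$ for length reasons. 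The parallel computation for $\bfop_{\ree_1,\ree_2}(\{c\})$ (whose first summand is empty on a single letter and whose second summand forces length $\ge 2$) yields $\dna(c)\notin\bfop_{\ree_1,\ree_2}(\{c\})$. Hence $\{c\}$ satisfies both $\sbfp{(\ree_1,\ree_2)}$ and $\bfp{(\ree_1,\ree_2)}$, contradicting either equality with~$\cH$.

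The main conceptual point is that a trajectory operator applied to an input of length~$1$ cannot substitute the single letter in place: it can only retain it (possibly padding with arbitrary letters) or delete it entirely (giving the excluded empty skeleton). Therefore the distinct letter $\dna(c)$ is simply unreachable from $\{c\}$, and the argument uses no regularity hypothesis on $\ree_1,\ree_2$, matching the extended observation noted in the excerpt.
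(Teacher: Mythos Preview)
Your proposal is correct and follows essentially the same approach as the paper. For part~(1) the paper likewise realizes the four basic trajectory operations $\sdel_{\rea}\As$, $\sdel_{\rea}\Ap$, $\sins_{\rea}\As$, $\sins_{\rea}\Ap$ as transducers and closes under composition, disjunction, and intersection with $\Ap$; for part~(2) the paper also points to the transducer of Fig.~\ref{trans:pH} and then exhibits a language of single-letter words that fails $\cH$ yet satisfies every $\anybfp(\ree_1,\ree_2)$ because trajectory operations cannot substitute a single retained letter in place.

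The only difference worth noting is the choice of counterexample: the paper uses the two-element language $K=\{\texttt{A},\texttt{C}\}$, whereas you use a singleton $\{c\}$. Your singleton is marginally cleaner and your analysis of $\bfop_{\ree}(\{c\})$ (first summand empty, second summand forces length $\ge 2$) is more explicit than the paper's one-line claim ``$\anybfop(a)\sse a\As$'', which as stated is actually imprecise since insertion can place letters before $a$; what is really needed, and what you correctly isolate, is that the only length-$1$ output reachable from a single letter $a$ is $a$ itself. Both arguments rest on the same observation and neither uses regularity of $\ree_1,\ree_2$, as you point out.
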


\begin{journal}
\begin{proof}
We use the following notation:
$\anybfop_{\ree}$ for either of the operators $\bfop_{\ree}$ and $\sbfop_{\ree}$,
and
$\anybfp(\ree)$ for either of the properties $\bfp(\ree)$ and $\sbfp(\ree)$.

For the \underline{first} statement, we show that given any trajectory regular expression $\rea$,
each of the following operators is a transducer operator
\begin{eqnarray*}
\trta_1(X) &=& X\sdel_{\rea}\As\\
\trta_2(X) &=& X\sins_{\rea}\As\\
\trta_3(X) &=& X\sdel_{\rea}\Ap\\
\trta_4(X) &=& X\sins_{\rea}\Ap
\end{eqnarray*}
The statement then would follow by noting that if \trt and \trs are transducer operators
then also $(\trt\circ\trs)$ and $(\trt\lor\trs)$ are transducer operators \cite{Be:1979},
and if $\auta$ is an automaton, then one can construct  the transducer $(\trs\uar\auta)$  such that $y\in(\trs\uar\auta)(x)$ if and only if
$y\in\trs(x)\cap L(\auta)$ \cite{Kon:2002}.
For example, for any pair $\ree=(\ree_1,\ree_2)$, we have that
\[
\sbfop_{\ree}(L)=(\trt^{\ree_2}_2\circ(\trt^{\ree_1}_1\uar\auta_+))(L),
\]
where $\auta_+$ is any automaton accepting $\Ap$.

The claim about  $\trta_4$ is already shown in \cite{DudKon:2012}.
For the claim about $\trta_2$, first note that
$X\sins_{\rea}\As=(X\sins_{\rea}\Ap)\,\cup\,(X\sins_{\rea}\{\e\})$,
so  $\trta_2$ is equal to $(\trta_4\lor\trt_{\rea,\mathrm{id}})$, where $\trt_{\rea,\mathrm{id}}$
is a transducer with $\trt_{\rea,\mathrm{id}}(x)=x\sins_{\rea}\{\e\}$ and defined as follows.
First note that by definition, $y\in x\sins_{\rea}\{\e\}$ if and only if $y=x$ and $0^{|x|}\in L(\rea)$.
Let $\auta$ be an automaton with no empty transitions accepting $L(\rea)$.
Then, $\trt_{\rea,\mathrm{id}}$ is made based on $\auta$ as follows. Its set  of transitions
consists of all tuples $(p,a/a,q)$ such that $(p,0,q)$ is a transition of $\auta$---we say that the latter
is the \emph{corresponding} transition of the former. The initial and final states
of $\trt_{\rea,\mathrm{id}}$ are those initial and final states, respectively, of $\auta$ that appear
in the transitions of $\trt_{\rea,\mathrm{id}}$. It follows that $\trt_{\rea,\mathrm{id}}$ realizes
a pair $(x,y)$ of words using some path $P$ of transitions, if and only if $x=y$ and the automaton $\auta$ accepts $0^{|x|}$ using a path consisting of the corresponding transitions that make the path $P$.

In \cite{KaSo:2005} it is observed that $y\in(x\sdel_t w)$ if and only if
$x\in(y\sins_t w)$, for all words $x,y,w$ and trajectories $t$, which implies that
$\trta_3$ and $\trta_1$ are simply the inverses of the transducers $\trta_4$ and $\trta_2$,
respectively.
\pssi
For the \underline{second} statement we recall that \cH is described by $\dna$ and the transducer
shown in Fig.~\ref{trans:pH}.
For the second part of the statement, we argue by contradiction, so we assume that there is a pair
of trajectory regular expressions $\ree=(\ree_1,\ree_2)$ such that
\[
\cH=\anybfp(\ree_1,\ree_2).
\]
Using the definition of $\anybfop$, one verifies that
\[
\anybfop(a)\sse a\As,\>\>\hbox{for all $a\in\A$}.
\]
Consider the DNA language $K=\{\texttt{A}, \texttt{C}\}$. One verifies that
$K$ does not satisfy $\cH$, but on the other hand $\dna(K)\cap\anybfop_{\ree}(K)=\es$,
which means that $K$ satisfies $\anybfp(\ree_1,\ree_2)$, which leads to the required contradiction.
\end{proof}

The counter example used to prove the second statement of Theorem~\ref{thExpress} is a little
artificial, as the language $K=\{\texttt{A}, \texttt{C}\}$ consists of 1-letter words, which
is of no practical value. The next result gives a stronger  statement, as it requires that all
words involved are of length at least 2.

\begin{proposition}\label{thExpress2}
The following property
\[
\cH_2 =\{L\sse\Ds\mid |u|\ge2\>\hbox{ and }\>H(u,\theta(v))\ge2,\>\hbox{ for all $u,v\in L$}\}
\]
is a $\dna$-transducer property but not a $\dna$-trajectory property.
\end{proposition}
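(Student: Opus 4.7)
The plan is to verify the two assertions of the proposition in turn.

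For the first assertion, that $\cH_2$ is a $\dna$-transducer property, I would realize it as a strict $\dna$-transducer property $\Stt$ by taking $\trt$ to be the disjoint union of two sub-transducers. The first sub-transducer accepts any input $u$ with $|u|\le 1$ and outputs $\dna(u)$, so that whenever $L$ contains such a word the pair $(u,u)$ already witnesses $\dna(u)\in\trt(L)\cap\dna(L)$ and $L$ is excluded from $\Stt$. The second is obtained from the Hamming-distance-$\le 1$ transducer of Figure~\ref{trans:pH} by prefixing two mandatory initial transitions (each of the form $(a,a)$ or $(a,b)$ with $b\neq a$, with a two-valued mismatch counter carried along), so that only inputs of length at least $2$ reach an accepting state. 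A routine case analysis on the two failure modes of $\cH_2$---short words and small Hamming distance---then shows $\dna(L)\cap\trt(L)=\es$ if and only if every word in $L$ has length at least $2$ and $H(u,\dna(v))\ge 2$ for all $u,v\in L$, i.e., if and only if $L\in\cH_2$.

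For the second assertion, that $\cH_2$ is not a regular $\dna$-trajectory property, I would argue by contradiction, following the pattern of Theorem~\ref{thExpress}(2). Suppose $\cH_2=\anybfp(\ree_1,\ree_2)$ for some regular trajectory expressions. The main tool is the letter-preservation observation implicit in the earlier proof: every $y\in\anybfop_{\ree}(\{u\})$ is an interleaving of a non-empty scattered subword of $u$ with an arbitrary word, so $y$ must contain at least one letter appearing in $u$. Consequently, whenever $K\sse\{\bA,\bC\}^+$ one has $\anybfop_{\ree}(K)\sse\{y:y\text{ contains }\bA\text{ or }\bC\}$, which is disjoint from $\{\bT,\bG\}^+\supseteq\dna(K)$.

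Applying this with the language $K=\{\bA,\bC\}$ already used in the proof of Theorem~\ref{thExpress}, one immediately gets $K\in\anybfp(\ree)$ for every $\ree$, while $K\notin\cH_2$ is forced by the length condition $|u|\ge 2$ appearing in the definition of $\cH_2$. This yields the desired contradiction and hence $\cH_2\ne\anybfp(\ree_1,\ree_2)$ for every pair $(\ree_1,\ree_2)$. The main obstacle I foresee is that one might prefer a counterexample whose words all have length at least $2$, in the spirit of the remark preceding the proposition; a natural candidate is $K'=\{\bA\bC,\bG\bT\}$, which fails $\cH_2$ because the pair $(\bA\bC,\bG\bT)$ satisfies $\dna(\bG\bT)=\bA\bC$ and hence $H(\bA\bC,\dna(\bG\bT))=0$, and for which letter preservation rules out the ``cross'' terms $\dna(\bG\bT)\in\anybfop(\{\bA\bC\})$ and $\dna(\bA\bC)\in\anybfop(\{\bG\bT\})$. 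Excluding the ``self'' terms $\bA\bC\in\anybfop(\{\bA\bC\})$ and $\bG\bT\in\anybfop(\{\bG\bT\})$ for every $\ree$, however, does not follow from letter preservation alone and can genuinely fail when $\ree$ admits identity-like length-$2$ trajectories; handling this would require an additional case analysis. Since the length-$1$ counterexample already suffices to prove the proposition as stated, I would use it as the clean route to the contradiction.
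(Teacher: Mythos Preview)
Your proposal is correct for both parts, but your route for the second assertion diverges substantially from the paper's.

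For the transducer part, the paper exhibits a single four-state transducer (Fig.~\ref{trans:H}) that simultaneously handles short words and the Hamming condition, whereas you build a disjoint union of two components. Both constructions are valid and the difference is cosmetic.

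For the non-trajectory part, the difference is real. Your argument reuses the length-1 language $K=\{\bA,\bC\}$: since $\cH_2$ explicitly requires $|u|\ge2$, you get $K\notin\cH_2$ for free, and letter preservation gives $K\in\anybfp(\ree)$ for every $\ree$. This is clean and short. The paper, by contrast, deliberately avoids short words: it invokes Lemma~\ref{lemExpress} (whose test languages $L_0,L_0',L_1,\ldots,L_5$ all consist of words of length~3, so the lemma applies equally under the hypothesis $\cH_2=\anybfp(\ree)$) to pin down which length-3 trajectories must lie in $L(\ree_1)$ and $L(\ree_2)$, and then runs a four-case analysis with the languages $L_1,L_2$ to reach a contradiction.

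What each approach buys: yours is a two-line proof of the proposition \emph{as stated}, but it exploits exactly the feature---the built-in length constraint---that the paper introduced to escape the ``artificial'' length-1 counterexample of Theorem~\ref{thExpress}. In effect you observe that adding the length condition to $\cH$ reopens the same trivial door. The paper's longer argument honours its own motivation by using only languages of words of length at least~2, and in doing so extracts genuine structural information about which trajectories any candidate $(\ree_1,\ree_2)$ would have to contain. Your candidate $K'=\{\bA\bC,\bG\bT\}$ indeed fails for the reason you identify (self-terms like $\bA\bC\in\sbfop(\{\bA\bC\})$ can occur, e.g.\ with $\ree=(0^+,0^+)$), so matching the paper's ``long words only'' standard really does require the finer case analysis.
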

The proof of this results require a couple of intermediate results, which we present next.
\begin{lemma}\label{lemExpress2}
Let $x,y$ be any words and $s,t$ be any trajectories.
If $y\in((x\sdel_s\As)\cap\Ap)\sins_t\As$ then
\[
|t|-|s|=|t|_1-|s|_1=|y|-|x|\quad\hbox{and}\quad |s|_1<|x|.
\]
\end{lemma}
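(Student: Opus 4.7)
The plan is to unwind the definitions of $\sdel_s$ and $\sins_t$, extract the length constraints they impose on the intermediate word, and then assemble the three claimed equalities/inequality from those constraints.

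First, I would introduce names for the witnesses. Assuming $y\in((x\sdel_s\As)\cap\Ap)\sins_t\As$, there exist words $w,u\in\As$ and an intermediate word $z\in\Ap$ such that $z\in x\sdel_s w$ and $y\in z\sins_t u$. The basic length laws for scattered deletion and insertion (which follow directly from the definitions recalled in Sect.~\ref{sec:trajectories}) give $|s|=|x|$, $|s|_1=|w|$, $|z|=|s|_0=|x|-|w|$, and $|t|_0=|z|$, $|t|_1=|u|$, $|y|=|t|$.

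From $z\in\Ap$ we have $|z|\ge 1$, so $|x|-|w|\ge 1$, hence $|s|_1=|w|<|x|$, which gives the last part of the claim. For the length equalities, $|y|-|x|=|t|-|s|$ is immediate from $|y|=|t|$ and $|x|=|s|$. For the identity $|t|-|s|=|t|_1-|s|_1$, I would use $|t|=|t|_0+|t|_1$ and $|s|=|s|_0+|s|_1$ together with $|t|_0=|z|=|s|_0$ to cancel the zero-tallies, leaving $|t|-|s|=|t|_1-|s|_1$.

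I do not expect a genuine obstacle here; the entire argument is bookkeeping on tallies of $0$s and $1$s in the trajectories. The only subtle point to state carefully is that $z\in\Ap$ (as opposed to $\As$) is precisely what is needed to force the strict inequality $|s|_1<|x|$; if the intersection with $\Ap$ were dropped, one could only conclude $|s|_1\le|x|$.
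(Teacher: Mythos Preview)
Your proposal is correct and follows essentially the same approach as the paper's proof: both introduce the intermediate word $z\in\Ap$ and the two witnesses for the deletion and insertion steps, then read off the length identities $|s|=|x|$, $|t|=|y|$, $|s|_1=|w|$, $|t|_1=|u|$, $|z|=|s|_0$, and $|t|_0=|z|$ from the definitions. The only cosmetic difference is that the paper derives $|t|_1-|s|_1=|y|-|x|$ via the chain $|s|_1=|x|-|z|=|x|-(|y|-|t|_1)$, whereas you obtain the equivalent $|t|-|s|=|t|_1-|s|_1$ by cancelling $|t|_0=|s|_0$ directly; both are the same bookkeeping.
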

\begin{proof}
The premise of the statement implies that $y\in z\sins_t w_2$
and $z\in((x\sdel_s w_1)\cap\Ap)$ for some words $z,w_1,w_2$
with $|z|>0$.
Informally, this means that $y$ results by deleting $|w_1|$ symbols from $x$,
with $|w_1|<|x|$,
and then inserting $|w_2|$ symbols. More formally as $|t|=|y|$ and $|s|=|x|$,
we have that $|t|-|s| = |y|-|x|$. Also as
$|z|=|x|-|w_1|=|s|-|s|_1$, we have that $|s|>|s|_1$
and, therefore, $|x|>|s|_1$, as required. Now, we have
\[
|s|_1=|w_1|=|x|-|z|=|x|-(|y|-|w_2|) = |x|-|y|+|t|_1
\]
and, therefore, $|t|_1-|s|_1=|y|-|x|$.
\end{proof}

\begin{lemma}\label{lemExpress}
Let $\ree=(\ree_1,\ree_2)$ be a pair of trajectory regular expressions and assume
that $\cH=\anybfp(\ree)$---as we shall see further below  this assumption leads to
a contradiction.
\begin{enumerate}
\item
There is no pair $(s,t)$ of trajectories in $L(\ree_1)\times L(\ree_2)$
such that $|s|=|t|=3$ and $|s|_1=|t|_1=2$.
\item
If $x,y$ are DNA words of length 3 and $(s,t)\in L(\ree_1)\times L(\ree_2)$ such that
$x\not=\dna(y)$ and $y\in((x\sdel_s\Ds)\cap\Dp)\sins_t\Ds$ then
$|s|=|t|=3$ and $|s|_1=|t|_1=1$.
\item
We have that $010\in L(\ree_1)$ or $010\in L(\ree_2)$.
\item
We have that $(001,001)\in L(\ree_1)\times L(\ree_2)$ or $(100,100)\in L(\ree_1)\times L(\ree_2)$.
\end{enumerate}
\end{lemma}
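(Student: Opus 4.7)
The plan is to work under the standing hypothesis $\cH=\anybfp(\ree)$ and to combine Lemma~\ref{lemExpress2} with a handful of tailored length-$3$ DNA languages. Lemma~\ref{lemExpress2} pins down $|s|=|t|=3$ and $k:=|s|_1=|t|_1\in\{0,1,2\}$ whenever $y\in((x\sdel_s\Ds)\cap\Dp)\sins_t\Ds$ with $|x|=|y|=3$, so all four parts reduce to controlling which values of $k$ (and which trajectory shapes within each $k$) are admissible.

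For Part~1, I use the three $\cH$-satisfying languages $K_1=\{\bA\bA\bA,\bC\bC\bT\}$, $K_2=\{\bA\bA\bA,\bC\bT\bC\}$, and $K_3=\{\bA\bA\bA,\bT\bC\bC\}$ (cf.\ Example~\ref{exLs}); a short Hamming-distance check verifies $K_\ell\in\cH$ for $\ell=1,2,3$. Fix any pair $(s,t)$ with $k=2$ and let $j\in\{1,2,3\}$ be the unique position of the $0$ in $t$. Starting the derivation from $\bA\bA\bA$ keeps a single $\bA$ placed at position $j$ of the result, while the other two positions receive arbitrary inserted symbols. Since $\dna(\bC\bC\bT)=\bA\bG\bG$, $\dna(\bC\bT\bC)=\bG\bA\bG$, and $\dna(\bT\bC\bC)=\bG\bG\bA$ carry $\bA$ at positions $1$, $2$, $3$ respectively, for each value of $j$ at least one of $\dna(K_\ell)$ meets $\anybfop_\ree(K_\ell)$, contradicting $K_\ell\in\cH=\anybfp(\ree)$.

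Part~2 follows by combining Lemma~\ref{lemExpress2} ($k\in\{0,1,2\}$), Part~1 (excluding $k=2$), and the observation that $k=0$ forces $y=x$, which is incompatible with the applications envisaged in Parts~3 and~4, where $y$ arises as $\dna(v)$ for a word $v$ distinct from $x$ in a $\dna$-nonoverlapping source language and therefore $y\neq x$ in addition to the stated $x\neq\dna(y)$. For Part~3, I take the singleton $L=\{\bC\bA\bG\}$. Since $H(\bC\bA\bG,\dna(\bC\bA\bG))=1$, we have $L\notin\cH=\anybfp(\ree)$ and hence $\bC\bT\bG\in\anybfop_\ree(\{\bC\bA\bG\})$; since source and target are distinct, Part~1 forces $k=1$. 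A direct enumeration of the nine pairs with $k=1$ leaves only $(010,010)$ as a producer of $\bC\bT\bG$ from $\bC\bA\bG$: the two words differ only at position~$2$ and the letters $\bC,\bA,\bG$ are pairwise distinct, which blocks every other alignment of the two retained letters against the target. Consequently $010\in L(\ree_1)\cap L(\ree_2)$, which implies Part~3.

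For Part~4 I take $L=\{\bT\bG\bT,\bA\bC\bG\}$. Both singletons belong to $\cH$, but the cross-distances $H(\bT\bG\bT,\dna(\bA\bC\bG))=H(\bT\bG\bT,\bC\bG\bT)=1$ (difference at position~$1$) and $H(\bA\bC\bG,\dna(\bT\bG\bT))=H(\bA\bC\bG,\bA\bC\bA)=1$ (difference at position~$3$) put $L$ outside $\cH=\anybfp(\ree)$. Hence $\dna(L)\cap\anybfop_\ree(L)\neq\es$; the two remaining cross-pairs have Hamming distance~$3$, too large to be realised at $k=1$, so the witness must be either $\bT\bG\bT\to\bC\bG\bT$ or $\bA\bC\bG\to\bA\bC\bA$. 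Running the nine-case enumeration on each, the fact that $\bT\bG\bT$ has $\bT$ at both outer positions and $\bG$ in the middle, together with the pairwise-distinct letters of $\bA\bC\bG$, leaves exactly $(100,100)$ and $(001,001)$ respectively as the sole viable trajectories, so $(100,100)\in L(\ree_1)\times L(\ree_2)$ or $(001,001)\in L(\ree_1)\times L(\ree_2)$. The principal combinatorial obstacle in Parts~3 and~4 is precisely this enumeration: for each of the eight (respectively seven) ``wrong'' pairs, one must verify that the induced alignment of source letters against target letters breaks down on the chosen examples, and the above languages are engineered so that their letter patterns prevent any accidental coincidence.
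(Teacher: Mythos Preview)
Your Parts~1 and~3 are sound; in fact your singleton $\{\bC\bA\bG\}$ in Part~3 is a cleaner witness than the paper's $L_0'=\{\bG\bA\bG,\bC\bC\bC\}$ and yields the stronger conclusion $010\in L(\ree_1)\cap L(\ree_2)$.

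Part~4, however, has a genuine gap. Your claim that the two self-pairs, having Hamming distance~$3$, are ``too large to be realised at $k=1$'' is false: a single deletion followed by a single insertion at a \emph{different} position can shift letters and reach words at Hamming distance up to~$3$. Concretely, for the self-pair $u=v=\bA\bC\bG$ with target $\dna(\bA\bC\bG)=\bC\bG\bT$, the pair $(s,t)=(100,001)$ works: deleting the first letter of $\bA\bC\bG$ gives $\bC\bG$, and inserting $\bT$ at the end yields $\bC\bG\bT$. So from your language $L=\{\bT\bG\bT,\bA\bC\bG\}$ you can only conclude that one of $(100,100)$, $(001,001)$, or $(100,001)$ lies in $L(\ree_1)\times L(\ree_2)$, and the last option does not deliver Part~4.

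The paper avoids this by using $L_0=\{\bA\bG\bG,\bC\bC\bA\}$, whose self-pairs have \emph{disjoint letter sets} from their $\dna$-images ($\{\bA,\bG\}$ vs.\ $\{\bC,\bT\}$ and $\{\bC,\bA\}$ vs.\ $\{\bT,\bG\}$), so no $k=1$ derivation can match even two positions; only cross-pairs survive, and the case analysis then forces $s=t\in\{001,100\}$. To repair your argument you must either switch to a language with this rigidity, or add a step ruling out $(100,001)$ separately --- for instance via $L_1=\{\bA\bC\bG,\bG\bA\bT\}$ of Example~\ref{exLs}: $L_1\in\cH$, yet $(100,001)$ sends $\bA\bC\bG$ to $\bC\bG\bT=\dna(\bA\bC\bG)\in\dna(L_1)$, a contradiction.
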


\begin{proof}
We shall use some of the seven languages in Example~\ref{exLs}.
\pnsi
For the \underline{first} statement, assume for the sake of contradiction that the two trajectories
have equal length and exactly two 1s each. %Then, in particular $t\in\{011,101,110\}$.
By applying  $(\texttt{AAA}\sdel_{s}\Ds)\cap\Dp$  followed by $\sins_{t}\Ds$,  the result
is $\anybfop(\texttt{AAA})$ and is equal to $\bA\D\D$ or $\D\bA\D$ or $\D\D\bA$, depending
on whether $t=011$ or $t=101$ or $t=110$, respectively.
More specifically, if
$t=011$ then $\anybfop(\texttt{AAA})$ contains $\dna(\texttt{CCT})$, which contradicts the
fact that $L_3$ satisfies $\cH$.
If $t=101$ then $\anybfop(\texttt{AAA})$ contains $\dna(\texttt{CTC})$, which contradicts the
fact that $L_4$ satisfies $\cH$. If
$t=110$ then $\anybfop(\texttt{AAA})$ contains $\dna(\texttt{TCC})$, which contradicts the
fact that $L_5$ satisfies $\cH$.
\pssi
For the \underline{second} statement, Lemma~\ref{lemExpress2} implies that
$|s|=|t|=3$ and $|s|_1=|t|_1\le1$, and  $x\not=\dna(y)$ implies that $|s|_1\not=0$.
Hence, $|s|_1=|t|_1=1$, as required.
\pssi
For the \underline{third} statement, the fact that $L_0'$ does not satisfy $\cH$ implies that
there are words $u,v\in L_0'$ such that $\dna(v)\in\anybfop(u)$ and, therefore, there are
words $w_1,w_2$ and  $(s,t)\in L(\ree_1)\times L(\ree_2)$ such that
\[
\dna(v)\in((u\sdel_{s} w_1)\cap\Dp)\sins_{t} w_2.
\]
By the previous statement, $|s|=|t|=3$ and $|s|_1=|t|_1=1$,
which implies $|w_1|=|w_2|=1$.
For the sake of contradiction
assume $s\not=010$ and $t\not=010$. Let $u=u_1u_2u_3$ with each $u_i$ being a symbol.
There are four cases about the values of $s$ and $t$, all of which lead to contradictions.
For example, if $s=001$ and $t=001$ then $\dna(v)=u_1u_2w_2$, which implies that
$v=\bar w_2\bar u_2\bar u_1$. By inspection, one verifies that $u_1u_2u_3,\bar w_2\bar u_2\bar u_1$
cannot be both in $L_0'$.
\pssi
For the \underline{fourth} statement, the fact that $L_0$ does not satisfy $\cH$ implies that
there are words $u,v\in L_0$ such that $\dna(v)\in\anybfop(u)$ and, therefore, there are
words $w_1,w_2$ and  $(s,t)\in L(\ree_1)\times L(\ree_2)$ such that
\[
\dna(v)\in((u\sdel_{s} w_1)\cap\Dp)\sins_{t} w_2.
\]
By a previous statement, $|s|=|t|=3$ and $|s|_1=|t|_1=1$,
which implies $|w_1|=|w_2|=1$. Let $u=u_1u_2u_3$ with each $u_i$ being a symbol.
The rest of the proof consists of four parts:
\pssi $s=010$ leads to a contradiction;
\pnsi $t=010$ leads to a contradiction;
\pnsi $s=001$ implies $t=001$;
\pnsi $s=100$ implies $t=100$.
\pssn
We demonstrate the first and fourth parts and leave the other two parts to the reader to verify.
For the first part, if $s=010$ then depending on whether $t=001$ or $t=010$ or $t=100$,
we have that
$\dna(v)=u_1u_3w_2$ or $\dna(v)=w_2u_1u_3$ or $\dna(v)=w_2u_1u_3$, and hence,
$v=\bar w_2\bar  u_3\bar u_1$ or $v=\bar w_2\bar u_3\bar u_1$ or
$v=\bar u_3\bar u_1\bar w_2$. One verifies by inspection that, in any case, it is impossible
to have $u,v\in L_0$. Finally for the last part, if $s=100$ then, as $t$ cannot be 010, we
have that $\dna(v)=u_2u_3w_2$ or $\dna(v)=w_2u_2u_3$ and hence,
$v=\bar w_2\bar  u_3\bar u_2$ or $v=\bar u_3\bar u_2\bar w_2$. One verifies by inspection that, in
either case, it is impossible to have $u,v\in L_0$.
\end{proof}

\begin{proof} (Of Proposition~\ref{thExpress2}.)
The fact that $\cH_2$ is a $\dna$-transducer \SP is established using the transducer
in Fig.~\ref{trans:H}.
\begin{figure}[ht!]
\begin{transducer}[node distance=2cm and 2.5cm]
	\node [state,initial,accepting] (q0) {$0$};
	\node [state,accepting,below=of q0] (q1) {$1$};
	\node [state,accepting,below right=of q0] (q2) {$2$};
	\node [state,accepting,above right=of q2] (q3) {$3$};
	%\node [node distance=1.5cm,left=of q0] {$\trt_m\colon{}$};
	\path 		
	    (q0) edge node [left] {$(a,a)$}(q1)
	    (q0) edge node [below,sloped] {$(a,a)$} (q2)
	    (q0) edge node [above] {$(a,b)$} (q3)
	    (q2) edge node [below,sloped] {$(a,b)$} (q3)
		(q2) edge [loop right] node [right] {$(a,a)$} ()
		(q3) edge [loop right] node [right] {$(a,a)$} ();
\end{transducer}
\caption{The  transducer describing, together with $\dna$, the \SP $\cH_2$.}
\label{trans:H}
\end{figure}
For the second part of the statement, we argue by contradiction, so we assume that there is
a pair of trajectory regular expressions $(\ree_1,\ree_2)$ such that
\[
\cH_2=\anybfp(\ree_1,\ree_2).
\]
By Lemma~\ref{lemExpress}, we have that $001\in L(\ree_2)$ or $100\in L(\ree_2)$,
and that $001\in L(\ree_1)$ or $100\in L(\ree_1)$.
Moreover, we can distinguish the following four cases, which all lead to contradictions.
We also consider the languages $L_1$ and $L_2$ defined in Example~\ref{exLs}.
\pssn
\emph{Case  `$\,010\in L(\ree_1)$ and $001\in L(\ree_2)$'.}
Then, $\texttt{GCT}$ results into  $\texttt{GT}$, then into $\texttt{GTG}$ and then into
$\texttt{CAC}$ using, respectively, the operations $\sdel_{010}$, $\sins_{001}$ and $\dna$,
which contradicts the fact that $L_2$ satisfies $\cH$.
\pssn
\emph{Case  `$\,010\in L(\ree_1)$ and $100\in L(\ree_2)$'.}
Then, $\texttt{GAT}$ results into  $\texttt{GT}$, then into $\texttt{CGT}$ and then into
$\texttt{ACG}$ using, respectively, the operations $\sdel_{010}$, $\sins_{100}$ and $\dna$,
which contradicts the fact that $L_1$ satisfies $\cH$.
\pssn
\emph{Case  `$\,001\in L(\ree_1)$ and $010\in L(\ree_2)$'.}
Then, $\texttt{ACG}$ results into  $\texttt{AC}$, then into $\texttt{ATC}$ and then into
$\texttt{GAT}$ using, respectively, the operations $\sdel_{001}$, $\sins_{010}$ and $\dna$,
which contradicts the fact that $L_1$ satisfies $\cH$.
\pssn
\emph{Case  `$\,100\in L(\ree_1)$ and $010\in L(\ree_2)$'.}
Then, $\texttt{CAC}$ results into  $\texttt{AC}$, then into $\texttt{AGC}$ and then into
$\texttt{GCT}$ using, respectively, the operations $\sdel_{100}$, $\sins_{010}$ and $\dna$,
which contradicts the fact that $L_2$ satisfies $\cH$.
\end{proof}
\end{journal}

%%%%%%%%%%%%%%%%%%%%%%%%%%%%%%%%%%%%%%%%%%%%%%%%%%
%%%%%%%%%%%%%%%%%%%%%%%%%%%%%%%%%%%%%%%%%%%%%%%%%%
\section{The Satisfaction and Maximality Problems}\label{sec:dec-problems}

For $\theta=\id$  and for input-altering and -preserving transducers the satisfaction and maximality problems are decidable \cite{DudKon:2012}.
In particular, for a regular language $L$ given via an automaton $\auta$, Condition~\eqref{eqIATP} can be decided in time $\Oh(|\trt||\auta|^2)$, where the function $|\cdot|$ returns the size of the machine in question (its number of edges plus the length of all labels on the edges).
Condition~\eqref{eqIPTP} can be decided in time $\Oh(|\trt||\auta|^2)$, as noted in Remark~\ref{rem:input-alt-complexity}.
The maximality problem is decidable, but PSPACE-hard,
for both input-altering and -preserving transducer properties.

\begin{remark}\label{rem:input-alt-complexity}
Let $\trs = \trt\dar\auta\uar\auta$ be the transducer obtained by two product constructions: first on the input of \trt with \auta; then, on the output of the resulting transducer with \auta.
In \cite{DudKon:2012} the authors suggest to decide whether or not $L$ satisfies the input-preserving transducer property $\Wttt{\id}{\trt}$ by testing if the transducer $\trs$ is functional%
\begin{journal}
\ ($\abs{\trs(x)} \le 1$ for all $x\in\As$)
\end{journal}.
However, deciding $L\in\Wttt{\id}{\trt}$ can be done by the cheaper test of whether or not $\trs$ implements a (partial) identity function%
\begin{journal}
\ ($\trs(x) = \set x$ or $\trs(x) = \es$ for all $x\in\As$)
\end{journal}.
Using the identity test from \cite{AllMoh:2003}, we obtain that Condition~\eqref{eqIPTP} can be decided in time $\Oh(|\trt||\auta|^2)$ when the alphabet is considered constant.
Also note that the identity test does not require that $\trt$ is input-preserving if $\theta=\id$.
When $\theta$ is antimorphic, however, the identity test does not work anymore and we have to resort to the more expensive functionality test for $\theta$-input-preserving transducers.
\end{remark}

In this work we are interested in the case when $\theta\neq \id$ is antimorphic; furthermore, the \theta-input-altering or -preserving restrictions on the transducer are not necessarily present in the definition of $\WPs$ or $\SPs$.
Table~\ref{tab:un-decidability} summarizes under which conditions the satisfaction and maximality
problems are decidable for regular languages. 
\begin{journal}
For the satisfaction problem, except for the case of non-restricted transducer $\WPs$, Conditions~\eqref{eqPP} and~\eqref{eqIP} can be tested similarly to Conditions~\eqref{eqIATP} and~\eqref{eqIPTP}.
For the
case of non-restricted transducer $\WPs$, we show decidability using a different method; see
Sect.~\ref{sec:satisfaction:Itt}.
The undecidability result holds for every fixed permutation \theta over an alphabet with at least two letters, in particular, all results apply to the DNA-involution~\dna.
All maximality results are discussed in Sect.~\ref{sec:maximality}.
\end{journal}

\begin{table}[ht!]
\centering\small
\begin{tabular}{|c||c|c|c|c|}
\hline
& \multicolumn{2}{c|}{} & \multicolumn{2}{c|}{}\\[-2ex]
\multirow{2}{*}{Problem} &
\multicolumn{2}{c|}{Property \Stt}&\multicolumn{2}{c|}{Property \Wtt} \\
& no restriction & \trt is $\theta$-i.-altering
& no restriction & \trt is $\theta$-i.-preserving \\
\hline\hline
& \multicolumn{2}{c|}{} & &\\[-2ex]
Satisfaction & \multicolumn{2}{c|}{\breakingrow{decidable in $\Oh(|\trt||\auta|^2)$ \\
as in~\cite{DudKon:2012}}} &
\breakingrow{decidable \\
Theorem~\ref{thm:decide:I}
} &
\breakingrow{decidable in $\Oh(|\trt|^2|\auta|^4)$\\
as in~\cite{DudKon:2012}
} \\
\hline
& & \multicolumn{3}{c|}{}\\[-2ex]
Maximality & \breakingrow{undecidable\\
%even for $L = \es$\\
Corollary~\ref{cor:undecidable:maximality}} &
\multicolumn{3}{c|}{\breakingrow{decidable, \PSPACE-hard\\
Theorem~\ref{thm:maximality}, Corollary~\ref{cor:max:pspace}}}\\
\hline
\end{tabular}
\bigskip

\caption{(Un-)decidability of the satisfaction and the maximality problems for a fixed antimorphic permutation $\theta$, a given transducer \trt, and a  regular language $L$ given via an automaton $\auta$.}
\label{tab:un-decidability}
\end{table}

\begin{remark}
We note  that deciding the satisfaction question for any \theta-trajectory property involves testing the emptiness conditions in \eqref{eqBFP} or \eqref{eqSBFP}, which requires time $\Oh(|\auta|^2|\auta_1||\auta_2|)$, where $\auta_1,\auta_2$ are automata corresponding to $\ree_1,\ree_2$. Such a property can be expressed as \theta-transducer \SP (recall Theorem~\ref{thExpress}) using a transducer of size $\Oh(|\auta_1||\auta_2|)$ and, therefore, the satisfaction question
can still be solved within the same asymptotic time complexity.
\end{remark}

%%%%%%%%%%%%%%%%%%%%%%%%%%%%%%%%%%%%%%%%%%%%%%%%%%%%%%%%%%%%%%%%%%%%%%%%%%%%%%%%
%%%%%%%%%%%%%%%%%%%%%%%%%%%%%%%%%%%%%%%%%%%%%%%%%%%%%%%%%%%%%%%%%%%%%%%%%%%%%%%%
\subsection{The Satisfaction Problem for non-restricted \WPs}\label{sec:satisfaction:Itt}

We establish the decidability of non-restricted transducer $\WPs$ for regular languages.
We do not concern the complexity of this algorithm; optimizing the algorithm and analyzing its complexity is part of future research.
Let \trt be a transducer, \theta be an antimorphic permutation, and $L$ be a regular language over the alphabet \A.
Let \ML and \MtL be the NFAs accepting the languages $L$ and $\theta(L)$, respectively.
Let
$\trs
\begin{journal}
 = (Q_\trs,\A,E_\trs,I_\trs,F_\trs)
\end{journal}
 = \trt\dar\ML\uar\MtL$
be the product transducer such that
$y\in \trs(x)$ if and only if $y \in \trt(x)$, $x\in L$, and $y\in \theta(L)$.
\begin{journal}
We consider \trs to be \emph{trim}, \ie every state in $Q_\trs$ lies on a path that leads from an initial state to a final sate.
Furthermore, \trs is considered to be in \emph{normal form} such that every edge is either labeled $(a,\e)$ or $(\e,a)$ for some letter $a\in\A$.
Thus, for any path $p \xras{(x,y)} q$ of length $\ell$ (the path has $\ell$ edges) in $\trs$ we have $\abs{xy} = \ell$.

\begin{lemma}\label{lem:Ittpath}
Let $L$ be a regular language, \trt be a transducer, \theta be an antimorphic involution, and $\trs = \trt\dar\ML\uar\MtL$ (all defined over \A).
The regular language $L$ satisfies \Wtt if and only if for all words $x,y\in\Ap$
\[
	y\in \trs(x) \implies \theta(x) = y.
\]
\end{lemma}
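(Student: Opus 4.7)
The plan is to prove both directions by directly unfolding the product transducer definition and exploiting the fact that an antimorphic involution $\theta$ is in particular a bijection of $\As$. By construction $y \in \trs(x)$ is equivalent to the conjunction $x \in L$, $y \in \theta(L)$, $y \in \trt(x)$, and this equivalence is essentially all the machinery needed.

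For the forward direction, I would fix any $x, y \in \Ap$ with $y \in \trs(x)$, unfold $\trs$ to obtain $x \in L$, $y \in \trt(x)$, and $y = \theta(w)$ for a unique $w \in L$ (unique by bijectivity of $\theta$). If $w \neq x$ then $x \in L \sm w$ and $\theta(w) = y \in \trt(x) \sse \trt(L \sm w)$, violating the \WP at $w$; hence $w = x$ and $\theta(x) = y$, as required.

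For the backward direction I would argue by contrapositive: assume $L \notin \Wtt$, so there exist $w \in L$ and $x \in L \sm w$ with $\theta(w) \in \trt(x)$. Setting $y := \theta(w)$, we have $y \in \trt(x)$, $x \in L$, $y \in \theta(L)$, and therefore $y \in \trs(x)$. Provided $x, y \in \Ap$, the hypothesised implication forces $\theta(x) = y = \theta(w)$, and applying $\thetai$ gives $x = w$, contradicting $x \neq w$. Thus the right-hand side of the equivalence must already fail.

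The one genuine obstacle is the degenerate case $w = \e$ in the backward direction, in which $y = \theta(w) = \e$ lies outside $\Ap$ and the hypothesis of the lemma is silent. I would handle this by observing that, because $\theta$ fixes $\e$, the only way the \WP can fail at $w=\e$ is via $\e \in \trt(L \sm \e)$, which is a statement about whether $\trs$ admits an initial-to-final path whose output is empty but whose input is a nonempty word of $L$ other than $\e$; in the trim normal form assumed for $\trs$ such a path produces $(x,\e)$ with $x \neq \e$, and so this case is either excluded by the standing convention that nonempty output is required, or can be recorded as a separate side condition without affecting the main equivalence. The remainder of the argument is then purely structural, using only injectivity of $\theta$ and the identity $\trs(x) = \trt(x) \cap \theta(L)$ for $x \in L$.
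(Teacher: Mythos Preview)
Your approach is correct and essentially identical to the paper's: both argue via the contrapositive, unfold the product $\trs$, and use injectivity of $\theta$ to conclude. You actually go further than the paper by flagging the empty-word edge case; the paper's own proof simply writes ``Clearly, we have $y\in \trs(x)$ and $y \neq \theta(x)$'' without ever verifying that $x,y\in\Ap$. Your handling of that case is admittedly hand-wavy, and there is a symmetric degenerate case $x=\e$ (i.e.\ $\e\in L\setminus\{w\}$ with $w\neq\e$) that you do not discuss---but since the paper leaves both implicit, your argument matches its level of rigor.
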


\begin{proof}
We will prove the contrapositive:
$L\notin \Wtt$ if and only if there exists $x,y\in\Ap$ such that $y\in \trs(x)$ and $\theta(x) \neq y$.
Recall that $L\notin \Wtt$ if and only if there exists $w\in L$ such that $\theta(w)\in \trt(L\sm w)$.

Assume that $L\notin \Wtt$ and, therefore, $w\in L$ exists such that $\theta(w)\in \trt(L\sm w)$.
Let $x \in L\sm w$ such that $\theta(w)\in \trt(x)$ and $y = \theta(w)\in \theta(L)$.
Clearly, we have $y\in \trs(x)$ and $y \neq \theta(x)$.

Conversely, assume that $x,y\in\Ap$ exists such that $y\in \trs(x)$ and $y\neq\theta(x)$.
Let $w = \thetai(y)$ and note that $w \in L$ (because $y\in\theta(L)$), $x\in L\sm w$, and $\theta(w)\in \trt(x)\subseteq\trt(L\sm w)$.
Therefore, $L\notin \Wtt$.
\end{proof}
\end{journal}

Let $T_\trs = \sett{(x_1,x_2,x_3)\in(\As)^3}{\abs{x_1x_2x_3}\le\abs\trs}$ be a set of word triples.
Note that the length restrictions for the words ensures that $T_\trs$ is a finite set.
For each triple $t = (x_1x_2x_3)\in T_\trs$ we define a relation
\[
	R_t = \sett{(x_1(x_2)^kx_3,\theta(x_1(x_2)^kx_3))}{k\in \N}
	 \sse \As\times\As.
\]
\begin{journal}
Note that we allow that any word of $x_1,x_2,x_3$ is empty; in particular, if $x_2=x_3=\e$, then $R_t$ contains only one pair of words $(x_1,\theta(x_1))$.
\end{journal}

\begin{lemma}\label{lem:good:triples}
\begin{journal}
Let $L$ be a regular language, \trt be a transducer, \theta be an antimorphic involution, and $\trs = \trt\dar\ML\uar\MtL$ (all defined over \A).
\end{journal}
The regular language $L$ satisfies $\Wtt$ if and only if the relation realized by \trs
\begin{conference}
is included in $\>\bigcup_{t\in T_\trs} R_t$.
\end{conference}
\begin{journal}
satisfies
\begin{equation}
	\trs \sse \bigcup_{t\in T_\trs} R_t.\label{eq:good:triples}
\end{equation}
\end{journal}
\end{lemma}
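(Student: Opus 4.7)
The proof splits into two implications. The backward direction is essentially immediate: by construction every $R_t$ is contained in $\{(w,\theta(w)) : w \in \As\}$, so the hypothesis $\trs \sse \bigcup_{t \in T_\trs} R_t$ says that every pair realized by $\trs$ has the form $(w,\theta(w))$, and Lemma~\ref{lem:Ittpath} then gives $L \in \Wtt$. I would dispatch this in a single sentence.

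For the forward direction, assume $L \in \Wtt$ and pick any $(x,y) \in \trs$. By Lemma~\ref{lem:Ittpath}, $y=\theta(x)$ (the edge cases where one side is empty are handled separately). The objective is to exhibit a triple $(x_1,x_2,x_3)\in T_\trs$ and an integer $k\in\N$ with $x=x_1 x_2^k x_3$. If $|x|\le|\trs|$ the triple $(x,\e,\e)$ does the job, so assume $|x|>|\trs|$. I would fix an accepting path $\pi$ of $\trs$ in normal form realizing $(x,\theta(x))$; it has $2|x|$ edges, hence more than $|Q_\trs|$, so it visits some state twice, and I decompose $\pi=\pi_1\pi_2\pi_3$ with $\pi_2$ a simple cycle. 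Writing $(u_i,v_i)$ for the label of $\pi_i$, the candidate triple is $(u_1,u_2,u_3)$, and $x=u_1 u_2 u_3$ immediately puts $(x,\theta(x)) \in R_{(u_1,u_2,u_3)}$ via $k=1$.

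The main obstacle will be securing the length bound $|u_1 u_2 u_3|\le|\trs|$: the standard ``first repeated state'' choice bounds only $|\pi_1|+|\pi_2|$, leaving $|\pi_3|$ possibly long, and additional cycles along $\pi_3$ threaten to obstruct any single $x_1 x_2^k x_3$ decomposition. The key input is that pumping $\pi_2$ gives $(u_1 u_2^k u_3,\, v_1 v_2^k v_3)\in\trs$ for every $k\ge0$, so by Lemma~\ref{lem:Ittpath} we must have
\[
v_1 v_2^k v_3 \;=\; \theta(u_1 u_2^k u_3) \;=\; \theta(u_3)\,\theta(u_2)^k\,\theta(u_1) \qquad \text{for all } k\ge 0.
\]
A Fine-and-Wilf-style analysis of this family of equations---which crucially uses that $\theta$ is antimorphic, so $\theta(u_2)^k$ sits on the ``wrong side'' of the word---pins down $|v_2|=|u_2|$, forces $v_2$ to be conjugate to $\theta(u_2)$, and yields the folded alignment $v_1 v_3=\theta(u_3)\theta(u_1)$. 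Any further cycle in $\pi_1$ or $\pi_3$ must exhibit the same rigidity, and two cycles pumpable at distinct positions would produce a two-parameter family of pairs in $\trs$ requiring unbounded delay, which a finite transducer cannot realize without violating $L\in\Wtt$.

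Exploiting this rigidity, I iterate: as long as the reduced path $\pi_1\pi_3$ is longer than $|Q_\trs|$, I find another cycle inside it and pump it down, each step shortening the skeleton while preserving the $y=\theta(x)$ form. After finitely many reductions the surviving skeleton has length at most $|Q_\trs|\le|\trs|$, and the final triple $(x_1,x_2,x_3)$---built from the inputs along the cycle-free skeleton together with the surviving cycle input $x_2$---satisfies $|x_1 x_2 x_3|\le|\trs|$, while $x=x_1 x_2^k x_3$ for the integer $k$ that records how many copies of the cycle occurred in $\pi$. I expect the combinatorics forcing the folded alignment (and the elimination of independent parallel cycles) to be the delicate part, whereas the surrounding pumping and bookkeeping are routine.
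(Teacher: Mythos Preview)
Your backward direction matches the paper's. For the forward direction you correctly isolate the key pumping equation $v_1 v_2^k v_3 = \theta(u_3)\,\theta(u_2)^k\,\theta(u_1)$, but then head in a direction that leaves a real gap. You propose to iterate: keep excising cycles from $\pi$ until the skeleton is short, arguing that ``two cycles pumpable at distinct positions would produce a two-parameter family \ldots\ requiring unbounded delay.'' That reasoning does not hold up as stated: a two-parameter pumped family \emph{is} realized by $\trs$, and by Lemma~\ref{lem:Ittpath} every pair in it still satisfies $y=\theta(x)$; nothing about finite delay is violated. More importantly, after excising several possibly distinct cycles you still owe a proof that the \emph{original} input $x$ factors as $x_1 x_2^k x_3$ for one single period word $x_2$. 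If the path carried, say, a cycle with input label $c_1$ followed later by one with input label $c_2$, the original input has the shape $u_1 c_1^{a} u_2 c_2^{b} u_3$, and your sketch gives no mechanism for collapsing that into a power of a single $x_2$.

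The paper avoids iteration entirely and shows that a \emph{single} cycle already suffices. Choosing the first repeated state so that $|x_1 x_2 y_1 y_2|\le|\trs|$, it plugs one large exponent $i=2|x|$ into the pumping identity $x_1 x_2^i x_3=\thetai(y_3)\,\thetai(y_2)^i\,\thetai(y_1)$. Because $\theta$ is antimorphic, the growing periodic block $\thetai(y_2)^i$ sits at the \emph{right} end of the word, so for this large $i$ the suffix $x_2 x_3$ lies entirely inside $\thetai(y_2)^i\thetai(y_1)$. A one-line conjugacy/Fine-and-Wilf step then yields $x_2=uv$ with $\thetai(y_2)=vu$ and $x_3=x_2^{\,j-1}x_3'$ where $x_3'=u\,\thetai(y_1)$; hence $|x_1 x_2 x_3'|\le|x_1 x_2 y_1 y_2|\le|\trs|$ and $x=x_1 x_2^{\,j} x_3'$. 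So the antimorphic flip itself forces $x_3$ to be periodic with period $|x_2|$, which is exactly the missing piece in your length bound---no second cycle ever needs to be examined.
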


\begin{journal}
\begin{proof}
Recall that for every $(x,y) \in R_t$ with $t\in T_\trs$ we have $\theta(x) = y$.
If \trs satisfies Equation~\eqref{eq:good:triples}, then for all $(x,y)$ which are realized by \trs, we have $\theta(x) = y$;
and by Lemma~\ref{lem:Ittpath} $L$ satisfies \Wtt.

Conversely, suppose that $L$ satisfies \Wtt, let $(x,y)$ be a pair of words that is realized by \trs, and note that $\theta(x) = y$ by Lemma~\ref{lem:Ittpath}.
If $\abs{x} \le \abs\trs$, then $(x,\theta(x))=(x,y) \in R_t$ for $t = (x,\e,\e)\in T_\trs$.

Otherwise, every accepting path in \trs that is labeled by $(x,\theta(y))$ contains more than $\abs\trs$ edges, and therefore, must have a repeating state $p$
\[
	s \xras{(x_1,y_1)} p \xras{(x_2,y_2)} p \xras{(x_3,y_3)} f
\]
such that $x=x_1x_2x_3$, $\theta(x)=y_1y_2y_3$, $s\in I_\trs$, $f\in F_\trs$, $x_2y_2\neq \e$, $\abs{x_1x_2y_1y_2} \le \abs{\trs}$
(using the pigeonhole principle).
By Lemma~\ref{lem:Ittpath} for all $i\in \N$
\[
	x_1x_2^i x_3 = \thetai(y_1y_2^iy_3) = \thetai(y_3)\thetai(y_2)^i\thetai(y_1).
\]
Firstly note, that this implies $\abs{x_2} = \abs{y_2}$.
Now, consider $i = 2\abs{x}$.
Because $\abs{x_1x_2x_3} \ge \abs\trs \ge \abs{x_1x_2y_1y_2}$, we have that $\thetai(y_2)\thetai(y_1)$ is a suffix of $x_3$.
Since $i$ is sufficiently large, the suffix $x_2x_3$ of $x_1x_2^i x_3$ cannot overlap with the prefix $\thetai(y_3)$ of $x_1x_2^i x_3$.
Hence, there exists a suffix $u$ of $\thetai(y_2)$ and an integer $j\ge 2$ such that
\[
	x_2x_3 = u\thetai(y_2)^j \thetai(y_1).
\]
Chose $v$ such that $\thetai(y_2) = vu$ and note that $x_2 = uv$ because $\abs{x_2} = \abs{y_2}$ (this argument is a special case of the well-known Fine and Wilf's Theorem).
Let $x_3' = u\thetai(y_1)$ and observe that $x_3 = u(vu)^{j-1} \thetai(y_1) = x_2^{j-1}x_3'$.
Furthermore, $\abs{x_1x_2x_3'} \le \abs{x_1x_2y_1y_2}\le \abs{\trs}$.
We conclude that $(x,\theta(x)) = (x_1x_2^{j}x_3',\theta(x_1x_2^{j}x_3')) \in R_t$ for $t = (x_1,x_2,x_3')\in T_\trs$.
\end{proof}

In order to test whether or not Equation~\eqref{eq:good:triples} is satisfied, we perform two separate tests.
Firstly, we test whether or not \trs satisfies the weaker condition
\begin{equation}
	\trs \sse \bigcup_{(x_1,x_2,x_3)\in T_s} ( x_1x_2^*x_3 ) \times
		\theta(x_1 x_2^* x_3).\label{eq:test1}
\end{equation}
Secondly, we ensure that
\begin{equation}
	\forall x,y\colon y\in \trs(x) \implies \abs x = \abs y.\label{eq:test2}
\end{equation}

\begin{lemma}\label{lem:two:tests}
Equation~\eqref{eq:good:triples} is satisfied if and only if Equations~\eqref{eq:test1} and~\eqref{eq:test2} are satisfied.
\end{lemma}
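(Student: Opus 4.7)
The plan is to handle the two implications separately; the forward direction is essentially unpacking definitions, and the reverse direction uses a length-matching argument to pin down the exponent of the pumped factor.

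First I would observe that (good:triples) implies both (test1) and (test2) almost immediately. Every relation $R_t$ with $t=(x_1,x_2,x_3)$ is by definition a subset of $(x_1 x_2^* x_3)\times \theta(x_1 x_2^* x_3)$, so the inclusion in (good:triples) is a fortiori an instance of (test1). Moreover, every pair $(u,\theta(u))\in R_t$ satisfies $|u|=|\theta(u)|$ because $\theta$ is a permutation (extended antimorphically), so (test2) holds as well.

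For the reverse direction, I would pick an arbitrary pair $(x,y)$ realized by $\trs$ and show that it lies in some $R_t$ with $t\in T_\trs$. Applying (test1) gives a triple $(x_1,x_2,x_3)\in T_\trs$ and nonnegative integers $k,j$ with $x=x_1 x_2^k x_3$ and $y=\theta(x_1 x_2^j x_3)$. If $x_2=\e$, then the set $R_{(x_1,\e,x_3)}$ is the singleton $\{(x_1 x_3,\theta(x_1 x_3))\}$, and this already equals $(x,y)$; note that $(x_1,\e,x_3)\in T_\trs$ holds because $|x_1 x_3|\le |x_1 x_2 x_3|\le |\trs|$. If on the other hand $|x_2|\ge 1$, then (test2) yields $|x_1 x_2^k x_3|=|x|=|y|=|\theta(x_1 x_2^j x_3)|=|x_1 x_2^j x_3|$, which forces $k|x_2|=j|x_2|$ and therefore $k=j$. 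In this case $(x,y)=(x_1 x_2^k x_3,\theta(x_1 x_2^k x_3))\in R_{(x_1,x_2,x_3)}$, completing the proof.

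The only step that is not a formality is recognising that (test1) alone does not force the exponent on the left and the exponent on the right to agree, since a pair in $(x_1 x_2^* x_3)\times \theta(x_1 x_2^* x_3)$ allows two independent choices of exponent; this is exactly why the auxiliary condition (test2) is imposed, and the length-equality it provides is what locks $k=j$ whenever $x_2$ is nonempty. Everything else is bookkeeping.
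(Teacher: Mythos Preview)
Your proof is correct and follows essentially the same approach as the paper's own proof: both directions are argued in the same way, with the forward direction by definition-unpacking and the reverse direction by a length comparison forcing the two exponents to agree. Your explicit case split on whether $x_2=\e$ is in fact slightly more careful than the paper, which simply asserts ``it is clear that $i=j$'' from $|x|=|y|$ without noting that this inference requires $x_2\neq\e$ (though the conclusion there is harmless either way).
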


\begin{proof}
If Equation~\eqref{eq:good:triples} is satisfied, then Equation~\eqref{eq:test1} is satisfied because $R_{(x_1,x_2,x_3)} \sse ( x_1x_2^*x_3 ) \times
\theta(x_1 x_2^* x_3)$ for $(x_1,x_2,x_3)\in T_\trs$.
Also note that for all $(x,y)\in R_t$ with $t\in T_\trs$ we have $\abs x = \abs y$; therefore, Equation~\eqref{eq:good:triples} implies Equation~\eqref{eq:test2}.

Conversely, assume that Equations~\eqref{eq:test1} and~\eqref{eq:test2} are satisfied.
For all $(x,y)$ that are realized by \trs we have there exists $(x_1,x_2,x_3)\in T_\trs$ and $i,j\in\N$ such that $x = x_1x_2^i x_3$ and $y = \theta(x_1 x_2^j x_3)$.
Since the equation $\abs x = \abs y$ must also be satisfied, it is clear that $i = j$ and, hence, $(x,y)\in R_{(x_1,x_2,x_3)}$.
We conclude that Equations~\eqref{eq:test1} and~\eqref{eq:test2} imply Equation~\ref{eq:good:triples}.
\end{proof}
\end{journal}

\begin{conference}
The inclusion in Lemma~\ref{lem:good:triples} is decidable performing the following two tests:
\begin{inparaenum}[1.)]
\item
verify that
$\trs\sse \bigcup_{(x_1,x_2,x_3)\in T_s} ( x_1x_2^*x_3 ) \times	\theta(x_1 x_2^* x_3)$; and
\item
verify that $\abs x = \abs y$ for all pairs $(x,y)$ that label an accepting path in \trs.
\end{inparaenum}
Note that the inclusion test can be performed because the right-hand-side relation is recognizable \cite{Be:1979}.
The second test follows the same ideas as the algorithm outlined in \cite{AllMoh:2003} which decides whether or not a transducer implements a partial identity function.
\end{conference}

\begin{theorem}\label{thm:decide:I}
Let $L$ be a regular language given as automaton, \trt be a given transducer, and \theta be a given antimorphic involution (all defined over \A).
It is decidable whether $L$ satisfies $\Wtt$ or not.
\end{theorem}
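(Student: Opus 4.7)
The plan is to turn the characterization of Lemma~\ref{lem:good:triples} into two effective tests on the product transducer $\trs = \trt\dar\ML\uar\MtL$. First I would construct $\MtL$ from $\ML$ by reversing edges and applying $\theta$ letter-wise (legitimate because antimorphic permutations preserve regularity), then form $\trs$ by the standard product construction, trim it, and put it in normal form so that every edge label is either $(a,\e)$ or $(\e,a)$. At this point $\abs\trs$ is computable, hence the set $T_\trs$ of length-bounded triples is finite and effectively enumerable, and Lemma~\ref{lem:good:triples} reduces the decision problem to checking whether the rational relation realized by $\trs$ is contained in $\bigcup_{t\in T_\trs} R_t$.

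By Lemma~\ref{lem:two:tests}, this containment splits into two independent tests. \emph{Test 1} asks whether $\trs\sse\bigcup_{(x_1,x_2,x_3)\in T_\trs}(x_1x_2^*x_3)\times\theta(x_1x_2^*x_3)$. The right-hand side is a recognizable relation over $\As\times\As$, being a finite union of Cartesian products of regular languages, and the inclusion of a rational relation in a recognizable one is decidable by classical results: intersect $\trs$ with the complement of each rectangular component and test emptiness of the resulting transducer. \emph{Test 2} asks that $\trs$ be length-preserving, i.e.\ $y\in\trs(x)\Rightarrow \abs x=\abs y$. In normal form this is an arithmetic condition: assign weight $+1$ to each input-reading edge and weight $-1$ to each output-reading edge, and verify that every accepting path has total weight zero; this reduces to checking that every reachable cycle of $\trs$ has weight zero together with a single check on acyclic initial-to-final paths. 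Alternatively, one may invoke the algorithm of \cite{AllMoh:2003}, adapted to detect length-preservation instead of the identity property, in the same spirit as Remark~\ref{rem:input-alt-complexity}.

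If both tests succeed, then $L\in\Wtt$ by Lemma~\ref{lem:good:triples}; otherwise $L\notin\Wtt$, and decidability follows. The main obstacle is not the decidability argument itself, which is a direct assembly of closure and emptiness results, but rather avoiding a naive enumeration of $T_\trs$, whose cardinality can be exponential in $\abs\trs$. A cleaner implementation builds a single automaton of polynomial size recognizing the right-hand side of Test~1, turning that step into one containment check between a rational and a recognizable relation. As already stated after the theorem, the precise complexity is left to future work; the construction above already delivers the required decidability.
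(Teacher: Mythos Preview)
Your proposal is correct and follows essentially the same approach as the paper: both invoke Lemmas~\ref{lem:good:triples} and~\ref{lem:two:tests} to reduce to the two tests, then handle Test~1 by inclusion of a rational relation in a recognizable one (the paper cites \cite{Be:1979}) and Test~2 by a state-labelling/weighting argument in the spirit of \cite{AllMoh:2003}. The only cosmetic difference is that the paper phrases Test~2 as assigning to each state the value $|x|-|y|$ along a path from an initial state and rejecting if any state receives two distinct values or a final state receives a nonzero one, whereas you phrase it in terms of edge weights and zero-weight cycles; these are equivalent formulations.
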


\begin{journal}
\begin{proof}
According to Lemmas~\ref{lem:good:triples} and~\ref{lem:two:tests} we have to decide whether or not the two Equations~\eqref{eq:test1} and~\eqref{eq:test2} are satisfied for the transducer $\trs = \trt\dar\ML\uar\MtL$.
It is known that it is decidable whether or not a given transducer is included in a recognizable relation (that is a relation $\bigcup_{i=1}^n A_i\times B_i$ for regular $A_i,B_i$); see \cite{Be:1979}.
Therefore, the inclusion in Equation~\eqref{eq:test1} is decidable.

The property in Equation~\eqref{eq:test2} can be verified by an algorithm that assigns an integer to each state in \trs:
the integer $i$ is assigned to $q\in Q_\trs$ if there exists a path $s\xras{(x,y)} q$ from a starting state $s\in I_\trs$ such that $i = \abs{x}-\abs{y}$.
The test fails if a state is assigned two distinct integers or if a final state from $F_\trs$ is assigned an integer different from $0$; otherwise, the test is successful.
Assigning the integers can be done by a simple depth-first traversal of \trs.
We omit further details on the implementation of this algorithm as it can be done analogously to the test whether or not a given transducer implements a (partial) identity function which can be found in \cite{AllMoh:2003}.
\end{proof}
\end{journal}

%%%%%%%%%%%%%%%%%%%%%%%%%%%%%%%%%%%%%%%%%%%%%%%%%%%%%%%%%%%%%%%%%%%%%%%%%%%%%%%%
%%%%%%%%%%%%%%%%%%%%%%%%%%%%%%%%%%%%%%%%%%%%%%%%%%%%%%%%%%%%%%%%%%%%%%%%%%%%%%%%
\subsection{The Maximality Problem}\label{sec:maximality}

Here we show how to decide maximality of a regular language $L$ with respect to a \theta-transducer property; see Theorem~\ref{thm:maximality}.
This result only holds when we consider \WPs or when we consider \SPs for \theta-input-altering transducers.
As in the case of existing transducer properties, it turns out that the maximality problem is \PSPACE-hard; see Corollary~\ref{cor:max:pspace}.
When we consider general \SPs, the maximality problem becomes undecidable; see
Corollary~\ref{cor:undecidable:maximality}.

\begin{theorem}\label{thm:maximality}
For an antimorphic permutation $\theta$, a transducer $\trt$, and a regular language $L$, all defined over $\Aks$, such that either
\begin{conference}
$L\in \Wtt$, or $L\in \Stt$ and \trt is $\theta$-input altering, we have that
\end{conference}
\begin{journal}
\begin{enumerate}[i.)]
\item
$L\in \Wtt$ or
\item
$L\in \Stt$ and \trt is $\theta$-input altering,
\end{enumerate}
\end{journal}
$L$ is maximal with property $\Wtt$ (\resp $\Stt$) if and only if
\begin{equation}
	L\cup \thetai(\trt(L)) \cup \trti(\theta(L)) = \Aks.
	\label{eq:maximality}
\end{equation}
\end{theorem}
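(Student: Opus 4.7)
The plan is to unfold the definition of maximality and translate the condition ``$L\cup\set{w}$ satisfies the property'' into set-membership conditions on $w$, then recognize those as precisely the non-membership of $w$ in the two sets $\thetai(\trt(L))$ and $\trti(\theta(L))$. First, I would rewrite the two operators in more convenient form: $w\in \thetai(\trt(L))$ iff $\theta(w)\in \trt(L)$ (since $\theta$ is a bijection), and $w\in \trti(\theta(L))$ iff $\trt(w)\cap \theta(L)\neq\es$ (by definition of $\trti$). Consequently the equality in~\eqref{eq:maximality} fails exactly when there is some $w\in L^c$ with $\theta(w)\notin \trt(L)$ \emph{and} $\trt(w)\cap \theta(L)=\es$, so the task reduces to showing that these two conditions on $w$ are equivalent to ``$L\cup\set{w}$ still satisfies the property'' under each of the two hypotheses.

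For the \WP case, I would fix $w\in L^c$ and verify, directly from Condition~\eqref{eqIP}, that $L\cup\set{w}\in\Wtt$ iff
\begin{inparaenum}[(a)]
\item $\theta(v)\notin \trt((L\sm v)\cup\set{w})$ for every $v\in L$, and
\item $\theta(w)\notin\trt(L)$.
\end{inparaenum}
Since $L\in\Wtt$ already gives $\theta(v)\notin\trt(L\sm v)$, condition~(a) collapses to $\theta(v)\notin\trt(w)$ for every $v\in L$, i.e.\ $\trt(w)\cap\theta(L)=\es$. Combined with~(b) this is precisely the pair of membership conditions above, so $L$ is non-maximal iff a $w$ violating~\eqref{eq:maximality} exists.

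For the \SP case with \trt being $\theta$-input-altering, I would expand $\theta(L\cup\set{w})\cap \trt(L\cup\set{w})$ into four pairwise intersections. The intersection $\theta(L)\cap\trt(L)$ is empty because $L\in\Stt$; the intersection $\set{\theta(w)}\cap\trt(\set{w})$ is empty because \trt is $\theta$-input-altering; the remaining two are exactly the conditions $\theta(w)\notin \trt(L)$ and $\theta(L)\cap\trt(w)=\es$. Again these are the negations of $w\in\thetai(\trt(L))$ and $w\in\trti(\theta(L))$, so the equivalence follows by the same bookkeeping as in the \WP case.

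There is no serious obstacle here; the only subtle point is the role of the $\theta$-input-altering hypothesis in the \SP case, needed to discard the diagonal term $\theta(w)\in\trt(w)$ which would otherwise rule out adding \emph{any} $\theta$-palindrome-like $w$ and thereby break the characterization. I would therefore be careful to state explicitly where this hypothesis is invoked, and to note that in the \WP case no analogous assumption is required because the definition~\eqref{eqIP} already excludes the $v=w$ diagonal by using $L\sm w$ rather than $L$.
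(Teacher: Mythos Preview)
Your proposal is correct and follows essentially the same approach as the paper: both arguments reduce non-maximality to the existence of a $w\in L^c$ with $\theta(w)\notin\trt(L)$ and $\trt(w)\cap\theta(L)=\es$, verify that these are exactly the conditions for $L\cup\{w\}$ to retain the property, and in the \SP case use the $\theta$-input-altering hypothesis to kill the diagonal term $\theta(w)\in\trt(w)$. Your presentation is arguably a bit cleaner (making the four-term expansion in the \SP case explicit and noting upfront the equivalences $w\in\thetai(\trt(L))\Leftrightarrow\theta(w)\in\trt(L)$ and $w\in\trti(\theta(L))\Leftrightarrow\trt(w)\cap\theta(L)\neq\es$), but the substance is identical.
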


\begin{journal}
\begin{proof}
\begin{inparaenum}[\em i.)\ ]
\item
Suppose $L\cup \thetai(\trt(L)) \cup \trti(\theta(L)) = \Aks$.
For every word $w\in L^c$ we have $\theta(w)\in\trt(L)$ or $w\in \trti(\theta(L))$.
In the former case, we immediately obtain that $L\cup w$ does not satisfy \Wtt.
In the latter case, there exists $u\in L$ such that $\theta(u) \in \trt(w)$, and therefore, $L \cup w$ does not satisfy \Wtt.
We conclude that $L$ is maximal with respect to \Wtt

Conversely, suppose there exists a word $w$ such that $w\notin L\cup \thetai(\trt(L)) \cup \trti(\theta(L))$.
Clearly, $w\in L^c$.
Furthermore, we must have $\theta(w)\notin \trt(L)$ and $\theta(u)\notin \trt(w)$ for all $u\in L$.
Since $L\in \Wtt$, we also have that $\theta(u)\notin \trt(L\sm u)$ for all $u\in L$.
Thus, we obtain that $\forall u\in (L\cup w) \colon \theta(u)\notin\trt((L\cup w)\sm u)$, and therefore, $L$ is not maximal with respect to \Wtt.

\item
Suppose $L\cup \thetai(\trt(L)) \cup \trti(\theta(L)) = \Aks$.
For all $w\in L^c$ we have $\theta(w) \cap \trt(L) \neq \es$ or $\trt(w) \cap \theta(L) \neq \es$.
Thus, $L \cup w$ does not satisfy \Stt and $L$ is maximal with respect to \Stt

Conversely, suppose there exists a word $w$ such that $w\notin L\cup \thetai(\trt(L)) \cup \trti(\theta(L))$.
Hence, $\theta(w) \cap \trt(L)  = \es$ and  $\trt(w) \cap \theta(L) = \es$.
Furthermore, we have $\theta(L) \cap \trt(L) = \es$ because $L\in\Wtt$ and $\theta(w)\cap \trt(w)=\es$ because $\trt$ is \theta-input-altering.
We conclude that $L\cup w$ satisfies \Wtt, and therefore, $L$ is not maximal with respect to \Wtt.
\end{inparaenum}
\end{proof}
\end{journal}

We note that it is \PSPACE-hard to decide whether or not Equation~\eqref{eq:maximality} holds when $L$ is given as NFA because it is \PSPACE-hard to decide universality of a regular language given as NFA ($L\sse\Aks$ is universal if $L = \Aks$) \cite{StMe1973}.

\begin{corollary}\label{cor:max:pspace}
For an antimorphic permutation $\theta$, a transducer $\trt$, and a regular language $L$ given as NFA, all defined over $\Aks$, such that either
\begin{conference}
$L\in \Wtt$, or $L\in \Stt$ and \trt is $\theta$-input altering, we have that
\end{conference}
\begin{journal}
\begin{compactenum}[i.)]
\item
$L\in \Wtt$ or
\item
$L\in \Stt$ and \trt is $\theta$-input altering,
\end{compactenum}
\end{journal}
it is \PSPACE-hard to decide whether or not $L$ is maximal with property $\Wtt$ (\resp $\Stt$).
\end{corollary}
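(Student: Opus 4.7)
\emph{Proof proposal.} The plan is to give a polynomial-time reduction from NFA universality over $\Aks$---which is \PSPACE-hard by \cite{StMe1973}---to each of the two maximality problems, using Theorem~\ref{thm:maximality}. By that theorem, under its hypotheses, $L$ is maximal iff
\[
	L\cup \thetai(\trt(L))\cup \trti(\theta(L)) = \Aks.
\]
So it suffices to exhibit an instance in which the two auxiliary terms $\thetai(\trt(L))$ and $\trti(\theta(L))$ vanish, collapsing the equation to $L = \Aks$.

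First I would fix any antimorphic permutation $\theta$ of $\Aks$; such permutations exist for every $k\ge 2$, \eg the mirror image $\id_k^R$. Then I would take $\trt_0$ to be the \emph{empty transducer}, \ie a finite-state machine realizing the empty relation (for instance, a single-state transducer with no final states). Three facts then hold immediately: $(i)$ $\trt_0$ is $\theta$-input-altering, vacuously, since $\trt_0(w)=\es$ for every $w\in\Ap$; $(ii)$ every language $L$ satisfies both $\Stt[\trt_0]$ and $\Wtt[\trt_0]$, because $\theta(L)\cap\trt_0(L)=\es$; and $(iii)$ for every $L$ the maximality equation above simplifies to $L=\Aks$. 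In particular, the hypotheses of Theorem~\ref{thm:maximality} are met for any input regular language $L$, both for the weak and for the strict version of the property.

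The reduction then maps an NFA $\auta$ over $\Aks$ to the instance $(\theta,\trt_0,\auta)$, which is computable in constant additional time. By Theorem~\ref{thm:maximality} together with $(iii)$, $L(\auta)$ is maximal (weakly, \resp strictly) iff $L(\auta) = \Aks$, \ie iff $\auta$ is universal, yielding the desired \PSPACE-hardness. There is no substantive obstacle: the empty transducer trivializes the property so that the maximality question collapses directly to universality. A refinement yielding the same lower bound via a non-trivial transducer describing a natural code property would require more work, but the corollary as stated does not demand this.
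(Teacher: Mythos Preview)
Your proposal is correct and is essentially the same as the paper's own proof: the paper also fixes the empty transducer $\trt_\es$ (a transducer with no final state), notes that every $L$ then satisfies the property, and invokes Theorem~\ref{thm:maximality} to reduce maximality to NFA universality.
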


\begin{journal}
\begin{proof}
According to Theorem~\ref{thm:maximality} deciding maximality of $L$ with property $\Wtt$ (\resp $\Stt$) is equivalent to deciding universality of $L\cup \thetai(\trt(L)) \cup \trti(\theta(L))$.
Let $\trt_\es$ be a transducer without final state which does not accept any pair of words.
Now, $L$ is maximal with property $\Stt[\trt_\es]$ (\resp $\Wtt[\trt_\es]$) if and only if $L$ is universal---a problem which is known to be \PSPACE-hard.
\end{proof}
\end{journal}

In the rest of this section we show that it is undecidable whether or not a transducer is $\theta$-input-preserving.
This question relates directly to the maximality problem of the empty language $\es$ with respect to the property \Stt, as stated in Corollary~\ref{cor:undecidable:maximality}.
\begin{conference}
The following Theorem can be proven using a reduction from the famous, undecidable Post correspondence problem (PCP) to the problem of deciding whether a given transducer is \theta-input-preserving or not.
\end{conference}
\begin{journal}
We will reduce the famous, undecidable Post correspondence problem to the problem of deciding whether or not a given transducer is \theta-input-preserving.

\begin{definition}\label{def:pcp}
The \emph{Post correspondence problem} (PCP) is, given words $\alpha_0,\alpha_1,\ldots,\alpha_{\ell-1} \in \Sigma^+$ and $\beta_0,\beta_1,\ldots,\beta_{\ell-1} \in \Sigma^+$, decide whether or not there exists a non-empty sequence of integers $i_1,i_2,\ldots,i_n\in\Al = \set{0,1,\ldots,\ell-1}$ such that
\[
	\alpha_{i_1}\alpha_{i_2}\cdots \alpha_{i_n} = \beta_{i_1}\beta_{i_2}\cdots \beta_{i_n}.
\]
It is well-known that the PCP is undecidable, even if $\Sigma=\Abin$ is the binary alphabet.
\end{definition}
\end{journal}

\begin{theorem}\label{thm:input:preserving}
For every fixed antimorphic permutation $\theta$ over $\Aks$ with $k\ge 2$ it is undecidable whether or not a given transducer is $\theta$-input-preserving.
\end{theorem}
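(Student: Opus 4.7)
The plan is to reduce from the Post Correspondence Problem (PCP), undecidable already for pairs of morphisms over the binary alphabet $\Abin$. Given a PCP instance $\cP = ((\alpha_i,\beta_i))_{i=0}^{\ell-1}$ with $\alpha_i,\beta_i \in \Abin^+$, I will construct a transducer $\trt_{\cP}$ over $\Aks$ such that $\trt_{\cP}$ is $\theta$-input-preserving if and only if $\cP$ has no solution. First I reduce to the reversal case: writing $\theta = \pi \circ {}^R$, where ${}^R$ is the reversal on $\Aks^*$ and $\pi$ is the morphic extension of the underlying letter-permutation of $\theta$, and noting that $\pi$ is realized by a trivial letter-by-letter functional transducer $\trs_{\pi}$, I build $\trt_{\cP}$ by composing a simpler transducer $\trt'$ (designed for the reversal case) with $\trs_{\pi}$ on the output side, so that $\trt_{\cP}(w) = \pi(\trt'(w))$. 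A direct calculation gives $\theta(w) \in \trt_{\cP}(w)$ iff $w^R \in \trt'(w)$, so $\trt_{\cP}$ is $\theta$-input-preserving iff $\trt'$ is ${}^R$-input-preserving. Hence it suffices to prove undecidability for $\theta = {}^R$.

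\textbf{The construction for $\theta = {}^R$.}
I encode PCP solution candidates as $\theta$-palindromic inputs. For a designated separator letter $\sigma \in \Aks$, candidate encodings take the form $w = s\,\sigma\,s^R$, where $s$ interleaves an index sequence $(i_1,\ldots,i_n)$ with the concatenations $\alpha_{i_1}\cdots\alpha_{i_n}$ and $\beta_{i_1}\cdots\beta_{i_n}$ subject to a fixed regular syntax. The set of well-formed encodings $L_\mathrm{can}\sse \Aks^+$ is regular, and every $w \in L_\mathrm{can}$ satisfies $\theta(w)=w^R=w$; thus for such $w$ the condition $\theta(w)\in\trt'(w)$ collapses to the rational condition $w\in\trt'(w)$. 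The transducer $\trt'$ is then built to realize
\[
R(\trt') \;=\; \bigl(L_\mathrm{can}^c \times \Aks^*\bigr) \,\cup\, \set{(w,w) : w \in L_\mathrm{can} \text{ and } w \text{ does not encode a solution of } \cP},
\]
where the first component is a universal-fallback branch handling every input outside $L_\mathrm{can}$, and the second is a nondeterministic PCP-check branch that outputs $w$ precisely when it can locally witness that $w$ is not a valid PCP solution. The palindromic structure of $w = s\,\sigma\,s^R$ gives the transducer two synchronized passes over $s$---once forward, once backward through $s^R$---and this redundancy, together with the interleaved encoding, makes it possible for a finite-state machine to nondeterministically guess and verify a position of mismatch between the $\alpha$- and $\beta$-concatenations recorded in $w$ (or a length mismatch).

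\textbf{Correctness and main obstacle.}
With this construction, $\trt'$ fails ${}^R$-input-preservation on some $w$ iff $w$ is a well-formed encoding of a genuine PCP solution, whence $\trt'$ (and therefore $\trt_{\cP}$) is $\theta$-input-preserving iff $\cP$ has no solution, transferring undecidability of PCP to the $\theta$-input-preservation problem. The chief obstacle throughout is that the graph of an antimorphic $\theta$ is not a rational relation, so one cannot directly ``puncture'' the universal relation at the pairs $(w,\theta(w))$ that would witness PCP solutions. The workaround is precisely the restriction to $\theta$-palindromic candidate encodings: it replaces the non-rational condition $v = \theta(w)$ by the rational condition $v = w$ on $L_\mathrm{can}$, while the palindromic mirror structure supplies the two passes over the candidate that are needed to confine the PCP verification inside a finite-state transducer.
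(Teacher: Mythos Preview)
Your reduction from general antimorphic $\theta$ to the reversal $\theta={}^R$ via post-composition with the letter-permutation transducer $\trs_\pi$ is correct and is a pleasant simplification that the paper does not make. The difficulty lies entirely in your construction for the reversal case, which has two genuine gaps.

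First, you assert that $L_{\mathrm{can}}$ is regular and that every $w\in L_{\mathrm{can}}$ is a palindrome $w=s\sigma s^R$. These two claims are incompatible as soon as $L_{\mathrm{can}}$ must contain encodings of arbitrarily long index sequences: an infinite regular language consisting entirely of palindromes over an alphabet with at least two letters is extremely constrained (essentially a finite union of sets of the form $u v^* u^R$ with $v$ itself a palindrome), and in particular cannot carry the unbounded information of an arbitrary index sequence $i_1,\ldots,i_n$. So either $L_{\mathrm{can}}$ is not regular, or it does not contain all candidate encodings.

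Second, and more fundamentally, your second branch is the diagonal $\{(w,w): w\in L_{\mathrm{can}},\ w\text{ not a solution}\}$. A relation of the form $\{(w,w):w\in K\}$ is rational if and only if $K$ is regular (its domain equals $K$, and domains of rational relations are regular). Hence you would need the set of non-solution encodings to be regular, which fails for generic PCP instances. Your ``two synchronized passes'' intuition does not rescue this: when the transducer computes the identity pair $(w,w)$, its behaviour is exactly that of a finite automaton reading $w$ once, so seeing $s$ and then $s^R$ gives no power beyond an NFA scanning $s\sigma s^R$. Any desynchronization between input and output is useless here because to emit a letter of $w$ that was read long ago the transducer would have to have stored it, and it has only finite memory.

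The paper's proof avoids both problems by \emph{not} forcing $w$ to be a palindrome. It encodes $w=h(uv)$ with $u$ the purported solution word and $v$ the index string, so that $\theta(w)=\theta(h(v))\,\theta(h(u))$. The antimorphism places the (reversed) index blocks at the \emph{front} of the output while the solution word sits at the \emph{front} of the input; a one-state loop with edges $(h(\alpha_i),\theta(h(i)))$ can then match them block-by-block using only finite memory, and accept exactly when a mismatch occurs. That alignment of input-prefix against output-prefix is precisely what your palindromic setup destroys, since there input and output are the same word and no such complementary structure is available.
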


\begin{journal}
\begin{proof}
Let $\alpha_0,\alpha_1,\ldots,\alpha_{\ell-1} \in \Sigma^+$ and $\beta_0,\beta_1,\ldots,\beta_{\ell-1} \in \Sigma^+$ be the PCP instance $\cA$.
We will define a transducer $\trt_\cA$ which accepts all pairs $(w,\theta(w))$ unless $w$ is a binary encoding of a word $uv$ where $u\in \Sigma^+$ and $v\in \Alp$ such that $v$ describes an integer sequence $i_1,i_2,\ldots,i_n$ that is a solution of \cA and $u$ is the corresponding solution word.
For the ease of notation, we assume that $\Sigma$ and \Al are two disjoints alphabet and we let $\Gamma = \Sigma\cup \Al$ be their union.
For $m = \ceil{\log_2\abs{\Gamma}}$, we let $h\colon \Gamma \to \Abinm$ be a morphic block code; \ie an encoding of $\Gamma$ into binary words of length $m$ such that $h(a) = h(b)$ implies $a = b$ for all $a,b\in \Gamma$.
Our goal is to define $\trt_\cA$ such that $\theta(w)\notin \trt_\cA(w)$ if and only if $w = h(uv)$ for $u\in\Sigma^+$, $v\in\Alp$, $n = \abs v$, and
\[
	u = \alpha_{v\pos n}\alpha_{v\pos{n-1}}\cdots \alpha_{v\pos 1} = \beta_{v\pos n}\beta{v\pos{n-1}}\cdots \beta_{v\pos 1}.
\]

The transducer $\trt_\cA$ will consist of $3$ effectively constructable components $\trt_R$, $\trt_\alpha$, and $\trt_\beta$.
Each component can be seen as a fully functional transducer such that $\trt_\cA$ becomes the union of the three transducers; this implies that
\[
	y\in \trt_\cA(x) \iff
	y \in \trt_{R}(x)\cup \trt_\alpha(x) \cup\trt_\beta(x).
\]
Each transducer component ``validates'' a certain property of a word $w$, by accepting all word pairs $(w,\theta(w))$ which do not have that property:
\begin{enumerate}[1.)]
\item
$\trt_{R}$ accepts $(w,\theta(w))$ if and only if $w \notin h(\Sigma^+\Alp)$;
\item
for $w\in h(uv)$ with $u\in\Sigma^+$ and $v\in\Alp$, $\trt_\alpha$ accepts $(w,\theta(w))$ if and only if $u \neq \alpha_{v\pos n}\alpha_{v\pos{n-1}}\cdots \alpha_{v\pos 1}$; and
\item
for $w\in h(uv)$ with $u\in\Sigma^+$ and $v\in\Alp$, $\trt_\beta$ accepts $(w,\theta(w))$ if and only if $u \neq \beta_{v\pos n}\beta_{v\pos{n-1}}\cdots \beta_{v\pos 1}$.
\end{enumerate}
The first component ensures that every pair $(w,\theta(w))$ that is not accepted by $\trt_\cA$ must have the desired form $w\in h(uv)$ with $u\in\Sigma^+$ and $v\in\Alp$.
Components $\trt_\alpha$ and $\trt_\beta$ ensure that
\[
	\alpha_{v\pos n}\alpha_{v\pos{n-1}}\cdots \alpha_{v\pos 1} = u =
	\beta_{v\pos n}\beta_{v\pos{n-1}}\cdots \beta_{v\pos 1}
\]
is the solution word that corresponds the integer sequence $v\pos n,v\pos{n-1},\ldots,v\pos 1$ if $(w,\theta(w))$ is not accepted by $\trt_\cA$.
Therefore, every word pair $(w,\theta(w))$ which is not accepted by $\trt_\cA$ yields a solution for \cA and, vice versa, every solution for \cA yields a word pair $(w,\theta(w))$ that cannot be accepted by $\trt_\cA$.
We conclude that $\trt_\cA$ is $\theta$-input-preserving if and only if the PCP instance \cA has no solution.
This implies that for fixed antimorphic $\theta$ over $\Aks$ with $k\ge2$ it is undecidable whether or not a given transducer is $\theta$-input-preserving because the PCP is undecidable.

Now, let us describe the transducer component $\trt_R$ and recall that it has to work over the alphabet $\Ak$.
It is well known that for any two regular languages $R_1$ and $R_2$ there effectively exists a transducer which accepts the relation $R_1\times R_2$.
There is $\trt_R$ such that $\trt_R = (\Aks\sm h(\Sigma^+\Alp))\times \Aks$.
It is easy to observe that we have $\trt_R(w) = \Aks$ if $w\notin h(\Sigma^+\Alp)$, and $\trt_R(w) = \es$ if $w\in h(\Sigma^+\Alp)$.
Therefore, we have $\theta(w)\notin\trt_R(w)$ if and only if $w\notin h(\Sigma^+\Alp)$.
Note that this in particular implies that, if $\theta(w)\notin \trt_R(w)$, then $w\in h(\Gamma^*)\sse (\Abinm)^*$.
The other two transducer components $\trt_\alpha$ and $\trt_\beta$ will only work over word pairs from $h(\Gamma^*)\times \theta(h(\Gamma^*))$.

\begin{figure}[ht!]
\begin{transducer}[node distance=6cm]
	\node [state,initial] (s) {$s_z$};
	\node [state,accepting,right of=s] (a) {$f_z$};
	\node [node distance=1cm,left=of s,anchor=east] {$\trt_z\colon{}$};
	\path (s) edge [loop above] node [above]
			{$\forall i\in\Al\colon{}(h(z_i),\theta(h(i)))$} ()
		(s) edge node [below]
			{$\forall i\in\Al,z'\in\Sigma^{\le\abs{z_i}}\sm\pref(z_i)\colon{}$\\
				$(h(z'),\theta(h(i)))$}
			node [above]
			{$\forall i,j\in\Al\colon(h(i),\theta(h(j)))$\\
				$\forall a,b\in\Sigma\colon(h(a),\theta(h(b)))$} (a)
		(a) edge [loop above] node [above]
			{$\forall a\in\Gamma\colon(h(a),\e)$\\
				$\forall a\in\Gamma\colon(\e,\theta(h(a)))$} ();
\end{transducer}
\caption{For $z\in\set{\alpha,\beta}$ the two transducers $\trt_{\alpha}$ and $\trt_\beta$ enforce that $w$ encodes a solution of the PCP instance $\cA$ if $\theta(w)\notin (\trt_{\alpha}+\trt_\beta)(w)$ and $w\in h(\Sigma^+\Alp)$.}
\label{trans:zpcp}
\end{figure}

Finally, we define the two transducers $\trt_\alpha$ and $\trt_\beta$ which are based on the words $\alpha_i$ and $\beta_i$, respectively.
For $z\in\set{\alpha,\beta}$ we define $\trt_z$ as shown in Fig.~\ref{trans:zpcp}.
For a pair of words $(x,y)\in \trt_z$, it is easy to see that $x\in h(\Gamma^*)$ and $y\in \theta( h(\Gamma^*))$.
Furthermore, the edges from the final state $f_z$ to itself ensure that if $(x,y)\in\theta$, then for all words $x'\in h(\Gamma^*)$ and $y'\in \theta(h(\Gamma^*))$, we have $(xx',yy')\in \trt_z$ (we will not leave the final state anymore once it is reached, unless the word pair is not defined over $h(\Gamma^*)\times \theta(h(\Gamma^*))$).
There are three possibilities to switch from state $s_z$ to the final state $f_z$:
\begin{enumerate}[1.)]
\item
we read a word from $h(\Al)$ in the first component and a words from $\theta(h(\Al))$ in the second component;
\item
we read a word from $h(\Sigma)$ in the first component and a words from $\theta(h(\Sigma))$ in the second component; or
\item
we read the word $\theta(h(i))$ with $i\in\Al$ in the second component and in the first component we read a word $h(z')$ such that $z'$ is not a prefix of $z_i$ and $z_i$ is not a prefix $z'$ because of the length restriction on $z'$.
\end{enumerate}
For $x\in h(\Gamma^*)$ let $u$ denote the longest word in $\Sigma^*$ such that $h(u)$ is a prefix of $x$ (thus, either $x = h(u)$ or $x = h(u i x')$ for an integer $i\in \Al$ and $x'\in\Gamma^*$);
and for $y\in\theta(h(\Gamma^*))$ let $v$ denote the longest word in $\Als$ such that $\theta(h(v))$ is a prefix of $y$ and let $n = \abs v$ (thus, either $y = \theta(h(v))$ or $y = \theta(h(y'av)) = \theta(h(v))\theta(h(a))\theta(h(y'))$ for a symbol $a\in \Sigma$ and $y'\in\Gamma^*$).
Because $\theta(h(v\pos n))\theta(h(v\pos{n-1}))\cdots \theta(h(v\pos 1))$ is a prefix of $y$ we obtain that the pair $(x,y)$ is accepted by $\trt_z$ if $u \neq z_{v\pos n}z_{v\pos{n-1}} \cdots z_{v\pos 1}$.
Conversely, if $u = z_{v\pos n}z_{v\pos{n-1}} \cdots z_{v\pos 1}$, then $(h(u),\theta(h(v)))$ labels a path from $s_z$ to $s_z$;
since there is no edge from $s_z$ which is labeled $(h(i),\e)$, $(\e,\theta(h(a)))$, or $(h(i),\theta(h(a)))$ for $i\in\Al$ and $a\in\Sigma$, we obtain that $(x,y)$ cannot not be accepted by $\trt_z$.

Suppose $\theta(w)\notin \trt_z(w)$ and $w\in h(uv)$ for words $u\in \Sigma^+$ and $v\in \Alp$.
Following our notion from the previous paragraph, $u$ is the longest word in $\Sigma^*$ such that $h(u)$ is a prefix of $w$, and $v$ is the longest word in $\Als$ such that $\theta(h(v))$ is a prefix of $\theta(w)$.
Therefore, we obtain that $u = z_{v\pos n} \cdots z_{v\pos 1}$.
\end{proof}
\end{journal}

This leads to the undecidability of the maximality problem of a regular language $L$ with respect to a $\theta$-transducer-property \Stt.

\begin{corollary}\label{cor:undecidable:maximality}
For every fixed antimorphic permutation $\theta$ over $\Aks$ with $k\ge 2$, it is undecidable whether or not the empty language $\es$ is maximal with respect to the property $\Stt$, for a given transducer $\trt$.
\end{corollary}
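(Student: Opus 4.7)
The plan is to reduce the problem of deciding whether a given transducer is $\theta$-input-preserving, which is undecidable by Theorem~\ref{thm:input:preserving}, to the maximality problem for $\es$ with respect to $\Stt$. The key observation is that $\es$ trivially satisfies $\Stt$, since $\theta(\es) \cap \trt(\es) = \es$, so it is maximal precisely when every singleton extension $\{w\}$ fails the property. Unfolding the definition, $\{w\}$ satisfies $\Stt$ iff $\theta(w) \notin \trt(w)$, hence $\es$ is maximal with respect to $\Stt$ iff $\theta(w) \in \trt(w)$ for every word $w \in \As$.

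This condition almost coincides with $\theta$-input-preservation of $\trt$, the only discrepancy being the empty word $w = \e$. I would remove this discrepancy via the following effective construction: given a transducer $\trt$ over $\Ak$, form $\trt'$ by adjoining a fresh state that is both initial and final and carries no transitions. Then $\trt'$ accepts the pair $(\e,\e)$ but otherwise realises the same relation as $\trt$, so $\theta(w) \in \trt'(w)$ for every $w \in \As$ if and only if $\theta(w) \in \trt(w)$ for every $w \in \Ap$, which is exactly the definition of $\trt$ being $\theta$-input-preserving.

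Putting the two steps together, any decision procedure for maximality of $\es$ with respect to $\cS_{\theta,\trt'}$ would decide $\theta$-input-preservation of $\trt$, contradicting Theorem~\ref{thm:input:preserving}; this yields undecidability of the maximality question for $\es$ in the stated generality. The proof is a one-step reduction whose only subtlety is the treatment of the empty word, which the small modification of $\trt$ above takes care of, so I do not anticipate any substantial technical obstacle.
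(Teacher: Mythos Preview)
Your proposal is correct and follows essentially the same reduction as the paper: the paper likewise observes that $\es$ satisfies $\Stt$, that a singleton $\{w\}$ satisfies $\Stt$ iff $\theta(w)\notin\trt(w)$, and hence that maximality of $\es$ is equivalent to $\theta$-input-preservation of $\trt$, which is undecidable by Theorem~\ref{thm:input:preserving}. Your explicit handling of the empty word via the modification $\trt\mapsto\trt'$ is more careful than the paper's argument, which simply identifies the two conditions without commenting on $w=\e$; in the paper's concrete reduction (proof of Theorem~\ref{thm:input:preserving}) the constructed transducer $\trt_\cA$ happens to accept $(\e,\e)$ anyway, so the issue does not arise there, but your treatment makes the corollary self-contained.
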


\begin{conference}
Note that a singleton language $\set{w}$ satisfies $\Stt$ if and only if $\theta(w)\notin \trt(w)$.
Thus, the corollary follows because $\es$ is maximal with property $\Stt$ if and only if $\trt$ is $\theta$-input-preserving.
\end{conference}

\begin{journal}
\begin{proof}
Clearly, the empty language satisfies $\Stt$.
For a word $w$, the language $\set{w}$ satisfies $\Stt$ if and only if $\theta(w)\notin \trt(w)$.
Therefore, $\es$ is maximal with property $\Stt$ if and only if $\trt$ is $\theta$-input-preserving.
Theorem~\ref{thm:input:preserving} concludes the proof.
\end{proof}
\end{journal}

%%%%%%%%%%%%%%%%%%%%%%%%%%%%%%%%%%%%%%%%%%%%%%%%%%%%%%%%%%%%%%%%%%%%%%%%%%%%%%%%
\section{Undecidability of the \theta-PCP and the \theta-input-altering Transducer Problem}\label{sec:PCP}

Analogous to the undecidable PCP%
\begin{journal}
\ (see Definition~\ref{def:pcp})
\end{journal},
we introduce the \theta version of the PCP and prove that it is undecidable as well; see
Theorem~\ref{thm:tPCP}.
Further, we utilize the \theta version of the PCP in order to show that it is undecidable whether or not a transducer is \theta-input-altering; see Corollary~\ref{cor:input:altering}.

\begin{definition}\label{def:tPCP}
For a fixed antimorphic permutation $\theta$ over $\Aks$, we introduce the \emph{\theta-Post correspondence problem} (\theta-PCP): given words $\alpha_0,\alpha_1,\ldots,\alpha_{\ell-1}\in\Akp$ and $\beta_0,\beta_1,\ldots,\beta_{\ell-1}\in\Akp$, decide whether or not there exists a non-empty sequence of integers $i_1,\ldots,i_n\in\Al=\set{0,1,\ldots,\ell-1}$ such that
\[
	\alpha_{i_1}\alpha_{i_2}\cdots \alpha_{i_n} = \theta(\beta_{i_1}\beta_{i_2}\cdots \beta_{i_n}).
\]
\end{definition}

\begin{theorem}\label{thm:tPCP}
For every fixed antimorphic permutation $\theta$ over $\Aks$ with $k\ge 2$ the $\theta$-PCP is undecidable.
\end{theorem}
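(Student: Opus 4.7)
The plan is to reduce the classical Post Correspondence Problem, which is undecidable over $\Akp$ for every $k\ge 2$, to the $\theta$-PCP. Given a classical PCP instance $(\alpha_i,\beta_i)_{i=0}^{\ell-1}$ with $\alpha_i,\beta_i\in\Akp$, I would first translate it into a $\theta$-PCP instance by setting $\alpha_i'=\alpha_i$ and $\beta_i'=\thetai(\beta_i)$. Because $\theta$ is antimorphic and $\theta\circ\thetai$ is the identity on $\Aks$, for every nonempty sequence $i_1,\ldots,i_n$
\[
  \theta(\beta_{i_1}'\cdots\beta_{i_n}')
  =\theta(\thetai(\beta_{i_n}))\cdots\theta(\thetai(\beta_{i_1}))
  =\beta_{i_n}\cdots\beta_{i_1}.
\]
Hence the $\theta$-PCP condition $\alpha_{i_1}'\cdots\alpha_{i_n}'=\theta(\beta_{i_1}'\cdots\beta_{i_n}')$ coincides with the \emph{reverse-order} PCP condition $\alpha_{i_1}\cdots\alpha_{i_n}=\beta_{i_n}\cdots\beta_{i_1}$, and it suffices to prove that the reverse-order PCP on $\Akp$ is undecidable.

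For the undecidability of the reverse-order PCP I would adapt Post's classical reduction from the halting problem of a deterministic Turing machine. The standard argument yields PCP pairs whose only solutions encode an accepting trace $\#C_0\#C_1\#\cdots\#C_f\#$, built from ``transition'' pairs simulating a single step, together with a ``start'' pair launching $C_0$ and an ``end'' pair closing $C_f$. In the reverse-order setting the $\beta$-sides of those pairs must be redesigned so that their backward concatenation $\beta_{i_n}\cdots\beta_{i_1}$ reproduces, left to right, the same trace that the forward $\alpha$-side concatenation produces. A fresh marker letter $\#\in\Ak$ (obtained via a binary block encoding when $k=2$) is used to render the start and end pairs rigidly identifiable at the two extremal positions of the index sequence.

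The main obstacle is ruling out spurious solutions: in Post's original proof the uniqueness of the boundary pairs and the rigid alignment of the transition blocks is automatic because both sides are read in the same direction, whereas here the two sides are read in opposite directions. This is compensated for by asymmetric markers that can only occur at one end of the index sequence, together with careful length bookkeeping forcing any candidate matching to decompose into exactly one start pair, one end pair, and a valid block of transition pairs appearing in the correct computation order. Once the reverse-order PCP is shown undecidable this way, the translation $\beta_i\mapsto\thetai(\beta_i)$ above transfers the undecidability uniformly to the $\theta$-PCP for every fixed antimorphic permutation $\theta$ over $\Aks$ with $k\ge2$, proving the theorem.
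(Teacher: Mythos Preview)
Your first step is correct and clean: the substitution $\beta_i'=\theta^{-1}(\beta_i)$ turns the $\theta$-PCP into the \emph{reverse-order} PCP, asking whether $\alpha_{i_1}\cdots\alpha_{i_n}=\beta_{i_n}\cdots\beta_{i_1}$ for some nonempty index sequence. The reduction is a bijection between instances, so undecidability of the reverse-order PCP would indeed give the theorem.

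The gap is in the second step. You do not prove that the reverse-order PCP is undecidable; you only announce an adaptation of Post's Turing-machine reduction and wave at ``asymmetric markers'' and ``careful length bookkeeping''. The difficulty you name is real: Post's reduction works because at every stage one side's concatenation is a prefix of the other's, and this running prefix invariant dictates the only viable next index. When the $\beta$-side is read in reverse index order, that invariant disappears---the first $\beta$-block placed is $\beta_{i_n}$, which must match the \emph{end} of the $\alpha$-word that has not yet been written. Markers at the extremal positions do not by themselves recreate a step-by-step forcing mechanism in the interior, and nothing in your sketch explains how a candidate match is prevented from mixing transition pairs in an order that does not correspond to a computation. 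As written, this part is a plan, not a proof.

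The paper takes a different and fully explicit route: it reduces the ordinary PCP (not the halting problem) to the $\theta$-PCP by a concrete combinatorial encoding. Two morphisms $g,h\colon A_2^*\to A_2^*$ with $g(0)=00$, $g(1)=01$, $h(0)=10$, $h(1)=11$ partition $A_2^2$. From a PCP instance $(\alpha_j,\beta_j)_{j<\ell}$ one builds the $\theta$-PCP instance $(\gamma_j,\delta_j)_{j<\ell+2}$ with $\gamma_j=g(\alpha_j)$, $\delta_j=\theta^{-1}(h(\beta_j^R))$ for $j<\ell$, plus two ``letter-copying'' pairs $(\gamma_{\ell+b},\delta_{\ell+b})=(h(b),\theta^{-1}(g(b)))$ for $b\in\{0,1\}$. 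A PCP solution $w$ yields the $\theta$-PCP solution word $g(w)\,h(w^R)$; conversely, because every length-two block lies in exactly one of $g(A_2)$, $h(A_2)$, any $\theta$-PCP solution word splits into a maximal $g$-prefix and an $h$-suffix, from which a PCP solution can be read off. This two-zone trick is precisely what absorbs the index reversal, and it is the non-obvious idea your sketch is missing.
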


\begin{journal}
\begin{proof}
In order to prove that \theta-PCP is undecidable, we will state an effective reduction of any PCP instance \cA over alphabet $\Abin$ to a \theta-PCP instance \cT over alphabet \Ak such that \cA has a solution if and only if \cT has a solution.
Let $\alpha_0,\alpha_1,\ldots,\alpha_{\ell-1}\in\Abinp$ and $\beta_0,\beta_1,\ldots,\beta_{\ell-1}\in\Abinp$ be an instance of the PCP which we call $\cA$.

Note that $\theta$ and $\thetai$ are well-defined over $\Abin \sse \Ak$.
We define two morphisms $g,h$ on $\Abin^*$ such that
\begin{align*}
	g(0) &= 00, &
	g(1) &= 01, &
	h(0) &= 10, &
	h(1) &= 11.
\end{align*}
Note that for each pair of letters $z\in\Abin^2$ we have either $z\in h(\Abin)$ or $z\in g(\Abin)$.
Moreover, we let
\begin{align*}
	\gamma_j &= g(\alpha_j), & \delta_j &= \thetai(h(\beta_j^R)),
	&&\text{for }j = 0,\ldots,\ell-1, \\
	\gamma_{\ell} &= h(0), & \delta_{\ell} &= \thetai(g(0)), \\
	\gamma_{\ell+1} &= h(1), & \delta_{\ell+1} &= \thetai(g(1)).
\end{align*}
be the \theta-PCP instance \cT.

\begin{figure}[ht!]
\centering
\begin{tikzpicture}[font=\footnotesize,text depth=1ex]
	\draw [|-|] (0,0) -- (5.5,0);
	\node (A0) [anchor=south, align=center] at (.5,0) {$\gamma_{i_1}$};
	\node (A1) [node distance=1.5cm, right of=A0, align=center] {$\gamma_{i_2}$};
	\node (A2) [node distance=1.5cm,right of=A1,align=center] {$\cdots$};
	\node (A3) [node distance=1.5cm,right of=A2,align=center] {$\gamma_{i_n}$};

	\node (B0) [anchor=north, align=center] at (.5,0) {$\theta(\!\delta_{i'_1}\!)$};
	\node (B1) [node distance=.75cm,right of=B0,align=center] {$\theta(\!\delta_{i'_2}\!)$};
	\node (B2) [node distance=.75cm,right of=B1,align=center] {$\theta(\!\delta_{i'_3}\!)$};
	\node (B3) [node distance=.75cm,right of=B2,align=center] {$\theta(\!\delta_{i'_4}\!)$};
	\node (B4) [node distance=.75cm,right of=B3,align=center] {$\theta(\!\delta_{i'_5}\!)$};
	\node (B5) [node distance=.75cm,right of=B4,align=center] {$\cdots$};
	\node (B6) [node distance=.75cm,right of=B5,align=center] {$\theta(\!\delta_{i'_{m}}\!)$};
	
	\draw [decorate,decoration={brace,mirror}] (0,-.6) --
		node [below=3pt] {$g(w)$} (5.5,-.6);

	\draw [-|] (5.5,0) -- (11,0);
	\node (C0) [anchor=south, align=center] at (10.5,0) {$\gamma_{i'_1}$};
	\node (C1) [node distance=.75cm,left of=C0,align=center] {$\gamma_{i'_2}$};
	\node (C2) [node distance=.75cm,left of=C1,align=center] {$\gamma_{i'_3}$};
	\node (C3) [node distance=.75cm,left of=C2,align=center] {$\gamma_{i'_4}$};
	\node (C4) [node distance=.75cm,left of=C3,align=center] {$\gamma_{i'_5}$};
	\node (C5) [node distance=.75cm,left of=C4,align=center] {$\cdots$};
	\node (C6) [node distance=.75cm,left of=C5,align=center] {$\gamma_{i'_{m}}$};

	\node (D0) [anchor=north, align=center] at (10.5,0) {$\theta(\delta_{i_1})$};
	\node (D1) [node distance=1.5cm, left of=D0, align=center] {$\theta(\delta_{i_2})$};
	\node (D2) [node distance=1.5cm,left of=D1,align=center] {$\cdots$};
	\node (D3) [node distance=1.5cm,left of=D2,align=center] {$\theta(\delta_{i_n})$};

	\draw [decorate,decoration={brace}] (5.5,.6) --
		node [above] {$h(w^R)$} (11,.6);

\end{tikzpicture}
\caption{Transforming the solution $i_1,i_2,\ldots,i_n$ of the PCP instance \cA into the solution $i_1,i_2,\ldots,i_n$, $i'_{m},i'_{m-1},\ldots,i'_1$ of the $\theta$-PCP instance \cT; all variables are defined in the text.}
\label{fig:tpcp}
\end{figure}

First, let us show that if \cA has a solution than \cT has a solution as well.
Let $i_1,i_2,\ldots, i_n \in \Al$ with $n \ge 1$ be a solution of the PCP instance \cA and let $w$ be the word corresponding to this solution; \ie
\[
	w = \alpha_{i_1}\alpha_{i_2}\cdots \alpha_{i_n} = \beta_{i_1}\beta_{i_2}\cdots \beta_{i_n}.
\]
Figure~\ref{fig:tpcp} illustrates the following construction.
Let $m = \abs w$.
For $j = 1,\ldots,m$ we let $i'_j = \ell$ if $w\pos j= 0$ and $i'_j = \ell+1$ if $w\pos j = 1$; these indeces are chosen such that
\begin{align*}
	\gamma_{i'_{m}}\gamma_{i'_{m-1}}\cdots \gamma_{i'_{1}} &= h(w^R), \\
	\delta_{i'_{m}}\delta_{i'_{m-1}}\cdots \delta_{i'_{1}} &=
	\thetai(g(w\pos{m}))\thetai(g(w\pos{m-1})) \cdots \thetai(g(w\pos 1)) =
	\thetai(g(w)).
\end{align*}
The integer sequence $i_1,i_2,\ldots,i_n$, $i'_{m},i'_{m-1},\ldots,i'_1$ is a solution of the $\theta$-PCP instance $f(\alpha)$ because
\begin{alignat*}{2}
	\theta(\delta_{i_1}\cdots \delta_{i_n} \delta_{i'_{m}} \cdots \delta_{i'_{1}})
	&= \theta(\delta_{i'_{m}}\cdots \delta_{i'_{1}}) &&\cdot
		\theta( \delta_{i_{n}})\cdots \theta( \delta_{i_{1}}) \\
	&= \theta(\thetai(g(w))) &&\cdot
		\theta( \thetai(h(\beta_{i_{n}}^R)))
		 \cdots \theta( \thetai(h(\beta_{i_{1}}^R))) \\
	&= g(w) &&\cdot h(\beta_{i_{n}}^R) \cdots h(\beta_{i_{1}}^R) \\
	&= g(\alpha_{i_1})\cdots g(\alpha_{i_n}) &&\cdot h(w^R) \\
	&= \gamma_{i_1}\cdots \gamma_{i_n} &&\cdot \gamma_{i'_{m}} \cdots \gamma_{i'_{1}} .
\end{alignat*}

Vice versa, let $i_1,i_2,\ldots, i_n\in \A_{ell+2}$ with $n \ge 1$ be a solution of the $\theta$-PCP instance \cT and let $w$ be the word corresponding to this solution, that is,
\[
	w = \gamma_{i_1}\gamma_{i_2}\cdots \gamma_{i_n} = \theta(\beta_{i_1}\beta_{i_2}\cdots \beta_{i_n})
	= \theta(\beta_{i_n})\cdots \theta(\beta_{i_2})\theta(\beta_{i_1}).
\]
Recall that for every word $\gamma_{i_j}$ we have that either $\gamma_{i_j}\in g(\Abin^+)$ (in case $i_j < \ell$) or $\gamma_{i_j}\in h(\Abin)$ (in case $i_j \ge \ell$).
Since $g(\Abin)$ and $h(\Abin)$ contain mutually distinct two-letter words, for every pair of letters $p = w\pos[2r-1]{2r}$ with $r\in\N$: if $p \in g(\Abin)$, then $p$ is covered by a factor $\gamma_{i_j}$ with ${i_j} < \ell$; and if $p \in h(\Abin)$, then $p$ equals to a factor $\gamma_{i_j}$ with ${i_j} \ge \ell$.
Symmetrically, for $p = w\pos[2r-1]{2r}$ with $r\in\N$: if $p \in h(\Abin)$, then $p$ is covered by a factor $\theta(\delta_{i_j})$ with ${i_j} < \ell$; and if $p \in g(\Abin)$, then $p$ equals to a factor $\theta(\delta_{i_j})$ with ${i_j} > \ell$.

\begin{figure}[ht]
\centering
\begin{tikzpicture}[font=\footnotesize,text depth=1ex]
	\draw [|-|] (0,0) -- (5.5,0);
	\node (A0) [anchor=south, align=center] at (.5,0) {$\gamma_{i_1}$};
	\node (A1) [node distance=1.5cm, right of=A0, align=center] {$\gamma_{i_2}$};
	\node (A2) [node distance=1.5cm,right of=A1,align=center] {$\cdots$};
	\node (A3) [node distance=1.5cm,right of=A2,align=center] {$\gamma_{i_{n'}}$};

	\node (B0) [anchor=north, align=center] at (5,0) {$\theta(\!\delta_{i_j}\!)$};
	\node (B1) [node distance=.9cm,left of=B0,align=center] {$\theta(\!\delta_{i_{j\!{+}\!1}}\!)$};
	\node (B2) [node distance=.9cm,left of=B1,align=center] {$\theta(\!\delta_{i_{j\!{+}\!1}}\!)$};
	\node (B3) [node distance=.9cm,left of=B2,align=center] {$\theta(\!\delta_{i_{j\!{+}\!1}}\!)$};
	\node (B4) [node distance=.9cm,left of=B3,align=center] {$\cdots$};
	\node (B5) [node distance=.9cm,left of=B4,align=center] {$\theta(\!\delta_{i_n}\!)$};
	
	\draw [decorate,decoration={brace}] (0,.6) --
		node [above] {$\in g(\Abin^+)$} (5.5,.6);

	\draw [dashed] (5.5,0) -- (6.5,0);

	\draw [|-|] (6.5,0) -- (12,0);
	\node (C0) [anchor=south, align=center] at (7,0) {$\gamma_{i_j}$};
	\node (C1) [node distance=.9cm,right of=C0,align=center] {$\gamma_{i_{j\!{+}\!1}}$};
	\node (C2) [node distance=.9cm,right of=C1,align=center] {$\gamma_{i_{j\!{+}\!2}}$};
	\node (C3) [node distance=.9cm,right of=C2,align=center] {$\gamma_{i_{j\!{+}\!3}}$};
	\node (C4) [node distance=.9cm,right of=C3,align=center] {$\cdots$};
	\node (C5) [node distance=.9cm,right of=C4,align=center] {$\gamma_{i_{n}}$};

	\node (D0) [anchor=north, align=center] at (11.5,0) {$\theta(\delta_{i_1})$};
	\node (D1) [node distance=1.5cm, left of=D0, align=center] {$\theta(\delta_{i_2})$};
	\node (D2) [node distance=1.5cm,left of=D1,align=center] {$\cdots$};
	\node (D3) [node distance=1.5cm,left of=D2,align=center] {$\theta(\delta_{i_n'})$};

	\draw [decorate,decoration={brace}] (6.5,.6) --
		node [above] {$\in h(\Abin^+)$} (12,.6);

\end{tikzpicture}
\caption{Transforming the solution $i_1,i_2,\ldots,i_n$ of the $\theta$-PCP instance \cT into the solution $i_1,i_2,\ldots,i_{n'}$ of the PCP instance \cA; all variables are defined in the text.}
\label{fig:tpcp2}
\end{figure}

Consider the case where $i_1 < \ell$.
Figure~\ref{fig:tpcp2} illustrates the following construction.
In this case, $\gamma_{i_1} = g(\alpha_{i_1})$ is a prefix of $w$ and $\theta(\delta_{i_1}) = h(\beta_{i_1}^R)$ is a suffix of $w$; thus, $w\pos[1]{2}\in g(\Abin)$ and $w\pos[\abs w-1]{\abs w} \in h(\Abin)$.
Further, we obtain that $i_n \ge \ell$ because $\gamma_{i_n}$ has to cover $w\pos[\abs w-1]{\abs w} \in h(\Abin)$.
There exists an integer $n'$ with $1\le n' < n$ such that $i_1,i_2,\ldots, i_{n'}<\ell$ but $i_{n'+1} \ge \ell$.
We will show that the sequence $i_1,i_2,\ldots,i_{n'}$ is a solution of the PCP instance \cA by comparing the longest prefix of $w$ which belongs to $g(\Abin^+)$ with the longest suffix of $w$ which belongs to $h(\Abin^+)$.
Let $m$ be an even integer such that $w\pos[1]{m} \in g(\Abin^+)$ but $w\pos[m+1]{m+2} \in h(\Abin)$.
Because $i_{n'+1}$ has to match with the first letter pair in $w$ which belongs to $h(\Abin)$, it is not difficult to see that
\[
	w\pos[1]{m} = \gamma_{i_1}\gamma_{i_2}\cdots \gamma_{i_{n'}} = g(\alpha_{i_1}\alpha_{i_2}\cdots \alpha_{i_{n'}}).
\]
Because $w\pos[1]{m} \in g(\Abin^+)$ and $w\pos[m+1]{m+2} \in h(\Abin)$, there exists an integer $j < n$ such that $i_j,i_{j+1}\ldots,i_n \ge \ell$, $i_{j-1} < \ell$, and
\begin{align*}
	w\pos[1]{m} &= \theta(\delta_{i_j}\delta_{i_{j+1}}\cdots \delta_{i_n})
		= \theta(\delta_{i_n})\cdots\theta(\delta_{i_{j+1}}) \theta(\delta{i_j}).
\end{align*}
Due to the design of the word pairs $(\gamma_{\ell}, \delta_{\ell})$ and $(\gamma_{\ell+1}, \delta_{\ell+1})$ and because
\[
\theta(\delta_{i_n})\cdots \theta(\delta_{i_{j+1}})\theta(\delta_{i_j}) = g(\alpha_{i_1}\alpha_{i_2}\cdots \alpha_{i_{n'}})
\]
is a prefix of $w$, we have that $\gamma_{i_j}\gamma_{i_{j+1}}\cdots \gamma_{i_n} = h((\alpha_{i_1}\alpha_{i_2}\cdots \alpha_{i_{n'}})^R)$ is a suffix of $w$.
Since $i_{j-1} < \ell$, we see that this suffix $h((\alpha_{i_1}\alpha_{i_2}\cdots \alpha_{i_{n'}})^R)$ of $w$ is preceded by a letter pair from $g(\Abin)$.
This implies that the suffix $\theta(\delta_{i_{n'}}) \cdots \theta(\delta_{i_2})\theta(\delta_{i_1})$ of $w$ equals $h((\alpha_{i_1}\alpha_{i_2}\cdots \alpha_{i_{n'}})^R)$.
Therefore,
\begin{align*}
	h((\alpha_{i_1}\alpha_{i_2}\cdots \alpha_{i_{n'}})^R)
	&= \theta(\delta_{i_{n'}}) \cdots \theta(\delta_{i_2})\theta(\delta_{i_1}) \\
	&= h(\beta_{i_{n'}}^R) \cdots h(\beta_{i_2}^R)h(\beta_{i_1}^R) \\
	&= h((\beta_{i_{1}}\beta_{i_{2}} \cdots \beta_{i_{n'}})^R).
\end{align*}
We conclude that $\alpha_{i_1}\alpha_{i_2}\cdots \alpha_{i_{n'}} = \beta_{i_{1}}\beta_{i_{2}} \cdots \beta_{i_{n'}}$ and, therefore, $i_1,i_2,\ldots,i_{n'}$ is a solution of the PCP instance \cA.

The case when $i_1 \ge \ell$ can be treated analogously, where we compare the longest prefix of $w$ which belongs to $h(\Abin^+)$ and the longest suffix of $w$ which belongs to $g(\Abin^+)$.
In this case, there exists $n' \le n$ such that $i_{n'},i_{n'+1},\ldots, i_n$ is a solution of the PCP instance \cA.
\end{proof}
\end{journal}

We can utilize the $\theta$-PCP in order to prove that it is undecidable whether or not a transducer is $\theta$-input-altering, even for one-state transducers.

\begin{corollary}\label{cor:input:altering}
For every fixed antimorphic permutation $\theta$ over $\Aks$ with $k\ge 2$ it is undecidable whether or not a given (one-state) transducer is $\theta$-input-altering.
\end{corollary}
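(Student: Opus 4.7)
The plan is to reduce the $\theta^{-1}$-PCP to the problem of deciding whether a given one-state transducer is $\theta$-input-altering. Since $\theta$ is antimorphic, so is $\theta^{-1}$, and therefore Theorem~\ref{thm:tPCP} applies to $\theta^{-1}$ and gives us that the $\theta^{-1}$-PCP is undecidable over $\Aks$ for every $k\ge 2$.

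Given an instance of the $\theta^{-1}$-PCP with words $\alpha_0,\dots,\alpha_{\ell-1},\beta_0,\dots,\beta_{\ell-1}\in\Akp$, I would construct the one-state transducer $\trt$ whose single state is both initial and final and which carries, for each $i\in\set{0,\dots,\ell-1}$, a self-loop labelled by the pair $(\alpha_i,\beta_i)$. This construction is clearly effective and produces a transducer with exactly one state. By inspection, the pairs realized by $\trt$ are precisely
\[
\set{(\alpha_{i_1}\cdots\alpha_{i_n},\beta_{i_1}\cdots\beta_{i_n})\mid n\ge 0,\ i_1,\dots,i_n\in\set{0,\dots,\ell-1}}.
\]

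The key step is to observe the equivalence: $\trt$ fails to be $\theta$-input-altering if and only if the $\theta^{-1}$-PCP instance has a solution. Indeed, $\trt$ is not $\theta$-input-altering iff there exists $w\in\Ap$ with $\theta(w)\in\trt(w)$, which (because each $\alpha_i\in\Akp$, forcing $n\ge 1$ and hence $w\neq\e$) is equivalent to the existence of a nonempty index sequence $i_1,\dots,i_n$ such that $w=\alpha_{i_1}\cdots\alpha_{i_n}$ and $\theta(w)=\beta_{i_1}\cdots\beta_{i_n}$. Applying $\theta^{-1}$ to the latter equality yields $\alpha_{i_1}\cdots\alpha_{i_n}=\theta^{-1}(\beta_{i_1}\cdots\beta_{i_n})$, which is precisely the $\theta^{-1}$-PCP solution condition on the given instance.

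I do not expect a genuine technical obstacle in this proof: the reduction is straightforward once one notices that a one-state transducer with self-loops exactly models the concatenation semantics of a PCP-style instance. The only subtlety is bookkeeping $\theta$ versus $\theta^{-1}$, which is why the reduction is phrased from the $\theta^{-1}$-PCP rather than directly from the $\theta$-PCP; this is harmless because both are undecidable by Theorem~\ref{thm:tPCP}. Combined with Corollary~\ref{cor:undecidable:maximality}, which reduces undecidability of the maximality problem for $\Stt$ to deciding $\theta$-input-preservation, this argument completes the picture for both the $\theta$-input-altering and $\theta$-input-preserving decision problems.
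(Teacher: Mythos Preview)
Your proof is correct and follows essentially the same approach as the paper: both reduce a PCP variant to the $\theta$-input-altering question via a one-state transducer whose self-loops encode the instance. The only cosmetic difference is that the paper reduces directly from the $\theta$-PCP using edge labels $(\alpha_i,\theta^2(\beta_i))$ (exploiting that $\theta^2$ is morphic so the outputs concatenate correctly), whereas you reduce from the $\theta^{-1}$-PCP with the plainer labels $(\alpha_i,\beta_i)$; your bookkeeping with $\theta^{-1}$ achieves the same effect and is arguably a touch cleaner. Your final paragraph's reference to Corollary~\ref{cor:undecidable:maximality} is tangential and unnecessary here, but it does no harm to the argument.
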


\begin{conference}
Corollary~\ref{cor:input:altering} follows because the $\theta$-PCP instance $\alpha_0,\ldots,\alpha_{\ell-1}$, $\beta_0,\ldots,\beta_{\ell-1}$ has a solution if and only if the the one-state transducer $\trt$ with edges $q\xra{(\alpha_i,\theta^2(\beta_i))}q$ for $i = 0,\ldots,\ell-1$ is not $\theta$-input-altering.
\end{conference}

\begin{journal}
\begin{proof}
Let $\alpha_0,\alpha_1,\ldots,\alpha_{\ell-1}\in\Akp$ and $\beta_0,\beta_1,\ldots,\beta_{\ell-1}\in\Akp$ be the $\theta$-PCP instance $\cA$.
We let $\trt_\cA$ be the one-state transducer shown in Fig.~\ref{trans:input-altering}.
Clearly, we have $y \in \trt_\cA(x)$ if and only if there exists an integer sequence $i_1,i_2,\ldots,i_n\in \Al$ such that $x = \alpha_{i_1}\alpha_{i_2}\cdots \alpha_{i_n}$ and $y = \theta^2(\beta_{i_1})\theta^2(\beta_{i_2})\cdots \theta^2(\beta_{i_n}) = \theta^2(\beta_{i_1}\beta_{i_2}\cdots \beta_{i_n})$; note that $\theta^2$ is always morphic, even if $\theta$ is not.

\begin{figure}[ht]
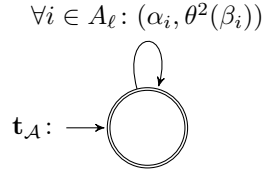

\begin{transducer}
	\node [state,initial,accepting] (q) {};
	\node [node distance=1cm,left=of q,anchor=east] {$\trt_\cA\colon{}$};
	\path [->] (q) edge [loop above] node [above]
		{$\forall i\in \Al\colon(\alpha_i,\theta^2(\beta_i))$} ();
\end{transducer}
\caption{$\trt_\cA$ encodes the $\theta$-PCP instance $\alpha_0,\alpha_1,\ldots,\alpha_{\ell-1}$, $\beta_0,\beta_1,\ldots,\beta_{\ell-1}$.}
\label{trans:input-altering}
\end{figure}

Recall that it is allowed for \theta-input-altering transducers to accept the empty word pair $(\e,\e)$.
We have $w \in \thetai(\trt_\cA(w))$ for some word $w\in \Akp$ if and only if there exists an integer sequence $i_1,i_2,\ldots,i_n$ such that
\[
	\alpha_{i_1}\alpha_{i_2}\cdots\alpha_{i_n} = w
	= \thetai\left(\theta^2(\beta_{i_1}\beta_{i_2}\cdots \beta_{i_n})\right)
	= \theta(\beta_{i_1}\beta_{i_2}\cdots \beta_{i_n}).
\]
Therefore, $\trt_\cA$ is $\theta$-input-altering if and only if the $\theta$-PCP instance \cA has a solution.
Theorem~\ref{thm:tPCP} concludes the proof.
\end{proof}
\end{journal}

%%%%%%%%%%%%%%%%%%%%%%%%%%%%%%%%%%%%%%%%%%%%%%%%%%%%%%%%%%%%%%%%%%%%%%%%%%%%%%%%
%%%%%%%%%%%%%%%%%%%%%%%%%%%%%%%%%%%%%%%%%%%%%%%%%%%%%%%%%%%%%%%%%%%%%%%%%%%%%%%%
\setcounter{propcounter}{0}

\begin{journal}
\section{A Hierarchy of DNA-related \theta-transducer Properties}
\label{sec:dna-properties}

In \cite{kari2002codes,HuKaKo:2003,KKLW:2003} the authors consider numerous properties of languages inspired by reliability issues in DNA computing.
Let $\theta$ be defined over $\As$ and assume that $\theta^2 = \id$ since in the DNA setting $\theta=\dna$ is an involution.
The relationships between some of the defined $3$-independent DNA-related properties are displayed in Fig.~\ref{fig:dna:properties}.
All properties have in common that they forbid certain ``constellations'' of words.
Consider a language $L\sse \Ap$ and two words $uwv,\theta(xwy)\in\Ap$ with $w \neq \e$ as shown in the top property in Fig.~\ref{fig:dna:properties}.
The same notation can be employed for all properties in the figure, where some properties require that $x$, $y$, $u$, or $v$ are empty, \eg for $x=y=\e$ we obtain the $\theta$-compliant property.
In the case of \theta-nonoverlapping all of $x,y,u,v$ are empty and
\begin{dnaprop}
\label{Prop:nol}
a language $L$ is \emph{$\theta$-nonoverlapping} if for all $w\in \Ap$, we have $w\notin L$ or $\theta(w)\notin L$.
This is equivalent to require that $L\cap\theta(L) = \es$.
\end{dnaprop}

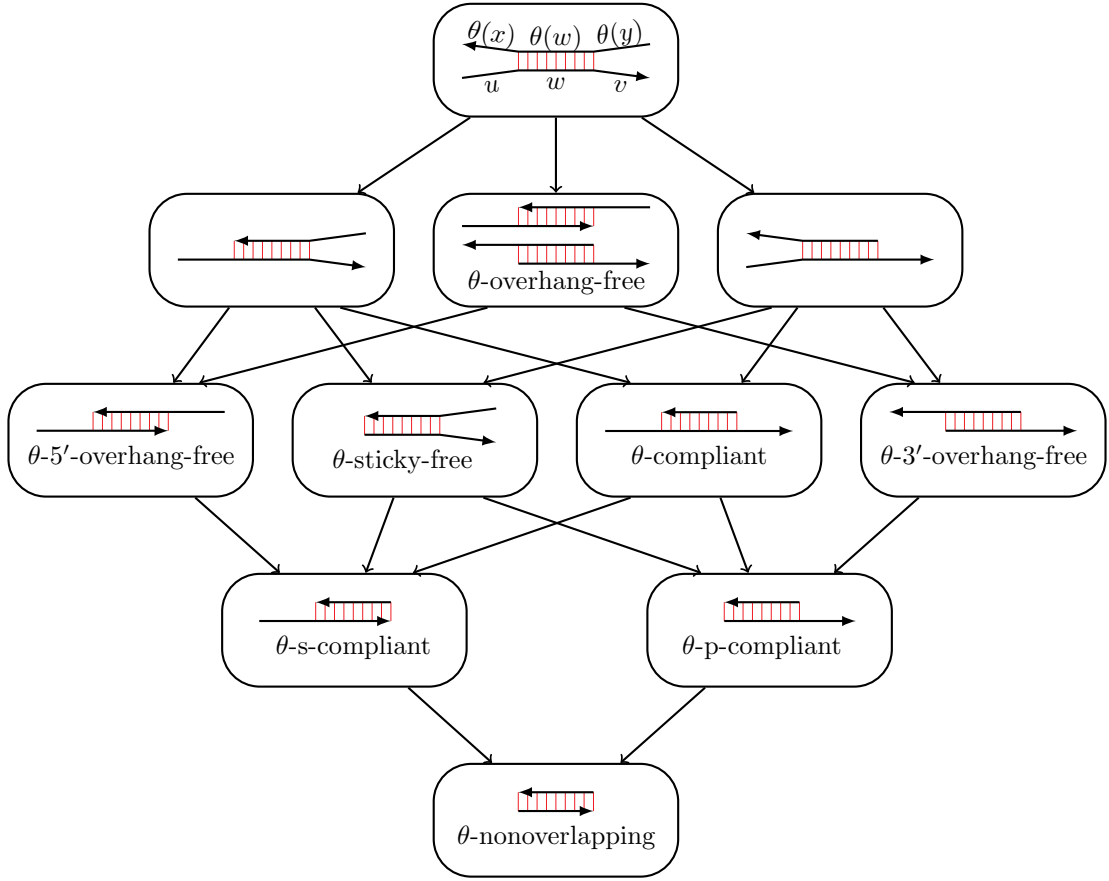
\begin{figure}[t!]
\centering
\begin{tikzpicture}[font=\footnotesize,node distance=1cm and .5cm,
		outer/.style={draw,thick,inner sep=2pt,rounded corners=.5cm,
			minimum width=3.25cm,minimum height=1.5cm},
		inner/.style={inner sep=0pt,anchor=center,
			minimum width=0pt,minimum height=0pt},
		bond/.style={Red,ultra thin}]
	\def\HALF{3.75cm/2}
	\newcommand{\BOND}{
		\foreach \x in {-.5,-.375,...,.5}
			\draw [bond] (\x,0) -- +(0,-.25);}
	\node (kcode) [outer]
	{\begin{tikzpicture}[rounded corners=0pt,inner,sloped,
			text height=1.5ex,text depth=.5ex]
		\BOND
		\draw [latex-,thick] (-1.25,.1) -- node [above] {$\theta(x)$} (-.5,0)
			-- node [above] {$\theta(w)$} (.5,0) -- node [above] {$\theta(y)$} (1.25,.1);
		\draw [-latex,thick] (-1.25,-.35) -- node [below] {$u$} (-.5,-.25)
			-- node [below] {$w$} (.5,-.25) -- node [below] {$v$}(1.25,-.35);
%		\draw [decorate,decoration={brace}] (-.5,.1) --
%			node [above=3pt,inner] {$k$} (.5,.1);
%		\node [inner] at (0,-.6) {$\theta$-$k$-code};
	\end{tikzpicture}};
	\node (OF) [outer,below=of kcode]
	{\begin{tikzpicture}[rounded corners=0pt]
		\BOND
		\draw [latex-,thick] (-1.25,0) -- (.5,0);
		\draw [-latex,thick] (-.5,-.25) -- (1.25,-.25);
		\begin{scope}[yshift=.5cm]
			\BOND
			\draw [latex-,thick] (-.5,0) -- (1.25,0);
			\draw [-latex,thick] (-1.25,-.25) -- (.5,-.25);
		\end{scope}
		\node [inner] at (0,-.5) {$\theta$-overhang-free};
	\end{tikzpicture}};
	\node (OFL) [outer,left=of OF]
	{\begin{tikzpicture}[rounded corners=0pt]
		\BOND
		\draw [latex-,thick] (-.5,0) -- (.5,0) -- (1.25,.1);
		\draw [-latex,thick] (-1.25,-.25) -- (.5,-.25) -- (1.25,-.35);
	\end{tikzpicture}};
	\node (OFR) [outer,right=of OF]
	{\begin{tikzpicture}[rounded corners=0pt]
		\BOND
		\draw [latex-,thick] (-1.25,.1) -- (-.5,0) -- (.5,0);
		\draw [-latex,thick] (-1.25,-.35) -- (-.5,-.25) -- (1.25,-.25);
	\end{tikzpicture}};
	\node (5OF) [outer,below=of OFL,xshift=-\HALF]
	{\begin{tikzpicture}[rounded corners=0pt]
		\BOND
		\draw [latex-,thick] (-.5,0) -- (.5,0) -- (1.25,0);
		\draw [-latex,thick] (-1.25,-.25) -- (-.5,-.25) -- (.5,-.25);
		\node [inner] at (0,-.6) {$\theta$-$5'$-overhang-free};
	\end{tikzpicture}};
	\node (SF) [outer,right=of 5OF]
	{\begin{tikzpicture}[rounded corners=0pt,anchor=center]
		\BOND
		\draw [latex-,thick] (-.5,0) -- (.5,0) -- (1.25,.1);
		\draw [-latex,thick] (-.5,-.25) -- (.5,-.25) -- (1.25,-.35);
		\node [inner] at (0,-.6) {$\theta$-sticky-free};
	\end{tikzpicture}};
	\node (comp) [outer,right=of SF]
	{\begin{tikzpicture}[rounded corners=0pt,anchor=center]
		\BOND
		\draw [latex-,thick] (-.5,0) -- (.5,0);
		\draw [-latex,thick] (-1.25,-.25) -- (1.25,-.25);
		\node [inner] at (0,-.6) {$\theta$-compliant};
	\end{tikzpicture}};
	\node (3OF) [outer,right=of comp]
	{\begin{tikzpicture}[rounded corners=0pt,anchor=center]
		\BOND
		\draw [latex-,thick] (-1.25,0) -- (.5,0);
		\draw [-latex,thick] (-.5,-.25) -- (1.25,-.25);
		\node [inner] at (0,-.6) {$\theta$-$3'$-overhang-free};
	\end{tikzpicture}};
	\node (SC) [outer,below=of SF,xshift=-\HALF/2]
	{\begin{tikzpicture}[rounded corners=0pt]
		\BOND
		\draw [latex-,thick] (-.5,0) -- (.5,0);
		\draw [-latex,thick] (-1.25,-.25) -- (.5,-.25);
		\node [inner] at (0,-.6) {$\theta$-s-compliant};
	\end{tikzpicture}};
	\node (PC) [outer,below=of comp,xshift=\HALF/2]
	{\begin{tikzpicture}[rounded corners=0pt]
		\BOND
		\draw [latex-,thick] (-.5,0) -- (.5,0);
		\draw [-latex,thick] (-.5,-.25) -- (1.25,-.25);
		\node [inner] at (0,-.6) {$\theta$-p-compliant};
	\end{tikzpicture}};
	\node (NO) [outer,below=of SC,xshift=\HALF*1.5]
	{\begin{tikzpicture}[rounded corners=0pt]
		\BOND
		\draw [latex-,thick] (-.5,0) -- (.5,0);
		\draw [-latex,thick] (-.5,-.25) -- (.5,-.25);
		\node [inner] at (0,-.6) {$\theta$-nonoverlapping};
	\end{tikzpicture}};

	\path [thick,->]
		(kcode) edge (OF) edge (OFL) edge (OFR)
		(OF.-40) edge (3OF.140)
		(OF.-140) edge (5OF.40)
		(OFL) edge (5OF) edge (SF)
		(OFL.-40) edge (comp.140)
		(OFR) edge (3OF) edge (comp)
		(OFR.-140) edge (SF.40)
		(5OF) edge (SC)
		(SF) edge (SC)
		(SF.-40) edge (PC.140)
		(comp) edge (PC)
		(comp.-140) edge (SC.40)
		(3OF) edge (PC)
		(SC) edge (NO)
		(PC) edge (NO);

\end{tikzpicture}
\caption{Correlation of various $3$-independent DNA language properties.
For each property the forbidden constellation of words (or single strands) is depicted.
Words are represented as arrows such that the first letter (the $5'$-end) is the blunt end of the arrow and last letter (the $3'$-end) is the arrow tip.
Red, vertical lines represent bonding between \theta-complementary parts of the two words.}
\label{fig:dna:properties}
\end{figure}

For all properties, except \theta-nonoverlapping, the language $L$ has property $P$, if  $uwv\in L$ and $\theta(xw y)\in L$ implies that $uvxy=\e$.
For example,
\begin{dnaprop}
\label{Prop:compliant}
a language $L$ is \emph{$\theta$-compliant} if for all
$w\in\Ap$ and $u,v\in\As$, we have $uwv,\theta(w)\in L \implies uv=\e$; and
\end{dnaprop}
\begin{dnaprop}
\label{Prop:5OF}
a language $L$ is \emph{$\theta$-$5'$-overhang-free} if for all
$w\in\Ap$ and $u,y\in\As$, we have $uw,\theta(wy)\in L \implies uy=\e$.
\end{dnaprop}

Previous papers considered the \emph{strict version} only for some of the properties.
Here, we generalize the concept of strict properties such that if $uwv\in L$ and $\theta(xwy)\in L$, then $L$ does not satisfy the strict property $P^\strict$ (even if $uvxy = \e$).
For example,
\begin{dnaprop}
\label{Prop:Scompliant}
a language $L$ is \emph{strictly $\theta$-compliant} if for all
$w\in\Ap$ and for all $u,v\in\As$, we have $uwv\notin L$ or $\theta(w)\notin L$; and
\end{dnaprop}
\begin{dnaprop}
\label{Prop:S5OF}
a language $L$ is \emph{strictly $\theta$-$5'$-overhang-free} if for all
$w\in\Ap$ and $u,y\in\As$, we have $uw\notin L$ or $\theta(wy)\notin L$.
\end{dnaprop}
Note that $\theta$-nonoverlapping is actually a strict property while its ``normal version'' would be the property that is trivially satisfied by every language in \Ap.

Furthermore, we introduce the \emph{weak version} of a property which follows the concept of classic code properties like the (weakly) overlap-free property where it is allowed for a word to overlap with itself, but not with another word:
for a language $L$ which satisfied the weak property $P^\weak$, if the words $uwv$ and $\theta(xwy)$ belong to $L$, then $uvxy=\e$ or $uwv = \theta(xwy)$.
For example,
\begin{dnaprop}
\label{Prop:W5OF}
a language $L$ is \emph{weakly $\theta$-$5'$-overhang-free} if for all
$w\in\Ap$ and $u,y\in\As$, we have $uw,\theta(wy)\in L$ implies $uy=\e$ or $uw=\theta(wy)$.
\end{dnaprop}
Note that for some properties, like $\theta$-compliant, the weak property $P^\weak$ coincides with the (normal) property $P$.

If a language $L$ satisfies the strict property $P^\strict$, then it also satisfies the corresponding (normal) property $P$; and if $L$ satisfies the (normal) property $P$, then it also satisfies the corresponding weak property $P^\weak$.
Furthermore, there is a normal, strict, and weak hierarchy of properties which is shown in Fig.~\ref{fig:dna:properties}, where \theta-nonoverlapping only exists in the strict hierarchy.
For all three hierarchies an arrow $P^\wildx\to Q^\wildx$ (for $\wildx\in\set{\e,\strict,\weak}$) between two properties $P^\wildx$ and $Q^\wildx$ means that if a language $L$ satisfies property $P^\wildx$, then it also satisfies property $Q^\wildx$.

Let us discuss how these properties can be described as \theta-transducer properties.
The type of the property (\WP or \SP) and the type of the transducer (unrestricted, $\theta$-input-altering, $\theta$-input-preserving) is important when it comes to the complexity of the satisfaction problem and the decidability of the maximality problem; see Table~\ref{tab:un-decidability}.
Firstly, observe that $L$ is $\theta$-nonoverlapping if $L$ satisfies the \theta-transducer property \Stt[\trt_\id] where $\trt_\id$ is a transducer realizing the identity relation.
Since any strict property, including $\theta$-nonoverlapping, is not satisfied by a singleton language $\set{w}$ that consists of one \theta-palindrome $w = \theta(w)$, strict properties cannot be described as \SPs by a \theta-input-altering transducer or as \WPs, according to Remark~\ref{rem:one-word}.

Figure~\ref{trans:dna} shows two families of transducers which are capable of describing any of the DNA-related properties that we introduced in this section.
Depending on whether or not $u$ (\resp $v,x,y$) is empty one has to omit a set of edges in each transducer.
The \SPs $\Stt[\trt_{\strict}]$ describe the strict properties, the \SPs $\Stt[\trt_{\weak}]$ describe the normal properties, and the \WPs $\Wtt[\trt_{\weak}]$ describe the weak properties.
If we omit red and orange edges (\ie $xy=\e$), then $\trt_{\weak}$ is \theta-input-altering because the input word is strictly longer than the output word.
Therefore, $\Stt[\trt_{\weak}] = \Wtt[\trt_{\weak}]$, \ie the normal property coincides with the corresponding weak property.
The case when all blue and green edges are omitted is symmetric when input and output swap roles.
We demonstrate this construction in Examples~\ref{ex:compliant} and~\ref{ex:5OF}.

\begin{figure}[ht!]
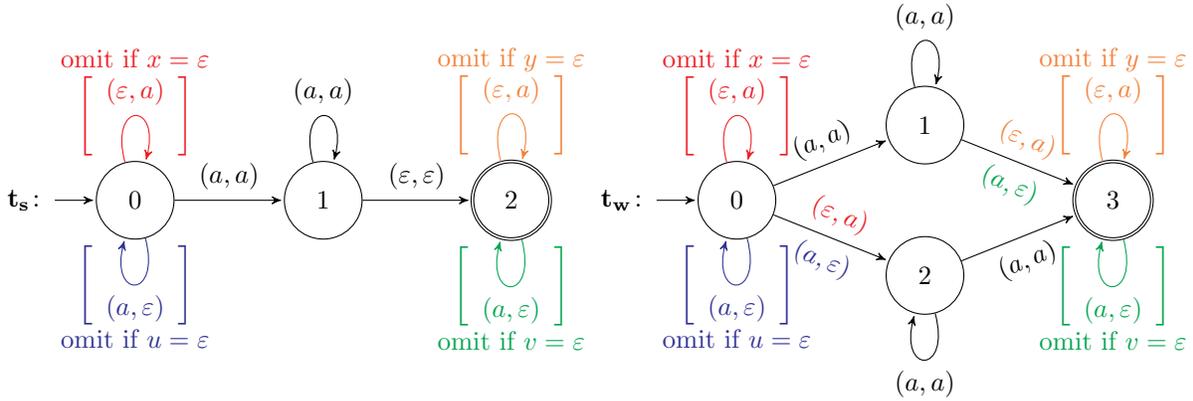

\begin{transducer}[node distance=1cm and 2.5cm]
	\node [state,initial] (q0) {$0$};
	\node [node distance=1cm,left=of q0,anchor=east] {$\trt_{\strict}\colon$};
	\node [state,right=of q0] (q1) {$1$};
	\node [state,accepting,right=of q1] (q2) {$2$};

	\path
		(q0) edge [loop below,Blue] node [below] {$(a,\e)$}
			node [anchor=center] {$\Biggl[\phantom{MMM}\Biggr]$}
			node [below=2.5ex] {omit if $u=\e$} ()	
		(q0) edge [loop above,Red] node [above] {$(\e,a)$}
			node [anchor=center] {$\Biggl[\phantom{MMM}\Biggr]$}
			node [above=2.5ex,text depth=.25ex] {omit if $x=\e$} ()	
		(q0) edge node [above] {$(a,a)$} (q1)
		(q1) edge [loop above] node [above] {$(a,a)$} ()	
		(q1) edge node [above] {$(\e,\e)$} (q2)
		(q2) edge [loop below,Green] node [below] {$(a,\e)$}
			node [anchor=center] {$\Biggl[\phantom{MMM}\Biggr]$}
			node [below=2.5ex] {omit if $v=\e$} ()	
		(q2) edge [loop above,Orange] node [above] {$(\e,a)$}
			node [anchor=center] {$\Biggl[\phantom{MMM}\Biggr]$}
			node [above=2.5ex,text depth=.25ex] {omit if $y=\e$} ();

	\begin{scope}[xshift=8cm]
	\node [state,initial] (p0) {$0$};
	\node [node distance=1cm,left=of p0,anchor=east] {$\trt_{\weak}\colon$};
	\node [state,above right=of p0] (p1) {$1$};
	\node [state,below right=of p0] (p2) {$2$};
	\node [state,accepting,below right=of p1] (p3) {$3$};

	\path [sloped] (p0) edge [loop below,Blue] node [below] {$(a,\e)$}
			node [anchor=center] {$\Biggl[\phantom{MMM}\Biggr]$}
			node [below=2.5ex] {omit if $u=\e$} ()	
		(p0) edge [loop above,Red] node [above] {$(\e,a)$}
			node [anchor=center] {$\Biggl[\phantom{MMM}\Biggr]$}
			node [above=2.5ex,text depth=.25ex] {omit if $x=\e$} ()	
		(p0) edge node [above] {$(a,a)$} (p1)
		(p1) edge [loop above] node [above] {$(a,a)$} ()	
		(p1) edge node [above,Orange] {$(\e,a)$}
			node [below,Green] {$(a,\e)$} (p3)
		(p0) edge node [above,Red] {$(\e,a)$}
			node [below,Blue] {$(a,\e)$} (p2)
		(p2) edge [loop below] node [below] {$(a,a)$} ()	
		(p2) edge node [below] {$(a,a)$} (p3)
		(p3) edge [loop below,Green] node [below] {$(a,\e)$}
			node [anchor=center] {$\Biggl[\phantom{MMM}\Biggr]$}
			node [below=2.5ex] {omit if $v=\e$} ()	
		(p3) edge [loop above,Orange] node [above] {$(\e,a)$}
			node [anchor=center] {$\Biggl[\phantom{MMM}\Biggr]$}
			node [above=2.5ex,text depth=.25ex] {omit if $y=\e$} ();
	\end{scope}
\end{transducer}
\caption{The family of transducers which describes all properties shown in Fig.~\ref{fig:dna:properties}.
Each of the two transducer families describes 16 different transducers: We can either omit or include each of the red, orange, blue and green edges.
These edges are omitted depending on the property that is described, for example, omit all red edges if $x=\e$ in Fig.~\ref{fig:dna:properties}.}
\label{trans:dna}
\end{figure}

\begin{example}\label{ex:compliant}
Let $\trt_\strict^C$ and $\trt_\weak^C$ be the two transducers that are obtained by omitting all red and orange edges in $\trt_\strict$ and $\trt_\weak$ (Fig.~\ref{trans:dna}), respectively.
Then $\Stt[\trt_{\strict}^C]$ is the strict \theta-compliant property, whereas $\Stt[\trt_{\weak}^C]$ is the (normal) $\theta$-compliant property.
Since $\trt_\weak^C$ is $\theta$-input-altering, $\Stt[\trt_{\weak}^C]$ is equal to $\Wtt[\trt_\weak^C]$ and the properties \theta-compliant and weak \theta-compliant coincide.
\end{example}

\begin{example}\label{ex:5OF}
Let $\trt_\strict^{5OF}$ and $\trt_\weak^{5OF}$ be the two transducers that are obtained by omitting all red and green edges in $\trt_\strict$ and $\trt_\weak$ (Fig.~\ref{trans:dna}), respectively.
Then $\Stt[\trt_{\strict}^{5OF}]$ is the strict \theta-$5'$-overhang-free property, $\Stt[\trt_{\weak}^{5OF}]$ is the (normal) \theta-$5'$-overhang-free property, and $\Wtt[\trt_{\weak}^{5OF}]$ is the weak \theta-$5'$-overhang-free property.

Observe that the word $z = \bA\bA\bC\bG$ can have a \theta-$5'$-overhang with itself (as $x = \bA\bA$, $w = \theta(w)= \bC\bG$, and $y = \bT\bT$).
As expected, $\trt_\weak^{5OF}$ does accept the word pair $(\bA\bA\bC\bG,\bC\bG\bT\bT)$ and, therefore, the singleton language $\set{z}$ does not satisfy the (normal) \theta-$5'$-overhang-free property $\Stt[\trt_{\weak}^{5OF}]$, however, $\set{z}$ does satisfy the weak \theta-$5'$-overhang-free property $\Wtt[\trt_{\weak}^{5OF}]$.
\end{example}

Lastly, note that the (strict, weak) \theta-overhang-free property is different from the other properties in Fig.~\ref{fig:dna:properties} in so far that it forbids two word constellations: \theta-$5'$-overhangs and \theta-$3'$-overhangs.
This property can be described by a transducer which contains two components, where one component covers the \theta-$5'$-overhangs and the other component covers the \theta-$3'$-overhangs.

\end{journal}

%%%%%%%%%%%%%%%%%%%%%%%%%%%%%%%%%%%%%%%%%%%%%%%%%%%%%%%%%%%%%%%%%%%%%%%%%%%%%%%%
%%%%%%%%%%%%%%%%%%%%%%%%%%%%%%%%%%%%%%%%%%%%%%%%%%%%%%%%%%%%%%%%%%%%%%%%%%%%%%%%
%%%%%%%%%%%%%%%%%%%%%%%%%%%%%%%%%%%%%%%%%%%%%%%%%%%%%%%%%%%%%%
\section{Conclusions}
We have defined a transducer-based method for describing DNA code properties which is
strictly more expressive than the trajectory method. In doing so, the satisfaction question remains
efficiently decidable. The maximality question for some types of properties is decidable, but
it is undecidable for others. While some versions of the maximality question for
trajectory  properties are decidable, the case of any given pair of regular trajectories and
any given regular language is not addressed in~\cite{Dom:2007}, so we consider this to be
an interesting problem to solve.

The maximality questions are phrased in terms of any fixed antimorphic permutation. This
direction of generalizing decision questions is also applied to the classic Post Correspondence
Problem, where we demonstrate that it remains undecidable. A consequence of this is that
the question of whether a given transducer is $\theta$-input-altering is also
undecidable. It is interesting to note that if, instead of fixing $\theta$, we fix the transducer $\trt$
to be the identity, or the transducer defining the \SP $\cH$
(see Fig.~\ref{trans:pH} in Sect.~\ref{sec:expressiveness}), then the question
of whether or not
\jcmath{\theta(L)\cap \trt(L) = \es}
is decidable (given any regular language $L$ and antimorphic permutation $\theta$).

The topic of studying description methods for code properties requires further attention. One
important aim is the actual implementation of the algorithms, as it is already done for
several classic code properties \cite{FAdo,Laser}. An immediate plan is to incorporate in those implementations what we
know about DNA code properties. Another aim is to increase the expressive power of our
description methods.
The formal method of~\cite{Jurg:1999} is quite expressive,
using a certain type
of first order formulae to describe properties. It could perhaps be further
worked out in a way that some of these formulae can be mapped to transducers.
We also note that if the defining method is too expressive then
even the satisfaction problem could become undecidable; see for example the method of
multiple sets of trajectories in \cite{DomSal:2006}.

%%%%%%%%%%%%%%%%%%%%%%%%%%%%%%%%%%%%%%%%%%%%%%%%%%%%%%%%%%%%%%%%%%%%%%%%%%%%%%%%
%%%%%%%%%%%%%%%%%%%%%%%%%%%%%%%%%%%%%%%%%%%%%%%%%%%%%%%%%%%%%%%%%%%%%%%%%%%%%%%%


\begin{thebibliography}{10}
\providecommand{\url}[1]{\texttt{#1}}
\providecommand{\urlprefix}{URL }

\bibitem{AllMoh:2003}
Allauzen, C., Mohri, M.: Efficient algorithms for testing the twins property.
  Journal of Automata, Languages and Combinatorics  8(2),  117--144 (2003)

\bibitem{Baum:1996}
Baum, E.: {DNA} sequences useful for computation. In: 2nd DIMACS Workshop on
  {DNA}-based computers, pp. 122--127. Princeton University (1996)

\bibitem{Be:1979}
Berstel, J.: Transductions and Context-Free Languages. B.G. Teubner, Stuttgart
  (1979)

\bibitem{BePeRe:2009}
Berstel, J., Perrin, D., Reutenauer, C.: Codes and Automata. Cambridge
  University Press (2009)

\bibitem{Dom:2004}
Domaratzki, M.: Trajectory-based codes. Acta Informatica  40,  491--527 (2004)

\bibitem{Dom:2007}
Domaratzki, M.: Bond-free {DNA} language classes. Natural Computing  6,
  371--402 (2007)

\bibitem{DomSal:2006}
Domaratzki, M., Salomaa, K.: Codes defined by multiple sets of trajectories.
  Theoretical Computer Science  366,  182--193 (2006)

\bibitem{DudKon:2012}
Dudzinski, K., Konstantinidis, S.: Formal descriptions of code properties:
  decidability, complexity, implementation. IJFCS  23:1,  67--85 (2012)

\bibitem{FAdo}
FAdo: Tools for formal languages manipulation, {URL} address: \\
  \texttt{http://fado.dcc.fc.up.pt/} \hspace{2mm} Accessed in February, 2015

\bibitem{FaWaHu:2014}
Fan, C.M., Wang, J.T., Huang, C.C.: Some properties of involution binary
  relations. Acta Informatica  DOI 10.1007/s00236-014-0208-8 (2014)

\bibitem{GeMa:2012}
Genova, D., Mahalingam, K.: Generating {DNA} code words using forbidding and
  enforcing systems. In: Theory and Practice of Natural Computing, pp.
  376--393. LNCS 7505, Springer-Verlag (2012)

\bibitem{HopcroftUllman}
Hopcroft, J.E., Ullman, J.D.: Introduction to Automata Theory, Languages, and
  Computation. Addison-Wesley (1979)

\bibitem{HuKaKo:2003}
Hussini, S., Kari, L., Konstantinidis, S.: Coding properties of {DNA}
  languages. Theoretical Computer Science  290,  1557--1579 (2003)

\bibitem{JoKaMa:2008}
Jonoska, N., Kari, L., Mahalingam, K.: Involution solid and join codes.
  Fundamenta Informaticae  86,  127--142 (2008)

\bibitem{JoMaCh:2005}
Jonoska, N., Mahalingam, K., Chen, J.: Involution codes: with application to
  {DNA} coded languages. Natural Computing  4,  141--162 (2005)

\bibitem{Jurg:1999}
J{\"u}rgensen, H.: Syntactic monoids of codes. Acta Cybernetica  14,  117--133
  (1999)

\bibitem{Jurg:Konst:handbook}
J{\"u}rgensen, H., Konstantinidis, S.: Codes. In: Rozenberg and Salomaa
  \cite{FLhandbook}, pp. 511--607

\bibitem{kari2002codes}
Kari, L., Kitto, R., Thierrin, G.: Codes, involutions, and {DNA} encodings. In:
  Formal and Natural Computing, pp. 376--393. Springer (2002)

\bibitem{KKLW:2003}
Kari, L., Konstantinidis, S., Losseva, E., Wozniak, G.: Sticky-free and
  overhang-free {DNA} languages. Acta Informatica  40,  119--157 (2003)

\bibitem{KaKoSo:2005a}
Kari, L., Konstantinidis, S., Sos{\'i}k, P.: Bond-free languages:
  formalizations, maximality and construction methods. IJFCS  16,  1039--1070
  (2005)

\bibitem{KaKoSo:2005}
Kari, L., Konstantinidis, S., Sos{\'i}k, P.: On properties of bond-free {DNA}
  languages. Theoretical Computer Science  334,  131--159 (2005)

\bibitem{KaSo:2005}
Kari, L., Sos{\'i}k, P.: Aspects of shuffle and deletion on trajectories.
  Theoretical Computer Science  332,  47--61 (2005)

\bibitem{Kon:2002}
Konstantinidis, S.: Transducers and the properties of error-detection,
  error-correction and finite-delay decodability. Journal Of Universal Computer
  Science  8,  278--291 (2002)

\bibitem{Laser}
LaSer: Independent {LA}nguage {SER}ver, {URL} address: \\
  \texttt{http://laser.cs.smu.ca/independence/} \hspace{2mm} Accessed in
  February, 2015

\bibitem{MaFe:2004}
Mauri, G., Ferretti, C.: Word design for molecular computing: a survey. In: DNA
  9, pp. 37--47. LNCS 7505, Springer-Verlag (2004)

\bibitem{FLhandbook}
Rozenberg, G., Salomaa, A. (eds.): Handbook of Formal Languages, Vol. I.
  Springer-Verlag, Berlin (1997)

\bibitem{Shyr:book}
Shyr, H.: Free Monoids and Languages. Hon Min Book Company, Taichung, 2nd edn.
  (1991)

\bibitem{Shyr:Thierrin:relations}
Shyr, H., Thierrin, G.: Codes and binary relations. In: Malliavin, M.P. (ed.)
  S\'eminaire d'Alg\`ebre Paul Dubreil, Paris 1975--1976 (29\`eme Ann\'ee).
  Lecture Notes in Mathematics, vol. 586, pp. 180--188 (1977)

\bibitem{StMe1973}
Stockmeyer, L., Meyer, A.: Word problems requiring exponential time
  (preliminary report). In: Proceedings of the 5th annual ACM symposium on
  Theory of computing. pp. 1--9. ACM (1973)

\bibitem{Yu:handbook}
Yu, S.: Regular languages. In: Rozenberg and Salomaa  \cite{FLhandbook}, pp.
  41--110

\end{thebibliography}
\end{document}